\documentclass[sigconf]{acmart}

\AtBeginDocument{%
  }

\usepackage{tikz}
\usepackage{amsmath}
\usepackage{xurl}
\usepackage{pifont}
\usepackage{colortbl}        %
\usepackage{subcaption}
\usepackage{threeparttable}
\usepackage{booktabs}
\usepackage{makecell}
\usepackage{graphicx} %
\usepackage{stmaryrd}
\usepackage{amsthm} %
\usepackage{enumitem}
\usepackage[normalem]{ulem}
\usepackage{xparse}
\usepackage{xspace}
\usepackage{url}

\NewDocumentCommand{\DeepUL}{O{0.85ex} O{0.3ex} m}{%
  {\begingroup
   \renewcommand\ULdepth{#1}%
   \renewcommand\ULthickness{#2}%
   \uline{#3}%
   \endgroup}%
}
\newcommand{\bnumS}[3]{%
  \tikz[baseline=(char.base)]{
    \node[shape=circle, fill=black, inner sep=#1] (char)
      {\color{white}#2\bfseries #3};
  }%
}
\newcommand{\NA}{\textemdash}
\newcommand{\paragraphNew}[1]{\noindent\textbf{#1.}\xspace}
\newcommand{\cmark}{\textcolor{green!80!black}{\ding{51}}} %
\newcommand{\xmark}{\textcolor{red!80!black}{\ding{55}}}   %
\newtheorem{theorem}{Theorem}[section]

\newtheoremstyle{idealfuncstyle}%
  {}{}%
  {\itshape}%
  {}%
  {\bfseries}%
  {:}%
  { }%
  {}%

\theoremstyle{idealfuncstyle}

\newtheorem{lemma}[theorem]{Lemma}

\theoremstyle{definition}
\newtheorem{definition}[theorem]{Definition}

\newcommand{\ProtocolName}{\textmd{\textsc{\textsf{SEEK}}}}

\begin{document}
\date{}

\title{\ProtocolName{}: \underline{S}ecure and \underline{E}fficient \underline{E}ncrypted \underline{K}eyword Search \\For Privacy-Preserving Messaging Protocols}

\author{Soumyadyuti Ghosh}
\orcid{0000-0002-2015-4192}
\affiliation{%
\department{Center for Cyber Security}
  \institution{New York University Abu Dhabi}
  \city{Abu Dhabi}
  \country{United Arab Emirates}
  }
\email{sg8466@nyu.edu}

\author{Michail Maniatakos}
\orcid{0000-0001-6899-0651}
\affiliation{%
\department{Center for Cyber Security}
  \institution{New York University Abu Dhabi}
  \city{Abu Dhabi}
  \country{United Arab Emirates}
  }
\email{michail.maniatakos@nyu.edu}

\renewcommand{\shorttitle}{\ProtocolName{}: \underline{S}ecure and \underline{E}fficient \underline{E}ncrypted \underline{K}eyword Search For Privacy-Preserving Messaging Protocols}

\begin{abstract}
Encrypted communication protects sensitive user data but can facilitate harmful or unlawful exchanges, creating a trade-off between detecting dangerous messages and preserving end-user privacy. To address this, we propose \ProtocolName{}, a practical and efficient encrypted keyword-search protocol for privacy-preserving messaging that combines homomorphic encryption with secure two-party computation (2PC). \ProtocolName{} first partitions messages into ciphertext fragments with the minimum sufficient overlap, then homomorphically correlates them using encrypted keyword trapdoors. For long messages, this design can reduce sender-side encryption and upload overhead by up to two orders of magnitude over state-of-the-art baselines. It supports ASCII case-insensitive matching with one fixed-size encrypted trapdoor and one homomorphic multiplication per fragment, yielding up to 5.47$\times$ faster correlation computation than the strongest fragmentation-based baselines. \ProtocolName{} then invokes 2PC-based selected decoding, blinded zero testing, and secure aggregation, revealing only the keyword presence-or-absence bit while hiding the keyword, its length, message contents, match counts, and locations. \ProtocolName{} achieves $100\%$ accuracy under case variations that result in exact-matching failures, without requiring additional trapdoors or online communication. We further realize \ProtocolName{} as an end-to-end web and cross-platform mobile application. Prototype evaluation on a weekly messaging history yields an online computation time of $1.92$~s per search, demonstrating the practical feasibility and efficiency of \ProtocolName{}.
\end{abstract}

\begin{CCSXML}
<ccs2012>
   <concept>
       <concept_id>10002978.10002991.10002995</concept_id>
       <concept_desc>Security and privacy~Privacy-preserving protocols</concept_desc>
       <concept_significance>500</concept_significance>
       </concept>
 </ccs2012>
\end{CCSXML}

\ccsdesc[500]{Security and privacy~Privacy-preserving protocols}

\keywords{Privacy-preserving Keyword Search, E2EE Messaging,  Homomorphic Encryption, Secure Two-Party Computation (2PC).}

\maketitle

\section{Introduction}
\label{sec:Introduction}

In the modern digital world, users regularly exchange highly sensitive personal, corporate, and governmental information through cloud-serviced messaging platforms that 
store private conversations beyond the direct control of their users. Without 
cryptographic protection, these communications can be exposed to
untrustworthy cloud service providers ($\mathsf{CSP}$s), 
compromising user privacy at scale. Consequently, End-to-End Encryption (E2EE) has become a de facto requirement for secure messaging,
ensuring that only the sender and recipient can read 
messages, while $\mathsf{CSP}$ handles only encrypted data, 
thus substantially reducing 
associated privacy concerns.

\noindent \ding{110} \textbf{\DeepUL{Motivation.}} While preserving the confidentiality of private communications remains essential, lawfully authorized investigations increasingly require targeted searches of retained messages for evidence of misconduct or crime.

\textit{\textbf{{Searchability Paradox.}}} Retrospective access to communications can have substantial investigative value in serious crime cases. Infiltrations of encrypted criminal platforms such as \textit{EncroChat} and Operation Trojan Shield enabled authorities to analyze messages related to drugs, weapons, money laundering, violence, and corruption, leading to large-scale arrests and asset seizures \cite{N18,N19,N20}. \textit{EncroChat} litigation also produced legal rulings, with UK courts admitting evidence obtained under a targeted equipment-interference warrant and the Court of Justice of the European Union addressing the cross-border transmission and use of such evidence under European Investigation Orders~\cite{N33,N34}. Legal process can likewise compel disclosure when a provider retains readable message content, as occurred in the Nebraska Facebook/Messenger investigation~\cite{N39}. However, approaches built on provider-side plaintext access, broad endpoint-side scanning, or reusable exceptional-access capabilities threaten to transform targeted investigations into scalable surveillance infrastructure, creating substantial risks to privacy, security, accountability, mission creep, and civil liberties \cite{N30,N31,N32}. 

\textit{\textbf{Access-Control Paradox.}} The tension surrounding E2EE has further intensified due to ongoing governmental and regulatory initiatives across jurisdictions aimed at weakening, bypassing, or reversing robust encryption standards. In the United Kingdom, authorities reportedly sought access to Apple users' encrypted cloud data, prompting Apple to withdraw \textit{Advanced Data Protection} for new UK users~\cite{N21,N22}. European Union negotiations over the proposed CSA Regulation continue to consider whether detection orders should extend to E2EE interpersonal communications~\cite{N23,N24}, while Australia's \textit{Assistance and Access} framework enables 
agencies to seek provider assistance when encryption or other technical barriers impede lawful investigations~\cite{N26,N27}. The removal of opt-in  E2EE for private messages by Instagram, along with reports regarding Apple's operations in China, further demonstrates how the availability of encryption, data storage practices, and infrastructure control can shift in response to safety, content moderation, law enforcement, or local compliance requirements~\cite{N28,N29}. Conversely, providers and E2EE services have resisted demands to weaken or bypass encryption, as illustrated by the Facebook Messenger wiretap dispute and WhatsApp's traceability challenge in India~\cite{N41,N42,N25}. These divergent responses demonstrate why the technical interface matters: major providers typically require valid legal process before disclosing stored content, transparency reporting reveals the scale of such requests, and data-minimization measures limit what providers retain and can disclose~\cite{N37,N38,N40}.

\textit{\textbf{Cryptographic Middle Ground.}} Together, these competing imperatives motivate the central design question: \textit{In legally authorized investigations, should access require bulk disclosure of decrypted messages, or be limited to scoped cryptographic queries over encrypted data?} We pursue the latter approach by constraining authorized access through the technical interface rather than 
goodwill or 
general plaintext disclosure. Our aim is therefore to offer a narrower cryptographic middle ground: \textit{a trade-off that can satisfy legitimate law-enforcement needs for targeted message probing while cryptographically preventing mass-scale surveillance of the encrypted corpus.}

\begin{figure}[!t]
    \centering
    \includegraphics[width=\columnwidth]{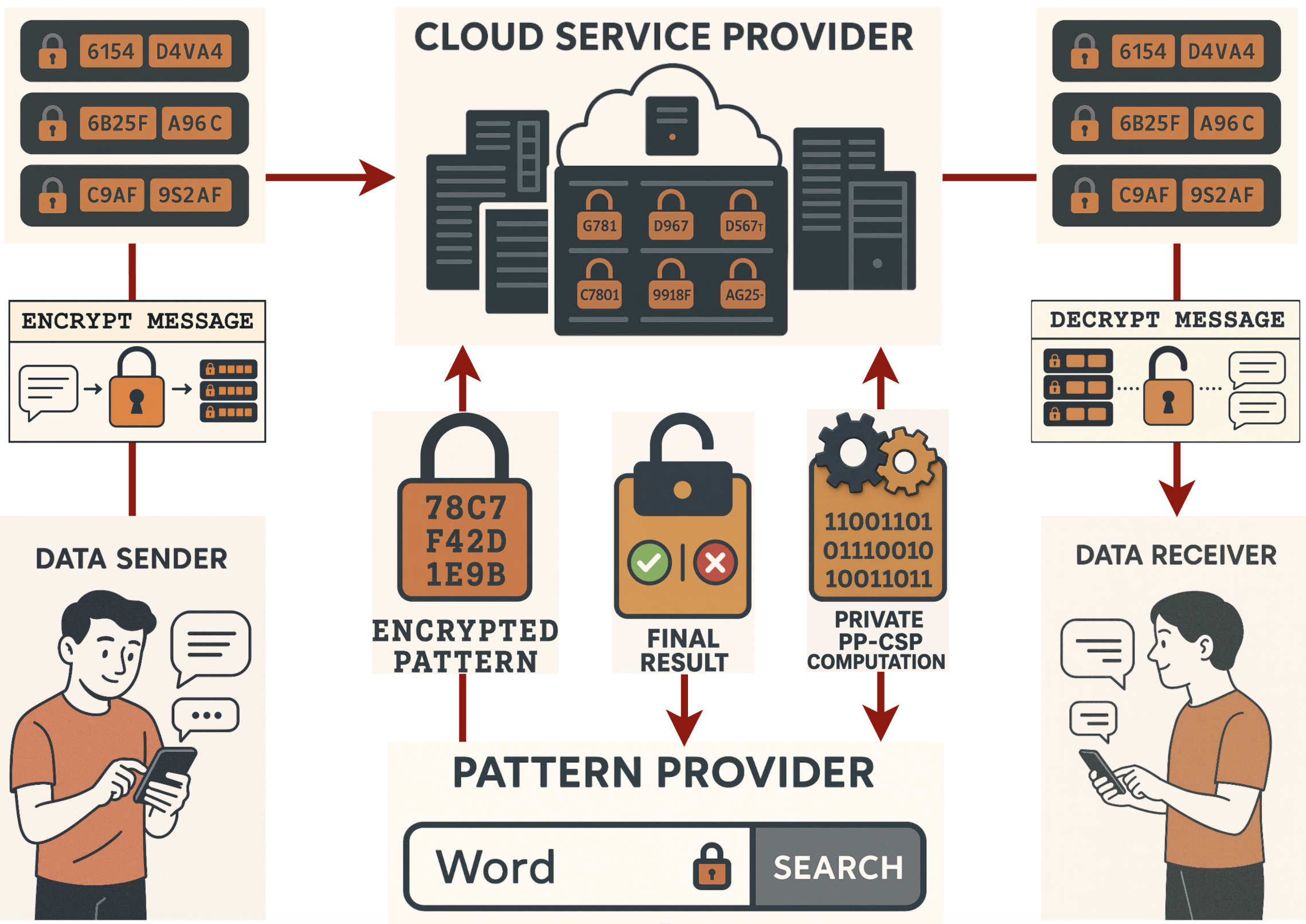}
    \captionsetup{font=small,justification=centering}
    \caption{Fundamental \ProtocolName{} Protocol Architecture.}
    \label{fig:flowdiagram}
\end{figure}

\noindent \ding{42} \textbf{\DeepUL{Design Objectives.}} Realizing this middle ground requires a search interface that does not depend on endpoint cooperation. Our design objectives differ from traditional victim-initiated abuse-reporting mechanisms for E2EE messaging and message-franking-based protocols, in which a cooperative recipient reports an abusive message together with its cryptographic evidence \cite{GrubbsLR17,issa2022hecate}. In contrast, the proposed setting considers users who may intentionally exchange harmful or unlawful messages and fall within the scope of investigation, so neither can be expected to initiate a voluntary abuse report. In a realistic deployment, users communicate via E2EE while $\mathsf{CSP}$ stores the encrypted messages. An authorized pattern provider ($\mathsf{PP}$) encodes a harmful or relevant keyword as an encrypted trapdoor and submits it to the $\mathsf{CSP}$, which executes a privacy-preserving search protocol over the encrypted corpus. The $\mathsf{CSP}$ learns neither the plaintext messages, queried keyword, nor the search outcome, while $\mathsf{PP}$ learns only the presence or absence of the keyword. Fig.~\ref{fig:flowdiagram} illustrates the foundational \ProtocolName{} architecture and its associated workflow.

\subsection{Related Work}
\label{RelatedWork}
Encrypted search over private data spans several domains. Below, we briefly describe each approach and its limitations.

\noindent \bnumS{1.6pt}{\normalsize}{1} \textbf{\DeepUL{Indexed and Tokenized Search.}} Index-based searchable encryption (SE) enables conjunctive queries, updates, access control, multi-user sharing, and forward or backward privacy over encrypted indexes. However, it typically matches only indexed keywords and returns document identifiers~\cite{11,22,24,25,23,20}. Consequently, retrospective arbitrary-substring search requires relevant substrings to be indexed or represented using specialized structures. Encrypted suffix-tree constructions facilitate such searches with linear-size ciphertexts, but they require three communication rounds, reveal prescribed access patterns, and return occurrence indices~\cite{1:CS15}. 
Moreover, practical query-reconstruction attacks exploit the scheme-specific leakage profiles of prominent substring-SSE schemes~\cite{gui2024leakage}.
Encrypted deep packet inspection (DPI) systems such as \textit{BlindBox} and \textit{BlindIDS} tokenize network traffic to enable encrypted equality matching~\cite{1:SLPR15,1:CDK+17}. Supporting variable-length patterns necessitates either scheme-specific tokenization or query decomposition. In contrast, SEST enables arbitrary-length matching using shiftable trapdoors and bilinear-pairing tests, but discloses match locations~\cite{1:DFOS18}. Consequently, retrospective arbitrary search with only a presence bit requires modified search representations, private aggregation, or result obfuscation.

\noindent\bnumS{1.6pt}{\normalsize}{2} \textbf{\DeepUL{Fragmentation-Based Encrypted Search.}}  A distinct line of research divides streams into fixed-size encrypted fragments and introduces overlap or auxiliary boundary instances to preserve cross-boundary matches~\cite{1:BCS21,1:BCC20,1:BCS23}. These approaches employ either bilinear-pairing constructions~\cite{1:BCC20,1:BCS21} or functional-encryption (FE) primitives~\cite{1:BCS23}. In the latter approach, fragmentation avoids position-dependent key material that would otherwise grow linearly with stream length, but the native functionality returns match positions rather than a hidden aggregate. Pairing-based matching also requires computationally costly bilinear-pairing evaluations across the tested instances. Pattern privacy remains limited because~\cite{1:BCS21} does not target pattern-hiding trapdoors, while~\cite{1:BCC20} protects patterns from the $\mathsf{CSP}$ only under a high min-entropy assumption. These limitations position homomorphic encryption (HE) as a complementary design choice for privacy-preserving pattern/keyword search in the encrypted domain.

\begin{table}[!t]
\centering
\setlength{\tabcolsep}{1.05pt}
\renewcommand{\arraystretch}{1.08}
\caption{Qualitative comparison of representative encrypted pattern/keyword-matching approaches with \ProtocolName{}.}
\label{tab:seek-compare-intro}
\begin{threeparttable}
\resizebox{\columnwidth}{!}{%
\begin{tabular}{@{}l@{\hspace{0pt}}cccccccccc@{}}
\toprule
\rowcolor{yellow!10}
\hspace{5mm}\textbf{Approach}
& \makecell{\hspace{1.4pt}\textbf{Any}\hspace{1.4pt}\\\textbf{KW}}
& \makecell{\hspace{1.4pt}\textbf{Any}\hspace{1.4pt}\\\textbf{PL}}
& \makecell{\hspace{1.4pt}\textbf{Single}\hspace{1.4pt}\\\textbf{TD}}
& \makecell{\hspace{1.4pt}\textbf{Dynamic}\hspace{1.4pt}\\\textbf{History}}
& \makecell{\textbf{HE}\\\textbf{Evaluation}}
& \makecell{\textbf{ASCII}\\\textbf{CI}}
& \makecell{\textbf{Exact}\\\textbf{match}}
& \makecell{\textbf{No}\\\textbf{ASP}}
& \makecell{\textbf{Hide}\\\textbf{PL}}
& \makecell{\textbf{Private}\\\textbf{presence}} \\
\midrule

\rowcolor{black!8}
\multicolumn{11}{@{}l}{\hspace{2pt}\textbf{Encrypted Search Paradigms}} \\
\rowcolor{yellow!7}
SE \cite{11,22,24,25}
& \xmark & \textit{part.} & \cmark
& \textit{part.} & \NA
& \xmark & \cmark
& \xmark & \xmark & \xmark \\
\rowcolor{yellow!7}
DPI \cite{1:SLPR15,1:CDK+17}
& \cmark & \xmark & \textit{part.}
& \textit{part.} & \NA
& \xmark & \cmark
& \textit{part.} & \xmark & \xmark \\
\rowcolor{yellow!7}
SEST \cite{1:DFOS18}
& \cmark & \cmark & \cmark
& \cmark & \NA
& \xmark & \cmark
& \xmark & \xmark & \xmark \\
\rowcolor{yellow!7}
Pairing/FE\cite{1:BCS21,1:BCC20,1:BCS23}
& \cmark & \cmark & \textit{dep.}
& \cmark & \NA
& \xmark & \cmark
& \textit{part.} & \xmark & \xmark \\

\addlinespace[1pt]
\rowcolor{black!8}
\multicolumn{11}{@{}l}{\hspace{2pt}\textbf{Homomorphic String Search Constructions}} \\
\rowcolor{yellow!7}
CipherMatch \cite{7}
& \cmark & \cmark & \xmark
& \xmark & Add-only
& \xmark & \cmark
& \cmark & \xmark & \xmark \\
\rowcolor{yellow!7}
RLWE Frag. \cite{1}
& \cmark & \cmark & $1/2$
& \cmark & $2\mathrm{CC}{+}2\mathrm{CP}$
& \cmark & \cmark
& \cmark & \xmark & \xmark \\
\rowcolor{yellow!7}
BGV Search \cite{bonte2020homomorphic}
& \cmark & \cmark & \cmark
& \textit{part.} & HomEQ
& \xmark & \textit{rand.}
& \cmark & \xmark & \xmark \\
\rowcolor{yellow!7}
TFHE Search \cite{narisada2026tfhe}
& \cmark & \cmark & \xmark
& \xmark & PBS/CMux
& \xmark & \cmark
& \cmark & \cmark\textsuperscript{a} & \xmark \\
\rowcolor{yellow!7}
CKKS Search \cite{secoaguirre2026text}
& \cmark & \cmark & \cmark
& \xmark & Poly.+rot.
& \xmark & \textit{approx.}
& \cmark & \xmark & \textit{part.}\textsuperscript{b} \\

\midrule
\rowcolor{blue!16}
\textbf{\ProtocolName{} (Section \ref{sec:seek-presence-protocol})}
& \cmark & \cmark & \cmark
& \cmark & $\mathbf{1CC}$ %
& \cmark & \cmark
& \cmark & \cmark & \cmark \\
\bottomrule
\end{tabular}%
}
\end{threeparttable}

\vspace{2pt}
\begin{minipage}{\columnwidth}
\footnotesize
\noindent\textbf{KW}: Keyword; \textbf{PL}: Pattern Length; \textbf{TD}: Trapdoor; \textbf{CI}: Case-Insensitive; \textbf{ASP}: CSP Access/Search Pattern Leakage; \textbf{CC/CP}: ct--ct/ct--pt Multiplication; \textbf{HomEQ}: Homomorphic Equality; \textbf{PBS}: Programmable Bootstrapping. \textbf{\textit{part.}/\textit{dep.}}: Partial/Model-Dependent. \\
\textbf{Any KW, Any PL, and Single TD} denote post-encryption query choice, length variation without corpus reprocessing, and one fixed-size encrypted trapdoor, respectively. \textbf{Dynamic history} means incremental encrypted ingestion with cross-boundary completeness.
\textbf{Private presence} means only one party (e.g., $\mathsf{PP}$) learns corpus-wide presence bit. \textbf{HE Evaluation} lists dominant query-dependent homomorphic core operations (e.g., correlation score). For~\cite{1}, $1/2$ is the exact/CI TD count and wildcard costs exclude pair formation.
\textbf{\textit{rand.}/\textit{approx.}} denote randomized/approximate equality. 
\\
\textbf{\textsuperscript{a}}Hidden within a public padded bound. \textbf{\textsuperscript{b}}Native only for one slot-bounded target. Longer histories require fragmentation and aggregation.
\end{minipage}
\end{table}

\noindent \bnumS{1.6pt}{\normalsize}{3} \textbf{\DeepUL{Homomorphic Matching.}} Prior HE-based constructions employ several distinct techniques. Unlike the pairing and FE-based schemes that generate overlapping boundary instances~\cite{1:BCS21,1:BCC20,1:BCS23}, the HE design of~\cite{1} uploads non-overlapping fragments and forms adjacent pairs online for cross-boundary matches. Its case-sensitive exact path uses one trapdoor ciphertext, whereas case-insensitive wildcard matching uses two. Excluding pair formation, this variant requires two ciphertext--ciphertext (ct--ct) multiplications, two ciphertext--plaintext (ct--pt) multiplications, and two additions or subtractions per evaluation. While the fragment-length bound in~\cite{1} leaves at least half of each uploaded ciphertext's packing width unused, the constant-depth BGV construction~\cite{bonte2020homomorphic} packs overlapping chunks across slots and applies randomized equality. However, its representation and evaluation process disclose the private query length, and matching remains pattern-length dependent through repeated equality circuits involving multiple multiplications, rotations, and Frobenius maps. Its randomized equality also has a one-sided false-positive probability that accumulates across tested positions. These two constructions further expose richer outputs to the decryptor, namely per-offset correlation scores and locations in~\cite{1} and occurrence locations and counts in~\cite{bonte2020homomorphic}. 

\texttt{CipherMatch}~\cite{7} instead reduces arithmetic cost by packing multiple bits per BFV coefficient and performing addition-only exact equality with shifted encrypted queries. This approach requires multiple query ciphertexts, and its query-dependent shifts can reveal the query length unless padded. Moreover, the TFHE-based search~\cite{narisada2026tfhe} performs binary search over an encrypted suffix array constructed from the complete plaintext, preventing direct support for independently encrypted, incrementally arriving messages. Its proposed encoding uses four LWE ciphertexts per character, while bootstrapping and encrypted lookups scale with the padded pattern length, increasing runtime and memory. CKKS-based comparison~\cite{secoaguirre2026text} instead examines every candidate alignment using rotations and approximate polynomial binarization. It explicitly reveals the pattern length, while near matches can cause false positives unless higher-degree, deeper circuits are used, increasing runtime and memory. Slot-bounded targets also require fragmentation and aggregation for longer histories. Except for the wildcard path of~\cite{1}, these schemes lack native case-insensitive matching, while corpus-wide presence-only release can require further query padding, boundary-aware fragmentation, cross-ciphertext aggregation, result obfuscation, or private aggregation. We retain~\cite{1} as the primary quantitative baseline and compare the remaining HE constructions qualitatively in Appendix~\ref{ComparisonWithHE}.

\smallskip
\noindent \ding{42} \textbf{\DeepUL{Our Contributions.}}  We present \ProtocolName{}, a privacy-preserving keyword-search protocol for practical E2EE messaging systems. Table~\ref{tab:seek-compare-intro} qualitatively compares prior encrypted search paradigms against \ProtocolName{} in our E2EE messaging system. We summarize  our contributions below:

\begin{enumerate}[leftmargin=*]

    \item We introduce a byte-aligned fragmentation strategy that combines \emph{minimal sufficient overlap} with full-width homomorphic packing, reducing the fragment count by up to two orders of magnitude relative to prior designs~\cite{1:BCC20,1} and thereby lowering the $\mathsf{CSP}$-side ciphertext footprint, encryption, and communication overheads.

    \item We design \ProtocolName{}, an ASCII case-insensitive keyword-search protocol using a \emph{single} fixed-size encrypted trapdoor (unlike~\cite{1,7}), and compute the complete within-fragment correlation vector using a \emph{single} query-dependent homomorphic multiplication per fragment (unlike~\cite{1}). To obtain the final search results, $\mathsf{PP}$ and $\mathsf{CSP}$ jointly perform privacy-preserving selected decoding, zero testing, and secure aggregation, revealing only the presence-or-absence bit to $\mathsf{PP}$ while $\mathsf{CSP}$ receives no result and neither party learns match counts, locations, or correlation scores.
    \item We prove the correctness of \ProtocolName{} and establish its privacy against semi-honest adversaries under standard real/ideal-world simulation paradigm. We further analyze the feasibility of exhaustive repeated query attacks against \ProtocolName{}.

    \item We implement \ProtocolName{} using \textsf{Microsoft SEAL} and integrate it into E2EE web/mobile applications through a modified \textsf{node-seal} backend. 
    Across $10{,}000$ encrypted searches over \textsf{SAMSum} messaging corpus scaled up to an estimated five-year history~\cite{gliwa-etal-2019-samsum}, \ProtocolName{} achieves $100\%$ observed accuracy and up to a $5.47\times$ correlation-computation speedup over the evaluated wildcard baseline configuration of~\cite{1}.
\end{enumerate}

\section{Preliminaries}

\subsection{BFV Homomorphic Encryption}
\label{subsec:bfv}

We build \ProtocolName{}'s encrypted-matching core on the Brakerski--Fan--Vercauteren (BFV) leveled homomorphic-encryption (HE) scheme~\cite{brakerski2012fully,fan2012somewhat}, whose security is based on the Ring-Learning-with-Errors (RLWE) assumption. An approved BFV parameter set $\mathsf{params}$ fixes the polynomial-modulus degree $N$, a prime plaintext modulus $t$, and an RNS ciphertext-modulus chain $Q=\prod_{\ell=0}^{\ell_Q-1}q_\ell$. These parameters define the plaintext ring $R_t=\mathbb Z_t[X]/(X^N+1)$ and the ciphertext ring $R_Q=\mathbb Z_Q[X]/(X^N+1)$. Key generation produces $(\mathsf{pk},\mathsf{sk},\mathsf{rlk})\leftarrow\mathsf{KeyGen}(\mathsf{params})$, where $\mathsf{pk}$ and $\mathsf{sk}$ are the public and secret keys and $\mathsf{rlk}$ is the relinearization key, used to relinearize a homomorphic product ciphertext into the standard form $c=(c_0,c_1)\in R_q^2$ at an active RNS modulus level $q\mid Q$. Our privacy analysis uses the adaptive multi-message IND-CPA model for HE, in which the adversary receives all published evaluation material, including $\mathsf{rlk}$~\cite{HomomorphicEncryptionSecurityStandard}. Encryption and decryption are denoted by $c\leftarrow\mathsf{Enc}_{\mathsf{pk}}(m)$ and $m\leftarrow\mathsf{Dec}_{\mathsf{sk}}(c)$, while homomorphic evaluation uses ct-ct operations $\mathsf{Eval}_{\mathsf{add}}$ and $\mathsf{Eval}_{\mathsf{mult}}$ and ct-pt operations $\mathsf{Eval}_{\mathsf{add\text{-}plain}}$ and $\mathsf{Eval}_{\mathsf{mult\text{-}plain}}$. For any modulus $r$, let $\operatorname{can}_r(a)\in\{0,\ldots,r-1\}$ denote the canonical integer representative of $a\in\mathbb Z_r$. For any odd modulus $r$, let $\operatorname{ctr}_r(a)\in\{-(r-1)/2,\ldots,(r-1)/2\}$ denote its centered integer representative, and $[\cdot]_t$ denote reduction modulo $t$. We define coefficient-wise BFV decoding as $\mathsf{DecCoeff}_{q,t}(a)=[\lfloor (t/q)\operatorname{ctr}_q(a)\rceil]_t$. For a relinearized ciphertext $c=(c_0,c_1)$, decryption first forms the noisy plaintext $v=c_0+c_1\mathsf{sk}\pmod q$ and applies $\mathsf{DecCoeff}_{q,t}$ independently to its coefficients. In particular, a coefficient encoding $m_i\in\mathbb Z_t$ is recovered correctly whenever $|(t/q)\operatorname{ctr}_q(v_i)-\operatorname{ctr}_t(m_i)|<1/2$. For an odd active modulus  $Q'\mid Q$, the public operation $\mathsf{ModSwitch}_{Q\rightarrow Q'}$ reduces a ciphertext to the  active modulus $Q'$ while preserving its decoded plaintext whenever the BFV correctness condition remains satisfied. Because \ProtocolName{} uses signed-coefficient arithmetic, elements of $\mathbb Z_t$ are interpreted through their centered representatives: for example, $-1$ is encoded as $t-1$.  Our deployment settings require a prime plaintext modulus $t$, satisfying $t\equiv1\pmod{2N}$, so that $\mathbb Z_t$ is a field and \ProtocolName{} supports NTT-compatible arithmetic.

\subsection{Secure Two-Party Computation (2PC)}
\label{sec:beaver}

\noindent \ding{110}
\paragraphNew{\DeepUL{Beaver Multiplication}}
For modulus $r$ and $x\in\mathbb Z_r$, we write
$\langle x\rangle_r=(x^{\mathsf{1}},x^{\mathsf{2}})$ for an additive sharing
satisfying $x=x^{\mathsf{1}}+x^{\mathsf{2}}\pmod r$. When $r=t$, we omit the subscript
and write $\langle x\rangle$. To multiply shares
$\langle x\rangle$ and $\langle y\rangle$ over $\mathbb Z_t$, the parties
consume a fresh Beaver triple
$(\langle u\rangle,\langle v\rangle,\langle w\rangle)$ satisfying
$w=uv\pmod t$~\cite{beaver1991efficient}. In the online phase, the
parties reconstruct only the masked differences $d=x-u$ and $f=y-v$ and
compute
$z^{\mathsf{1}}=w^{\mathsf{1}}+d\,v^{\mathsf{1}}+f\,u^{\mathsf{1}}+df
\pmod t$ and
$z^{\mathsf{2}}=w^{\mathsf{2}}+d\,v^{\mathsf{2}}+f\,u^{\mathsf{2}}
\pmod t$, yielding
$z^{\mathsf{1}}+z^{\mathsf{2}}=xy\pmod t$. Because $u$ and $v$ are uniform
and remain secret-shared, the opened values $d$ and $f$ reveal no information
about $x$ or $y$ in the semi-honest model. Fresh arithmetic triples over the
prime field $\mathbb Z_t$ can be generated by standard finite-field
preprocessing~\cite{keller2016mascot}. %

\noindent \ding{110}
\paragraphNew{\DeepUL{Oblivious Linear Evaluation (OLE)}}
In an OLE over $\mathbb Z_t$, a sender holding $(a,b)$ and a receiver holding $x$ securely compute $ax+b\pmod t$ for the receiver, without revealing $x$ to the sender or anything beyond the output about $(a,b)$ to the receiver~\cite{baum2020ole}. \ProtocolName{} invokes one OLE instance per retained coefficient to generate the correlated zero-test tokens, with
all instances evaluated as one offline batch.
Section~\ref{subsec:presence-zero-target} defines their
distribution, while Appendix~\ref{app:seldec-realization}
specifies the exact sender and receiver inputs, correctness
invariant, and batching.

\noindent \ding{110}
\paragraphNew{\DeepUL{Boolean Sharing and Mixed-Domain Conversion}} For a bit $b\in\{0,1\}$, we write $\llbracket b\rrbracket_{\mathsf B} =(b^{\mathsf{1}},b^{\mathsf{2}})$ for a Boolean XOR sharing satisfying $b=b^{\mathsf{1}}\oplus b^{\mathsf{2}}$. For a bitstring, this notation is applied componentwise. We use a standard semi-honest GMW protocol to evaluate Boolean circuits over these shares \cite{goldreich1987mental,demmler2015aby}: XOR gates are evaluated locally, whereas each AND gate consumes a fresh Boolean multiplication triple $(\llbracket a\rrbracket_{\mathsf B},\llbracket b\rrbracket_{\mathsf B},\llbracket c\rrbracket_{\mathsf B})$ satisfying $c=a\land b$. To convert Boolean outputs into arithmetic shares over $\mathbb Z_t$, the parties use daBits $(\llbracket\rho\rrbracket_{\mathsf B},\langle\rho\rangle)$, in which the same random bit $\rho$ 
is represented in both domains \cite{rotaru2019marbled,escudero2020mixed}. All Boolean triples and daBits used in an execution are fresh, independent, and consumed once.

\section{Protocol Architecture and Threat Model}
\label{sec:architecture-threat-model}
\label{SystemModel}
\label{subsec:ThreatModel}

Our E2EE messaging and privacy-preserving keyword-search system comprises four primary entities: a sender ($\mathsf{S}$), a receiver ($\mathsf{R}$), a cloud service provider ($\mathsf{CSP}$), and a pattern provider ($\mathsf{PP}$), as illustrated in Fig.~\ref{fig:flowdiagram}. Each party performs a distinct role in the \ProtocolName{} protocol. We outline these roles and trust assumptions associated with each entity as follows.

\ding{108} \paragraphNew{\DeepUL{Sender ($\mathsf{S}$)}} The sender encodes, fragments, and encrypts each message once using $\mathsf{R}$'s public key before uploading the resulting ciphertext fragments to $\mathsf{CSP}$. Although $\mathsf{S}$ and $\mathsf{R}$ may cooperatively exchange harmful or unlawful content, they remain compliant with the protocol. Semantic evasion through code words or out-of-band pre-encryption before the protocol encoding and HE layers remains outside the literal keyword-search guarantee, as receiver-initiated abuse-reporting protocols likewise address reported plaintext rather than deliberately concealed semantics~\cite{GrubbsLR17,issa2022hecate}.

\ding{108} \paragraphNew{\DeepUL{Receiver ($\mathsf{R}$)}} Each receiver generates the BFV keys as in Section~\ref{subsec:bfv}, retains the full secret key locally, publishes the corresponding public and evaluation keys, and decrypts the ciphertext fragments forwarded by $\mathsf{CSP}$. These fragments are retained by $\mathsf{CSP}$ for search, making the delivered ciphertexts the sole searchable representation and eliminating the need for a separate upload. Because $\mathsf{R}$ possesses the full secret key, it has the capability to decrypt the encrypted correlations produced by \ProtocolName{}. Returning these correlations to an implicated $\mathsf{R}$ would reveal query-related scores and locations, allowing it to suppress or misreport the search result. Consequently, both endpoints remain outside the online search process, while $\mathsf{PP}$ and $\mathsf{CSP}$ each hold additive shares of $\mathsf{R}$'s full secret key and use them to generate their respective local selected-decoding shares. The privacy-motivated $\mathsf{R}$ is assumed not to disclose its key or collude with either $\mathsf{PP}$ or $\mathsf{CSP}$, as such actions would fundamentally compromise its own message privacy. Nevertheless, it is essential to validate the functional consistency of the key artifacts and additive shares supplied by $\mathsf{R}$, as inconsistencies could lead to missed detections or biased search outcomes. \ProtocolName{} therefore performs randomized functional-validation checks during registration and before the search pipeline, as detailed in Section~\ref{subsec:functional-validation}, thereby supporting correct and unbiased subsequent search outcomes.

\ding{108} \paragraphNew{\DeepUL{Pattern Provider ($\mathsf{PP}$)}} For an authorized query, $\mathsf{PP}$ encodes the keyword as an encrypted trapdoor and submits it to $\mathsf{CSP}$. 
After $\mathsf{CSP}$ computes the encrypted correlations, $\mathsf{PP}$ uses its secret-key share for selected decoding and jointly performs zero testing and secure aggregation with $\mathsf{CSP}$. Only the presence-or-absence bit is revealed to $\mathsf{PP}$. Beyond its query, it learns no plaintext, match count, location, or correlation score. Locations are withheld as repeated single-character queries (e.g., \texttt{a}--\texttt{z}) could reveal sufficient positional structure to reconstruct substantial message content. Although presence-only disclosure still allows dictionary inference through repeated queries, Section~\ref{sec:eval_repeated_queries} demonstrates the practical infeasibility of such exhaustive repeated-query attacks.

\ding{108} \paragraphNew{\DeepUL{Cloud Service Provider ($\mathsf{CSP}$)}} Upon receiving the trapdoor, $\mathsf{CSP}$ computes correlations over the stored ciphertext fragments and uses its secret-key share to generate its local selected-decoding contribution. In \ProtocolName{}, $\mathsf{CSP}$ and $\mathsf{PP}$ are modeled as static, non-colluding, semi-honest adversaries that adhere to the protocol while attempting to infer information from their individual views. The non-collusion assumption models the organizational separation and potentially divergent incentives between an investigator or authorizing body and the messaging provider, as illustrated by the Facebook Messenger wiretap dispute and WhatsApp's traceability challenge in India~\cite{N41,N42,N25}. Under this model, $\mathsf{CSP}$ observes its local setup state, ciphertexts, trapdoors, intermediate values, and public metadata but learns no additional information about the plaintext messages, query, its length, or the result.

\ProtocolName{} targets efficient encrypted messaging and privacy-preserving keyword search, while message authentication, integrity, replay protection, ratcheting, and post-compromise security must be provided by a complementary messaging security layer that adheres to standard key evolution and lifecycle practices~\cite{N47,N48}. Setup artifacts are subject to deployment-defined rotation, with retained historical shares enabling retrospective search. Legal approval, query admission, signed audit and transparency records~\cite{N43,N44}, and per-authorization budgets~\cite{N45,N46} function as external controls. An \emph{authorized query} therefore refers to a search approved by the appropriate legal authority and admitted under these controls.

\begin{figure*}[!t]
    \centering
    \includegraphics[width=\linewidth]{Figures/SEEK_OVERVIEW_EXAMPLE.png}
    \caption{End-to-end overview of \ProtocolName{} with a running example under illustrative parameter settings.}
    \label{fig:Overview}
\end{figure*}

\section{\ProtocolName{}: Privacy-Preserving Encrypted Messaging and Keyword Search Protocol}
\label{sec:seek-presence-protocol}

\noindent \ding{110} \textbf{\DeepUL{Overview.}} In this section, we present \ProtocolName{}, a privacy-preserving case-insensitive keyword-search protocol for E2EE messaging systems. In \ProtocolName{}, the sender $\mathsf{S}$ encodes each message into overlapping fragments, encrypts, and uploads them to $\mathsf{CSP}$ (Section \ref{subsec:fragEncrypt}). The $\mathsf{PP}$ then submits a single encrypted trapdoor (Section \ref{subsec:presence-trapdoor}), which $\mathsf{CSP}$ evaluates against the query-independently preprocessed fragments to obtain encrypted correlation polynomials (Section \ref{subsec:presence-correlation}). After publicly removing overlap-induced duplicate positions, $\mathsf{PP}$ and $\mathsf{CSP}$ privately encode the match targets (Section~\ref{subsec:presence-zero-target}), decode only the retained coefficients into additive shares (Section~\ref{subsec:presence-selected-decoding}), transform them into blinded zero-test values (Section~\ref{subsec:presence-aggregation}), and aggregate them using the Beaver-multiplication protocol (Section \ref{sec:beaver}).  Only one presence-or-absence bit is revealed to $\mathsf{PP}$, while $\mathsf{CSP}$ learns no search result. Fig.~\ref{fig:Overview} illustrates the E2E \ProtocolName{} workflow using a running example.

\subsection{Cryptographic Setup Phase}
\label{subsec:setup}

For each key-lifecycle epoch $e$, \ProtocolName{} first selects the BFV parameters $\mathsf{params}_e=(N,Q,t)$. During the setup phase, $\mathsf{R}$ generates $(\mathsf{pk}_{R,e},\mathsf{sk}_{R,e},\mathsf{rlk}_{R,e}) \leftarrow \mathsf{KeyGen}(\mathsf{params}_e),$ as defined in Section~\ref{subsec:bfv}. These parameters and keys are reused for all uploads and authorized searches within the epoch and replaced only upon deliberate rotation. The receiver $\mathsf{R}$ retains $\mathsf{sk}_{R,e}$ for E2EE message decryption, distributes $(\mathsf{params}_e,\mathsf{pk}_{R,e})$ to $\mathsf{S}$, and $(\mathsf{params}_e,\mathsf{pk}_{R,e},\mathsf{rlk}_{R,e})$ to $\mathsf{PP}$ and $\mathsf{CSP}$. 
Afterwards, $\mathsf{PP}$ samples $\mathsf{sk}_{R,e}^{\mathsf{PP}}\gets R_Q$ uniformly and sends it to $\mathsf R$, who derives $\mathsf{sk}_{R,e}^{\mathsf{CSP}}=\mathsf{sk}_{R,e}-\mathsf{sk}_{R,e}^{\mathsf{PP}}\pmod Q$ and sends this share to $\mathsf{CSP}$ (Step~1 of Fig.~\ref{fig:Overview}). The resulting 2-out-of-2 additive shares reconstruct $\mathsf{sk}_{R,e}$, while each is marginally uniform and statistically independent of the fixed BFV key tuple. 
By the RNS representation of Section~\ref{subsec:bfv}, the same additive sharing relation holds independently coefficient-wise in every modulus limb and remains valid after both parties restrict their shares to any reduced modulus level $Q^{\prime}\mid Q$. Thus, for a relinearized ciphertext $c=(c_0,c_1)\in R_q^2$ at $q\in\{Q,Q^{\prime}\}$:
\begin{equation}
\begin{aligned}
c_0+c_1\mathsf{sk}_{R,e}
=&
\underbrace{
c_1\mathsf{sk}_{R,e}^{\mathsf{PP}}
}_{\substack{\mathsf{PP}\text{-side decryption}}}
+\,\,\,\,\,
\underbrace{
(c_0+c_1\mathsf{sk}_{R,e}^{\mathsf{CSP}})
}_{\substack{\mathsf{CSP}\text{-side decryption}}}
\pmod q
\nonumber
\end{aligned}
\end{equation}
This additive decomposition follows the multiparty BFV paradigm~\cite{mouchet2021multiparty}. $\mathsf{PP}$ and $\mathsf{CSP}$ locally compute their decryption-phase contributions from its secret-key share and supplies them as private inputs to the selected-decoding functionality of Section~\ref{subsec:presence-selected-decoding}, realized using 2PC primitives of Section~\ref{sec:beaver}.

\subsection{Fragmentation and Encryption}
\label{subsec:fragEncrypt}

Let $\Sigma$ denote the messaging alphabet, and $\phi:\Sigma\rightarrow\{0,1\}^{8}$ be a fixed public 8-bit encoding. A plaintext message 
sent from $\mathsf{S}$ to $\mathsf{R}$ is represented as $M_{\mathsf{S}\rightarrow\mathsf{R}}$
and encoded as the bitstream $\mathbf{m}_{\mathsf{S}\rightarrow\mathsf{R}} =\phi(M_{\mathsf{S}\rightarrow\mathsf{R}}) \in\{0,1\}^{L}$.
For a public byte-aligned maximum pattern length
$L_{\max}\in 8\mathbb{N}$ with $8\leq L_{\max}\leq N$, \ProtocolName{}
partitions the bitstream into zero-padded length-$N$ fragments
using the minimal sufficient overlap of $L_{\max}-8$ bits.
Accordingly, the fragment step size is defined as $\mathsf{stride}=N-L_{\max}+8$, and the number of fragments is $\mathcal{J}=\max\{1,\lceil(L-(L_{\max}-8))/(N-L_{\max}+8)\rceil\}$. 
The overlap $L_{\max}-8$ is the minimal sufficient overlap that guarantees
containment of every byte-aligned encoded-keyword window with
$L_P\in8\mathbb{N}$ and $L_P\leq L_{\max}$ (see Lemma \ref{lem:presence-fragment-coverage}).
For each fragment index $j\in\{0,\ldots,\mathcal{J}-1\}$, the starting offset is defined as: $\mathsf{off}_j=j\cdot\mathsf{stride}$ and the logical fragment length $T_j=\min\{N,L-\mathsf{off}_j\}$.  The corresponding padded fragment coefficients $b^{(j)}_i$ are defined as of Eq. \ref{fragmentCoeff} and the resulting coefficient-encoded plaintext polynomial is computed as: $m_j(X)=\sum_{i=0}^{N-1}b_i^{(j)}X^i\in R_t$.
\begin{equation}
b_i^{(j)}
=
\mathbf{1}[i<T_j]\,
\mathbf m_{\mathsf S\rightarrow\mathsf R}[\mathsf{off}_j+i],
\,\, \text{for}\,\, 0\leq i<N
\label{fragmentCoeff}
\end{equation}
The sender then computes $c_{M,j}\leftarrow \mathsf{Enc}_{\mathsf{pk}_{R,e}}(m_j(X))$ and uploads $c_{M,j}$ with the public metadata $(j,\mathsf{off}_j,T_j)$ to $\mathsf{CSP}$.

\noindent \ding{110}
\paragraphNew{\DeepUL{Preprocessing at $\mathsf{CSP}$}} To reduce online latency, $\mathsf{CSP}$ transforms $0/1$-encoded ciphertexts into a $\{\pm 1, 0\}$ representation as a one-time query-independent preprocessing step. For each fragment $j$, $\mathsf{CSP}$ constructs a public plaintext mask $\mathsf{mask}_j(X) = \sum_{i=0}^{{T}_j-1} X^i$, whose coefficients are $1$ in positions $[0,{T}_j-1]$ and $0$ elsewhere. The affine map $b \mapsto 1 - 2b$ is evaluated homomorphically on the stored ciphertext as follows.
\begin{equation}
c^{\pm1}_{M,j} \leftarrow \mathsf{Eval}_{\textsf{add-plain}}( \mathsf{Eval}_{\textsf{mult-plain}}(c_{M,j}, -2),\mathsf{mask}_j(X)) 
\label{preprocessing}
\end{equation}
By construction, the resulting ciphertext $c^{\pm1}_{M,j}$
maps message bits $0/1 \mapsto +1/-1$, and positions outside the logical fragment map to $0$. This preprocessing requires only ct–pt multiplications and additions per fragment, and can be performed asynchronously once before any search evaluation.

\smallskip
\noindent\ding{42} \paragraphNew{\DeepUL{Running Example}} Steps~2-4 of Fig.~\ref{fig:Overview} illustrate fragmentation using message \textsf{``Meet GO now''}. With $L=88$, $N=32$, and $L_{\max}=16$, the 24-bit stride produces $4$ overlapping fragments with one-byte overlap, and the bits encoding \textsf{``GO''} are fully contained in $F_1$, while final fragment is padded to $N$ bits.

\subsection{Case-Insensitive Trapdoor Generation}
\label{subsec:presence-trapdoor}

Capitalization differences are common in messaging and can cause false negatives under exact case-sensitive search. Accordingly, the exact-equality variants in~\cite{1,7} fail to match variants such as \texttt{security}, \texttt{Security}, and \texttt{SECURITY} using a single query. 
Although the wildcard mechanism in~\cite{1} supports case-insensitive matching, it requires two trapdoor ciphertexts and, including adjacent-pair construction, two ct--ct multiplications, three ct--pt multiplications, and three ciphertext additions/subtractions per pair. 
\ProtocolName{} instead realizes case-insensitive matching using one signed-correlation trapdoor ciphertext and one query-dependent ct--ct multiplication per fragment. 
For every authorized search, $\mathsf{PP}$ decides on a secret keyword $P$ of length $\ell_P$ and encodes it as $\mathbf p=\phi(P)\in\{0,1\}^{L_P}$, where $L_P=8\ell_P\leq L_{\max}\leq N$ and $t>4L_{\max}+2$. Let $\mathcal A_{\mathsf{ASCII}}=\{\texttt{A},\ldots,\texttt{Z},\texttt{a},\ldots,\texttt{z}\}$ represent the ASCII alphabetic characters. Because uppercase and lowercase ASCII letters differ only at the $6^{\text{th}}$ LSB bit position $k_{\mathsf{case}}$ (e.g., $\phi(\texttt{A})=01000001$ vs. $\phi(\texttt{a})=01100001$), the pattern provider masks only this case bit for alphabetic characters while keeping non-alphabetic characters fully constrained. For each character position $r\in\{0,\ldots,\ell_P-1\}$ and bit position $k\in\{0,\ldots,7\}$, the flattened bit index is $i=8r+k$. $\mathsf{PP}$ assigns $\gamma_i=0$ when $P[r]\in\mathcal A_{\mathsf{ASCII}}$ and $k=k_{\mathsf{case}}$, and sets $\gamma_i=1$ otherwise. It then computes the signed query coefficients $y_i=\gamma_i(1-2\mathbf p[i])\in\{-1,0,+1\}$. Let $A_P=|\{r:P[r]\in\mathcal A_{\mathsf{ASCII}}\}|$ denote the number of alphabetic characters in $P$. After ignoring one case bit per letter, the number of constrained query bits is $L_{\mathsf{fix}}=L_P-A_P=7\ell_P$, when $P$ consists entirely of ASCII letters.
$\mathsf{PP}$ then forms the reversed polynomial $P_{\mathsf{CI}}(X)=\sum_{i=0}^{L_P-1}\widetilde y_i X^{L_P-1-i}\in R_t$, where $\widetilde y_i$ is the representative of $y_i$ in $\mathbb Z_t$ (Section~\ref{subsec:bfv}). The reversal causes the aligned signed-bit products of each candidate window to accumulate in one correlation coefficient \cite{yasuda2013secure,yasuda2014practical}. Finally, $\mathsf{PP}$ computes the trapdoor $c_P\leftarrow\mathsf{Enc}_{\mathsf{pk}_{R,e}} (P_{\mathsf{CI}}(X))$ and sends it to $\mathsf{CSP}$. Coefficients beyond the encoded query are zero, so every trapdoor occupies one fixed-size ciphertext regardless of $L_P$.

\smallskip
\noindent\ding{42} \paragraphNew{\DeepUL{Running Example}} Step~5 of Fig.~\ref{fig:Overview} encodes \textsf{``go''} as a reversed signed trapdoor. Ignoring two ASCII case bits gives $L_{\mathrm{fix}}=14$, enabling the trapdoor to match \textsf{``GO''}.

\subsection{Encrypted Correlation Computation} %
\label{subsec:presence-correlation}

For each trapdoor $c_P$ and preprocessed ciphertext fragment $c_{M,j}^{\pm1}$ from Eq.~\ref{preprocessing}, where $j\in\{0,\ldots,\mathcal J-1\}$, $\mathsf{CSP}$ computes the encrypted correlation, relinearizes it using $\mathsf{rlk}_{R,e}$, and switches it to the reduced coefficient modulus $Q'$ (refer Section~\ref{subsec:bfv}):
\begin{align}
\widehat c_{\mathsf{corr},j}
&\leftarrow
\mathsf{Eval}_{\mathsf{mult}}(c_{M,j}^{\pm1},c_P)
\label{eq:presence-correlation-ciphertext}
\\
c_{\mathsf{corr},j}^{Q'}
&\leftarrow
\mathsf{ModSwitch}_{Q\rightarrow Q'}(
\mathsf{Relin}_{\mathsf{rlk}_{R,e}}
(\widehat c_{\mathsf{corr},j}))
\label{eq:presence-qprime-correlation}
\end{align}
The chosen $Q'$ ensures coefficient-wise BFV decoding correctness 
for every message, query, and result. A complete byte-aligned candidate window in fragment $j$ begins at $s_0$ satisfying $0\le s_0\le T_j-L_P$ and $(\mathsf{off}_j+s_0)\bmod 8=0$. Because $P_{\mathsf{CI}}(X)$ reverses the query polynomial, its correlation score occurs at $s=s_0+L_P-1$. We denote the candidate window by $\mathsf{Win}_{j,s_0}=(b_{s_0}^{(j)},\ldots,b_{s_0+L_P-1}^{(j)})$ and define the  Hamming distance between $\mathbf a,\mathbf b\in\{0,1\}^{L_P}$ as $\mathsf{Ham}_{\gamma}(\mathbf a,\mathbf b)=\sum_{i=0}^{L_P-1}\gamma_i(\mathbf a[i]\oplus\mathbf b[i])$. 
Since $s\geq L_P-1$ and the unreduced polynomial product has degree at most $N+L_P-2$, no term of degree $s+N$ exists. Therefore, reduction modulo $X^N+1$ introduces no negacyclic contribution at coefficient $s$. Its plaintext correlation value is represented as:
\begin{equation}
z_j[s]
=
\sum_{i=0}^{L_P-1}
y_i(1-2b_{s_0+i}^{(j)})
=
L_{\mathsf{fix}}
-
2\mathsf{Ham}_{\gamma}
(\mathsf{Win}_{j,s_0},\mathbf p)
\label{eq:presence-correlation-semantics}
\end{equation}
Thus, $z_j[s]=L_{\mathsf{fix}}$ iff the candidate window matches $P$ under the case-insensitive predicate. Coefficients with $s<L_P-1$ represent no complete window and may contain negacyclic wraparound, while those outside the logical fragment length $T_j$ correspond to no message position. Because $L_P$ is private, retained early coefficients are not removed using it and instead they receive private dummy targets as in Section~\ref{subsec:presence-zero-target}.

\smallskip
\noindent \ding{110}
\paragraphNew{\DeepUL{Overlap-Duplicate Pruning}}
Fragment overlap guarantees complete-window coverage for byte-aligned candidates, but creates duplicate positions across adjacent correlation polynomials. 
After modulus switching, $\mathsf{CSP}$ removes these duplicates. Since coefficient $s$ of fragment $j$ corresponds to absolute window-end position $\mathsf{off}_j+s$, the parties retain:
\begin{equation}
\begin{aligned}
\mathcal I&=\{\xi=(j,s):\;
0\leq j<\mathcal J,
0\leq s<T_j,\\
&
\underbrace{\mathsf{off}_j+s\equiv7\pmod 8}
_{\text{byte-aligned end position}},
\underbrace{j=0\ \lor\ s\geq L_{\max}-1}
_{\text{exclude repeated overlap positions}}
\}
\end{aligned}
\label{eq:presence-open-list}
\end{equation}
The congruence condition retains byte-aligned absolute end positions, while the final condition removes the initial overlap of each later fragment because those positions occur in its predecessor.
Because $\mathcal I$ depends only on public geometry and $L_{\max}$, it reveals neither the query nor the result. For each fragment contributing to $\mathcal I$, $\mathsf{CSP}$ retains $c_{0,j}^{Q'}$ and sends $(c_{1,j}^{Q'},j,\mathsf{off}_j,T_j,e)$ to $\mathsf{PP}$. This allows $\mathsf{PP}$ to form its local decryption shares for every retained coefficient but not to decode independently, since $c_{0,j}^{Q'}$ and $\mathsf{sk}_{R,e}^{\mathsf{CSP}}$ remain with $\mathsf{CSP}$.

\smallskip
\noindent\ding{42} \paragraphNew{\DeepUL{Running Example}} Steps~6--8 of Fig.~\ref{fig:Overview} show the resulting correlation and public pruning. In $F_1$, the match beginning at relative bit offset $16$ produces score $L_{\mathrm{fix}}=14$ at coefficient $s=31$. Because the overlap causes absolute window-end position $31$ to appear in both $F_1$ and $F_2$, pruning removes the duplicate from $F_2$ and retains the occurrence from $F_1$.

\subsection{Private Zero-Target Encoding and Offline Zero-Test Tokens}
\label{subsec:presence-zero-target}

After overlap pruning, \ProtocolName{} retains public coefficient indices $\mathcal I$, but the private length $L_P$ determines which coefficients represent complete query windows. To hide this distinction from $\mathsf{CSP}$, $\mathsf{PP}$ privately assigns a target $\theta_\xi$ to each $\xi=(j,s)\in\mathcal I$ and defines the corresponding zero-test input $x_\xi$ as:
\[
\theta_\xi
=
\begin{cases}
L_{\mathsf{fix}}, & \!\!\!\!\!\!s\geq L_P-1\\
L_{\max}+1,       & \!\!\!\!\!\!s<L_P-1
\end{cases}
\,\, \text{and} \,\,
x_\xi
=
z_j[s]-\theta_\xi
\!\!\!\!\!\pmod t
\]
For complete candidates, $x_\xi=0$ exactly when the candidate matches $P$ under the case-insensitive predicate. If $s<L_P-1$, the coefficient cannot represent a complete private query window with $|z_j[s]|\leq L_{\mathsf{fix}}\leq L_{\max}$, so the dummy target $L_{\max}+1$ cannot yield zero under the no-wrap condition (Lemma~\ref{lem:presence-no-wrap}). As neither $\theta_\xi$ nor $x_\xi$ is disclosed, $\mathsf{CSP}$ learns neither $L_P$ nor which retained coefficients represent complete windows.

To determine whether any $x_\xi=0$ without reconstructing individual values, \ProtocolName{} constructs one fresh one-sided zero-test token per $\xi\in\mathcal I$ using the OLE primitive of Section~\ref{sec:beaver}~\cite{baum2020ole}. In input-independent preprocessing, $\mathsf{PP}$ and $\mathsf{CSP}$ sample:
$$\mathsf{PP}:\,\alpha_\xi^{\mathsf{PP}}\gets\mathbb Z_t, \,\, \mathsf{CSP}:\,\alpha_\xi^{\mathsf{CSP}},\mu_\xi^{\mathsf{CSP}} \gets\mathbb Z_t, \beta_\xi\gets\mathbb Z_t^*$$

For each $\xi$, $\mathsf{CSP}$ is the OLE sender with $(a_\xi^{\mathsf{OLE}},b_\xi^{\mathsf{OLE}}) = (\beta_\xi,\,\beta_\xi\alpha_\xi^{\mathsf{CSP}}-\mu_\xi^{\mathsf{CSP}})$, while $\mathsf{PP}$ is the receiver with $x_\xi^{\mathsf{OLE}}=\alpha_\xi^{\mathsf{PP}}$. The OLE returns only $\mu_\xi^{\mathsf{PP}}$
to $\mathsf{PP}$, such that:
$$\mu_\xi^{\mathsf{PP}} = a_\xi^{\mathsf{OLE}}x_\xi^{\mathsf{OLE}} +b_\xi^{\mathsf{OLE}}= \beta_\xi \bigl(\alpha_\xi^{\mathsf{PP}}+\alpha_\xi^{\mathsf{CSP}}\bigr) -\mu_\xi^{\mathsf{CSP}}$$
Thus, $\mathsf{PP}$ holds $(\alpha_\xi^{\mathsf{PP}},\mu_\xi^{\mathsf{PP}})$, and $\mathsf{CSP}$ holds $(\alpha_\xi^{\mathsf{CSP}},\mu_\xi^{\mathsf{CSP}},\beta_\xi)$.

The token's base values are sampled freshly and independently across candidates and executions and independently of the corpus, query, result, and other preprocessing instances, and components within a token are correlated only by this invariant. In the same execution, $\alpha_\xi^{\mathsf{PP}}$ also masks the selected-decoding output, enabling Section~\ref{subsec:presence-aggregation} to convert shares of $x_\xi$ into shares of $\beta_\xi x_\xi$.  Because token allocation depends only on the public set $\mathcal I$, the $K=|\mathcal I|$ independent OLE instances are evaluated as one offline batch~\cite{baum2020ole}. Appendix~\ref{app:seldec-realization} specifies the batch binding and one-time lifecycle.

\noindent\ding{42} \paragraphNew{\DeepUL{Running Example}} Step~8 of Fig.~\ref{fig:Overview} compares each retained coefficient with a hidden target: $14=L_{\mathrm{fix}}$ for complete candidates and the dummy target $L_{\max}+1=17$ for early partial-overlap coefficients. Thus, the coefficient aligned with \textsf{``GO''} in $F_1$ yields a zero difference, while early coefficients remain nonzero, hiding target selection and match location.

\subsection{Privacy-Preserving Selected Decoding}
\label{subsec:presence-selected-decoding}

For each retained index $\xi=(j,s)$, the parties derive additive shares of $x_\xi=z_j[s]-\theta_\xi\pmod t$ while ensuring that neither reconstructs $z_j[s]$ or $x_\xi$, $\theta_\xi$ remains hidden from $\mathsf{CSP}$, and both local BFV decryption-phase contributions remain private. $\mathsf{PP}$ masks $\theta_\xi$ using the fresh token mask $\alpha_\xi^{\mathsf{PP}}$ from Section~\ref{subsec:presence-zero-target} by setting $a_\xi=-\theta_\xi-\alpha_\xi^{\mathsf{PP}}\pmod t$. Following the multiparty-BFV paradigm~\cite{mouchet2021multiparty}, the parties locally compute:
$$
\begin{aligned}
p_\xi^{\mathsf{PP}}
&=
[
c_{1,j}^{Q'}
(
\mathsf{sk}_{R,e}^{\mathsf{PP}}\bmod Q'
)
]_s
\pmod{Q'}\\
p_\xi^{\mathsf{CSP}}
&=
[
c_{0,j}^{Q'}
+
c_{1,j}^{Q'}
(
\mathsf{sk}_{R,e}^{\mathsf{CSP}}\bmod Q'
)
]_s
\pmod{Q'}
\end{aligned}
$$

\noindent\ding{110} \textbf{\DeepUL{Batched selected-decoding functionality.}} For public moduli $q,t$, 
and public index set $\mathcal I$, the functionality $\mathcal F_{\mathsf{SD}}^{q,t}$ receives $(p_\xi^{\mathsf{PP}},a_\xi)$ from $\mathsf{PP}$ and $p_\xi^{\mathsf{CSP}}$ from $\mathsf{CSP}$ and computes:
\begin{equation}
\begin{aligned}
v_\xi
&=
p_\xi^{\mathsf{PP}}+p_\xi^{\mathsf{CSP}}
\pmod q,\,\, \forall\,\xi\in\mathcal I\\
w_\xi
&=
\mathsf{DecCoeff}_{q,t}(v_\xi)+a_\xi
\pmod t,\,\, \forall\,\xi\in\mathcal I
\label{eq:presence-masked-decoding1}
\end{aligned}
\end{equation}
It returns only $(w_\xi)_{\xi\in\mathcal I}$ to $\mathsf{CSP}$, revealing neither $v_\xi$ nor the other party's local contribution. For $q=Q'$, we suppress the fixed moduli $(Q',t)$ from the 
notation. Let $\Pi_{\mathsf{SD}}^{\mathsf{pre}}$ and
$\Pi_{\mathsf{SD}}^{\mathsf{on}}$ denote the 2PC joint-mask preprocessing and online selected-decoding protocols, respectively, which jointly realize $\mathcal F_{\mathsf{SD}}$. 
Appendix~\ref{app:seldec-realization} gives their concrete composition using standard GMW and mixed-domain conversion primitives~\cite{goldreich1987mental,demmler2015aby,rotaru2019marbled,escudero2020mixed}, following the MPC threshold-FHE decryption paradigm of~\cite{zyskind2025threshold}, and proves its correctness and semi-honest security under the stated preprocessing assumptions. Under BFV decoding correctness at $Q'$, $\mathsf{DecCoeff}_{Q',t}(v_\xi)=z_j[s]\pmod t$. Hence, the value returned to $\mathsf{CSP}$ satisfies:
\begin{equation}
\begin{aligned}
w_\xi
&=
\mathsf{DecCoeff}_{Q',t}(v_\xi)+a_\xi= x_\xi-\alpha_\xi^{\mathsf{PP}} \pmod t
\end{aligned}
\label{eq:presence-masked-decoding}
\end{equation}
Therefore, the assignments $x_\xi^{\mathsf{PP}}=\alpha_\xi^{\mathsf{PP}}$ and $x_\xi^{\mathsf{CSP}}=w_\xi$ satisfy $x_\xi^{\mathsf{PP}}+x_\xi^{\mathsf{CSP}}=x_\xi\pmod t$. Because $\alpha_\xi^{\mathsf{PP}}$ is fresh, uniform, hidden from $\mathsf{CSP}$, and consumed once, $w_\xi$ is uniform from $\mathsf{CSP}$'s view for every $x_\xi$.
Thus, neither party reconstructs an individual correlation coefficient or zero-test input, and all indices in $\mathcal I$ can be processed as one parallel batch.

\noindent\ding{42} \paragraphNew{\DeepUL{Running Example}} Step~9 of Fig.~\ref{fig:Overview} converts the selected correlation coefficient into additive shares of its hidden target difference. Here, $61+132=193\equiv0\pmod{193}$, while neither $\mathsf{PP}$ nor $\mathsf{CSP}$ reconstructs the coefficient individually.

\subsection{One-Sided Zero Testing and Aggregation}
\label{subsec:presence-aggregation}

After selected decoding, $\mathsf{PP}$ and $\mathsf{CSP}$ hold additive shares $x_\xi^{\mathsf{PP}}=\alpha_\xi^{\mathsf{PP}}$ and $x_\xi^{\mathsf{CSP}}=w_\xi$ of each hidden zero-test input $x_\xi$. For every $\xi\in\mathcal I$, they consume its fresh offline token and set $y_\xi^{\mathsf{PP}}=\mu_\xi^{\mathsf{PP}}$ and $y_\xi^{\mathsf{CSP}}=\mu_\xi^{\mathsf{CSP}}+\beta_\xi(x_\xi^{\mathsf{CSP}}-\alpha_\xi^{\mathsf{CSP}})\pmod t$. The token invariant from Section~\ref{subsec:presence-zero-target} gives $y_\xi^{\mathsf{PP}}+y_\xi^{\mathsf{CSP}}=\beta_\xi x_\xi\pmod t$. Because each fresh $\beta_\xi\in\mathbb Z_t^*$ is nonzero and known only to $\mathsf{CSP}$, the shared value $y_\xi$ is zero exactly when $x_\xi$ is zero, without revealing the individual outcome.

For $K=|\mathcal I|$, if $K=0$, the parties set $Y=1$ and skip multiplication. Otherwise, they multiply the shares $\langle y_\xi\rangle$ using a balanced binary tree, consuming $K-1$ fresh triples and $\lceil\log_2K\rceil$ online rounds, producing additive shares of:
\begin{equation}
Y = \prod_{\xi\in\mathcal I}y_\xi = \prod_{\xi\in\mathcal I}\beta_\xi\,\,\, \prod_{\xi\in\mathcal I}x_\xi \pmod t
\label{eq:presence-aggregate-product}
\end{equation}
Only $Y$ is reconstructed to $\mathsf{PP}$, which outputs $\mathsf{present}=\mathbf{1}[Y=0]$. Since $\mathbb Z_t$ is a field and every $\beta_\xi\neq0$, $Y=0$ iff some retained candidate has $x_\xi=0$, equivalently, a keyword match exists. Lemma~\ref{lem:presence-output-privacy} establishes the distribution of $Y$ conditioned on $\mathsf{PP}$'s preceding view. Hence, aggregation reveals no match count, score, or location beyond the prescribed leakage, and presence bit, while $\mathsf{CSP}$ receives no result.

\noindent\ding{42}\paragraphNew{\DeepUL{Running Example}} Steps~10--11 of Fig.~\ref{fig:Overview} blind each target difference with a fresh nonzero scalar and aggregate the results. For $(143,0,61)$, the product is zero, so $\mathsf{PP}$ receives $\mathsf{present}=1$ without learning which factor caused the match.

\subsection{Functional Setup Validation}
\label{subsec:functional-validation}

After completing the epoch setup described in Section~\ref{subsec:setup} and before enabling search, $\mathsf{PP}$ and $\mathsf{CSP}$ freeze their public artifacts, the active modulus $Q'$, and local secret-key shares. Employing a collision-resistant hash over canonical, domain-separated encodings, the parties agree on the sender-visible digest $d_e^{\mathsf S}=\mathsf H(\mathsf{receiverID}\parallel e\parallel\mathsf{params}_e\parallel L_{\max}\parallel\mathsf{pk}_{R,e})$ and the full digest $\mathsf{setupDigest}_e=\mathsf H(d_e^{\mathsf S}\parallel\mathsf{rlk}_{R,e}\parallel Q')$, aborting the protocol upon disagreement. Only after this state is fixed do the parties independently sample and exchange $\eta_{\mathsf{PP}},\eta_{\mathsf{CSP}}\gets\{0,1\}^{\lambda}$ and derive $\mathsf{seed}_{\mathsf{val}}=\eta_{\mathsf{PP}}\oplus\eta_{\mathsf{CSP}}$. A secure pseudorandom generator and bounded sampler expand the seed into $\kappa$ tuples $(m_\nu,a_\nu,b_\nu,u_\nu)$ that are computationally indistinguishable from independent samples, where $m_\nu,a_\nu,b_\nu\in R_t$ are prescribed sparse or dense test polynomials, and $u_\nu\in\{0,\ldots,N-1\}$.
For each tuple, $\mathsf{PP}$ uses fresh randomness to encrypt $(m_\nu,a_\nu,b_\nu)$ under $\mathsf{pk}_{R,e}$ as $(c_\nu,c_\nu^a,c_\nu^b)$. $\mathsf{CSP}$ computes $\overline c_\nu=\mathsf{ModSwitch}_{Q\rightarrow Q'}(c_\nu)$ and $\overline d_\nu=\mathsf{ModSwitch}_{Q\rightarrow Q'}(\mathsf{Relin}_{\mathsf{rlk}_{R,e}}(\mathsf{Eval}_{\mathsf{mult}}(c_\nu^a,c_\nu^b)))$. The protocol then invokes selected decoding from Section~\ref{subsec:presence-selected-decoding}, followed by one-sided zero testing from Section~\ref{subsec:presence-aggregation}, on two singleton instances targeting $m_\nu[u_\nu]$ in $\overline c_\nu$ and $(a_\nu b_\nu)[u_\nu]$ in $\overline d_\nu$. All $2\kappa$ instances utilize fresh validation-only selected-decoding preprocessing and one-time zero-test tokens. $\mathsf{PP}$ reconstructs the blinded singleton outputs, converts them into zero-test bits, and transmits authenticated copies to $\mathsf{CSP}$. The setup is accepted only if all $2\kappa$ bits equal one. Otherwise, $\mathsf R$'s epoch setup is rejected and flagged, and search remains disabled. By the functional correctness of selected decoding and singleton zero testing, all $2\kappa$ known-answer checks return one for a consistent frozen setup (Lemmas~\ref{lem:seldec-realization-correctness} and~\ref{lem:presence-decoding-aggregation}). Passing these checks validates the functional consistency of the frozen BFV artifacts and additive secret-key shares across joint decryption, multiplication, relinearization, modulus switching, selected decoding, and singleton zero testing. The component-level correctness results in Lemmas~\ref{lem:presence-fragment-coverage}--\ref{lem:presence-decoding-aggregation} and Theorem~\ref{thm:presence-e2e-correctness} establish that, for an accepted, consistent setup satisfying the theorem's conditions, these operations compose into a functionally correct post-setup \ProtocolName{} search. Upon acceptance, the parties define 
$\chi_e=(\mathsf{receiverID},e,d_e^{\mathsf S},\mathsf{setupDigest}_e,\mathsf{seed}_{\mathsf{val}},\kappa)$, compute $\tau_e=\mathsf H(\chi_e)$, and issue $\mathsf{cert}_e=(\chi_e,\mathsf{Sig}_{\mathsf{PP}}(\tau_e),\mathsf{Sig}_{\mathsf{CSP}}(\tau_e))$ under an EUF-CMA-secure signature scheme. Before encryption, $\mathsf S$ recomputes $d_e^{\mathsf S}$ from the sender-visible setup and verifies the certificate fields and both signatures. Each authenticated upload includes $(\mathsf{receiverID},e,\mathsf{setupDigest}_e)$, and $\mathsf{CSP}$ forwards and searches only ciphertexts matching its active validated setup. Each party stores $\mathsf{cert}_e$ together with its frozen local state, including its secret-key share, and refuses to use the certificate if that state changes. Any modification requires fresh validation and certification. Thus, a valid certificate binds every upload and search to one frozen, functionally validated setup, preventing wrong artifact execution and preserving the correctness of subsequent \ProtocolName{} searches.

\section{Correctness Analysis of \ProtocolName{}}
\label{sec:seek-correctness}

In this section, we present the component-level correctness lemmas and the E2E correctness theorem for \ProtocolName{}. Full detailed and technical proofs are deferred to Appendix~\ref{app:seek-correctness}.

\begin{lemma}[Fragment Coverage and Unique Retention]
\label{lem:presence-fragment-coverage}
Under the fixed encoding, let $L_P,L_{\max},N\in8\mathbb N$ with $L_P\le L_{\max}\le N$. With the minimal sufficient overlap $L_{\max}-8$, every complete byte-aligned length-$L_P$ window is fully contained in at least one fragment. Moreover, the pruning set $\mathcal I$ of Eq.~\ref{eq:presence-open-list} retains the absolute window-end position exactly once.
\end{lemma}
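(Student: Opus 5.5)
I will argue by direct index arithmetic on absolute bit positions, treating the coverage claim and the unique-retention claim separately.

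Write $\sigma=\mathsf{stride}=N-L_{\max}+8$, so fragment $j$ covers the absolute positions $[j\sigma,\,j\sigma+T_j-1]$ with $T_j=\min\{N,L-j\sigma\}$. Since $N,L_{\max}\in8\mathbb N$, we have $\sigma\in8\mathbb N$ and $\sigma\ge 8$. Consequently every fragment offset is byte-aligned, and each absolute byte-aligned position $a$ occurs at a byte-aligned relative position $a-j\sigma$ in any fragment containing it.

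\textbf{Coverage.} Fix a complete window $[a,a+L_P-1]$ with $8\mid a$ and $a+L_P\le L$. I take $j=\min\{\lfloor a/\sigma\rfloor,\mathcal J-1\}$.
\begin{enumerate}[leftmargin=*]
\item If $j=\lfloor a/\sigma\rfloor$, then $a=j\sigma+r$ with $0\le r<\sigma$. Because $r$ and $\sigma$ are both multiples of $8$, this gives $r\le\sigma-8=N-L_{\max}$, which is the key use of byte alignment. Hence the window end satisfies $a+L_P-1\le j\sigma+N-L_{\max}+L_{\max}-1=j\sigma+N-1$. Together with $a+L_P-1\le L-1$, this yields $a+L_P-1\le j\sigma+T_j-1$.
\item If instead $j=\mathcal J-1<\lfloor a/\sigma\rfloor$, the window starts after $\mathsf{off}_j$, and the last fragment has $T_j=L-\mathsf{off}_j$. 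I must check that this never exceeds $N$. The defining formula $\mathcal J=\max\{1,\lceil (L-L_{\max}+8)/\sigma\rceil\}$ gives $(\mathcal J-1)\sigma+N\ge L$, so the last fragment reaches the end of the stream.
\end{enumerate}
This part is also where minimality is illustrated. An overlap of $L_{\max}-16$ would let $r=\sigma-8$ with $L_P=L_{\max}$ overflow a fragment, so the example shows the bound $L_{\max}-8$ cannot be reduced.

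\textbf{Unique retention.} For each absolute end position $E\equiv7\pmod 8$ with $L_{\max}-1\le E\le L-1$ (and also small $E$, which only fragment $0$ can hold), I show there is exactly one $(j,s)\in\mathcal I$ with $j\sigma+s=E$, $s<T_j$, and ($j=0$ or $s\ge L_{\max}-1$).
\begin{enumerate}[leftmargin=*]
\item \emph{Existence.} I take the largest $j$ with $j\sigma\le E-(L_{\max}-1)$, or $j=0$ if no such $j$ exists. Then $s=E-j\sigma$ lies in $[L_{\max}-1,\,L_{\max}-1+\sigma-1]$. Combined with alignment, this gives $s\le N-1$. Since $s\le E\le L-1-j\sigma$, we also get $s<T_j$. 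As in the coverage case, the index $j$ is clipped at $\mathcal J-1$ using the same counting identity.
\item \emph{Uniqueness.} Two retained pairs $j<j'$ with $j'\ge1$ would require $s'=E-j'\sigma\ge L_{\max}-1$. Then $s=s'+(j'-j)\sigma\ge L_{\max}-1+\sigma=N+7$, which is impossible since $s<N$.
\end{enumerate}
Finally, the window of the match in the coverage part ends at some $E$, and the retained pair for that $E$ has $s\ge L_{\max}-1\ge L_P-1$ (or $j=0$ with $s=E\ge L_P-1$). This shows the retained coefficient represents a complete window lying within that fragment.

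The main obstacle I expect is the boundary bookkeeping for the last fragment and the clipping $j\le\mathcal J-1$: I need $T_j$ to equal $L-\mathsf{off}_j$ without exceeding $N$, and this must follow from the ceiling formula for $\mathcal J$. I would verify that identity first, including the degenerate case $\mathcal J=1$ with $L<L_{\max}$, and then the remaining inequalities are routine.
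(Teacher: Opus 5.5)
Your proposal is correct and follows the paper's proof. The shared steps are the choice $j=\min\{\lfloor a/\sigma\rfloor,\mathcal J-1\}$ with the byte-alignment tightening $a-\mathsf{off}_j\le\sigma-8$, and the same counterexample for minimality: a window starting one byte before the next fragment, which extends verbatim to every byte-aligned overlap below $L_{\max}-8$, not only $L_{\max}-16$. For retention you use the same key identity $L_{\max}-1+\sigma=N+7$, packaged as a per-position existence/uniqueness argument rather than the paper's partition of $\{7,15,\ldots,L-1\}$ into consecutive disjoint blocks, so you avoid the side facts the paper invokes (that non-final fragments are full and that a non-initial last fragment has $T_j\ge L_{\max}$).

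Two small repairs are needed. First, the inequality should read $s=E-j\sigma\le L-1-j\sigma$. Second, in the retention step the identity $\mathcal J\sigma\ge L-L_{\max}+8$ together with $j\sigma\le E-(L_{\max}-1)\le L-L_{\max}$ shows the largest admissible $j$ already satisfies $j\le\mathcal J-1$. So no clipping actually occurs there, and your bound $s\le N-1$ is never disturbed.
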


\begin{lemma}[Correlation Semantics]
\label{lem:presence-correlation-correctness}
For every complete byte-aligned candidate window $\mathsf{Win}_{j,s_0}$ whose correlation score occurs at $s=s_0+L_P-1$, the plaintext coefficient satisfies Eq.~\ref{eq:presence-correlation-semantics}. Consequently, $z_j[s]=L_{\mathsf{fix}}$ if and only if the candidate matches $P$ under the ASCII case-insensitive predicate.
\end{lemma}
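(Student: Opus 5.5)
The plan is to compute coefficient $s$ of the plaintext product $m^{\pm1}_j(X)\cdot P_{\mathsf{CI}}(X)$ in $R_t$ directly, and then read off the equality criterion from a per-bit analysis.

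\textbf{Step 1 (plaintext of the preprocessed fragment).} By Eq.~\ref{preprocessing}, the preprocessed ciphertext $c^{\pm1}_{M,j}$ decrypts to $\sum_{i}(1-2b^{(j)}_i)\mathbf 1[i<T_j]X^i$. Write $e_i$ for its coefficients.

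\textbf{Step 2 (exact coefficient extraction).} Multiply by $P_{\mathsf{CI}}(X)=\sum_{i=0}^{L_P-1}\widetilde y_iX^{L_P-1-i}$. Before reduction, coefficient $s$ of the product is
\[
\sum_{k+(L_P-1-i)=s} e_k\,\widetilde y_i=\sum_{i=0}^{L_P-1}\widetilde y_i\, e_{s-L_P+1+i}=\sum_{i}\widetilde y_i\, e_{s_0+i}.
\]
As the paper observes, the unreduced product has degree at most $N-1+L_P-1=N+L_P-2$. Since $s\ge L_P-1$, we have $s+N>N+L_P-2$, so reduction modulo $X^N+1$ contributes no wrapped term to coefficient $s$. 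Because $s_0+L_P-1=s<T_j$ (as $s_0\le T_j-L_P$), every index used lies below $T_j$. Hence $e_{s_0+i}=1-2b^{(j)}_{s_0+i}$, which gives the first equality of Eq.~\ref{eq:presence-correlation-semantics} modulo $t$.

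\textbf{Step 3 (integer identity).} Substitute $y_i=\gamma_i(1-2\mathbf p[i])$ and use the bit identity $(1-2a)(1-2b)=1-2(a\oplus b)$ for $a,b\in\{0,1\}$. This gives
\[
z_j[s]=\sum_i\gamma_i-2\sum_i\gamma_i(b_{s_0+i}\oplus\mathbf p[i])=L_{\mathsf{fix}}-2\mathsf{Ham}_\gamma(\mathsf{Win}_{j,s_0},\mathbf p),
\]
since $\sum_i\gamma_i=L_P-A_P=L_{\mathsf{fix}}$.

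\textbf{Step 4 (lifting from $\mathbb Z_t$ to $\mathbb Z$).} This is the only delicate point. The integer value lies in $[-L_{\mathsf{fix}},L_{\mathsf{fix}}]\subseteq[-L_{\max},L_{\max}]$, and $t>4L_{\max}+2$. The centered representative therefore equals the integer value, so the equality $z_j[s]=L_{\mathsf{fix}}$ in $\mathbb Z_t$ is equivalent to the same equality over $\mathbb Z$. (The no-wrap lemma can be cited here if it is preferred.)

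\textbf{Step 5 (iff).} Over $\mathbb Z$, $z_j[s]=L_{\mathsf{fix}}$ holds iff $\mathsf{Ham}_\gamma=0$, i.e., iff every constrained bit agrees. By construction, $\gamma_i=0$ exactly on the $6^{\text{th}}$-LSB case bit of alphabetic pattern characters. It remains to show that bitwise agreement outside those bits is equivalent to the case-insensitive predicate, argued per character:
\begin{itemize}
\item \emph{Non-letters:} $P[r]$ is fully constrained, so agreement is exact byte equality.
\item \emph{Letters:} the window byte must equal $\phi(P[r])$ up to the case bit. For ASCII, flipping bit~5 of a letter yields its opposite-case letter. So the window byte is either $P[r]$ or its case variant, which is exactly case-insensitive character equality.
\end{itemize}
Byte alignment of the window, i.e., $(\mathsf{off}_j+s_0)\bmod 8=0$, guarantees that character boundaries in the window align with those of $\mathbf p$, which completes the argument.
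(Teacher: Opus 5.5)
Your proof is correct and follows the paper's argument essentially step for step. You extract coefficient $s$ of the unreduced product, rule out negacyclic wraparound with the degree bound $N+L_P-2$, and rewrite each constrained signed product as $\pm1$ by bit agreement to obtain $L_{\mathsf{fix}}-2\mathsf{Ham}_\gamma$. You then identify $\mathsf{Ham}_\gamma=0$ with case-insensitive equality via the weight-$32$ case bit, as the paper does. Your explicit checks that every window index lies below $T_j$ and that the identity lifts from $\mathbb Z_t$ to $\mathbb Z$ because $t>4L_{\max}+2$ are minor rigor additions; the paper leaves the first implicit and the second to Lemma~\ref{lem:presence-no-wrap}.
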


\begin{lemma}[No Wraparound and Target Soundness]
\label{lem:presence-no-wrap}
For every $\xi=(j,s)\in\mathcal I$, the correlation satisfies $|z_j[s]|\leq L_{\mathsf{fix}}\leq L_{\max}$. If $s\geq L_P-1$, then $x_\xi=z_j[s]-L_{\mathsf{fix}}\in[-2L_{\mathsf{fix}},0]$ and equals zero exactly for a complete case-insensitive match. Otherwise, the dummy target $L_{\max}+1$ gives $x_\xi\in[-(2L_{\max}+1),-1]$. Therefore, if $t>4L_{\max}+2$, both ranges lie within the centered representatives of $\mathbb Z_t$, so neither an incomplete nor a nonmatching complete candidate becomes zero modulo $t$.
\end{lemma}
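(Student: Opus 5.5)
We bound $|z_j[s]|$ for every retained $\xi=(j,s)\in\mathcal I$ first, then treat complete and incomplete coefficients separately, and finally argue injectivity of reduction modulo $t$ on the combined range. Throughout, we work with integer-valued plaintext coefficients and only reduce modulo $t$ at the end.

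\emph{Magnitude bound.} We write the $s$-th coefficient of the plaintext product $P_{\mathsf{CI}}(X)\cdot m^{\pm1}_j(X)$ in $\mathbb Z[X]/(X^N+1)$. It is a signed sum of terms $\widetilde y_i\cdot(1-2b^{(j)}_{k})\cdot\mathbf 1[k<T_j]$, possibly with negacyclic sign flips. The key observation is that for a fixed output index $s$, each query index $i$ contributes at most one term. The only candidate partner is $k\equiv s-(L_P-1-i)\pmod N$, and since $0\le L_P-1-i<N$ this residue is unique. Each such term has absolute value at most $|y_i|=\gamma_i\in\{0,1\}$. Hence
\[
|z_j[s]|\le\sum_i\gamma_i=L_{\mathsf{fix}}\le L_P\le L_{\max}.
\]
This holds for every $s$, including early coefficients with wraparound and positions beyond $T_j$, where the mask zeroes the entries. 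We need $L_{\mathsf{fix}}=\sum_i\gamma_i$ in general, namely $L_P-A_P$, which the trapdoor construction gives.

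\emph{Complete case, $s\ge L_P-1$.} Here we invoke Lemma~\ref{lem:presence-correlation-correctness}, which gives $z_j[s]=L_{\mathsf{fix}}-2\mathsf{Ham}_\gamma$ with $0\le\mathsf{Ham}_\gamma\le L_{\mathsf{fix}}$. Hence
\[
x_\xi=-2\mathsf{Ham}_\gamma\in[-2L_{\mathsf{fix}},0],
\]
and $x_\xi=0$ iff the window matches. One subtlety must be handled: a retained $s\ge L_P-1$ might have $s>T_j-1$, or a window not fully inside $T_j$. Retention in $\mathcal I$ forces $s<T_j$, so $s_0=s-L_P+1\le T_j-L_P$. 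Byte alignment of $s_0$ follows from $\mathsf{off}_j+s\equiv7$ and $L_P\in8\mathbb N$. So every retained $s\ge L_P-1$ is indeed a complete byte-aligned window, and the lemma applies.

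\emph{Incomplete case, $s<L_P-1$.} From $|z_j[s]|\le L_{\max}$ we get
\[
x_\xi=z_j[s]-(L_{\max}+1)\in[-(2L_{\max}+1),-1].
\]

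\emph{Injectivity modulo $t$.} The union of both ranges lies in $[-(2L_{\max}+1),0]$. If $t>4L_{\max}+2$, then $(t-1)/2\ge 2L_{\max}+1$, so these integers are distinct centered representatives. Therefore $x_\xi\equiv0\pmod t$ iff $x_\xi=0$ as an integer. This happens only for a complete matching candidate and never for an incomplete one. It also justifies that the decoded values under $\operatorname{ctr}_t$ equal the integer sums.

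\emph{Main obstacle.} The delicate step is the magnitude bound for early coefficients, where negacyclic wraparound and padding interact. We plan to handle it with the uniqueness-of-partner argument above, rather than by explicitly expanding the wrapped terms.
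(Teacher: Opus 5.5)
Your proposal is correct and follows essentially the same route as the paper: you bound $|z_j[s]|\le L_{\mathsf{fix}}\le L_{\max}$ by showing each query position contributes at most one term, which you justify via uniqueness of the partner index modulo $N$. You then split into the real-target case and the dummy-target case, and conclude from $t>4L_{\max}+2$ that both ranges lie among the centered representatives. Your explicit check that every retained $s\ge L_P-1$ (with $s<T_j$ and $\mathsf{off}_j+s\equiv7\pmod 8$) gives a complete byte-aligned window, so that Lemma~\ref{lem:presence-correlation-correctness} applies, fills in a step the paper only asserts.
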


\begin{lemma}[Selected-Decoding Correctness]
\label{lem:seldec-realization-correctness}
The preprocessing and online selected-decoding protocols $\Pi_{\mathsf{SD}}^{\mathsf{pre}}$ and $\Pi_{\mathsf{SD}}^{\mathsf{on}}$ jointly realize $\mathcal F_{\mathsf{SD}}$ with functional correctness.
For every $\xi\in\mathcal I$, $\mathsf{PP}$ receives nothing,
while $\mathsf{CSP}$ receives $w_\xi$ (Eq.~\ref{eq:presence-masked-decoding1}).
\end{lemma}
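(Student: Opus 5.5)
The concrete protocols $\Pi_{\mathsf{SD}}^{\mathsf{pre}}$ and $\Pi_{\mathsf{SD}}^{\mathsf{on}}$ are only specified in Appendix~\ref{app:seldec-realization}. I will therefore follow the composition announced there: the MPC threshold-FHE decryption pattern of~\cite{zyskind2025threshold}, built from GMW over XOR shares plus daBit conversion into $\mathbb Z_t$. The proof is a coefficient-wise invariant argument, fixing one $\xi\in\mathcal I$ and then batching, since all instances use independent preprocessing.

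\textbf{Step 1: reduce decoding to a comparison circuit.} Write $v_\xi=p_\xi^{\mathsf{PP}}+p_\xi^{\mathsf{CSP}}\pmod{Q'}$. Using the centered representative $\operatorname{ctr}_{Q'}$, I will show that $\mathsf{DecCoeff}_{Q',t}(v_\xi)$ is determined by the interval of a fixed public partition of $\mathbb Z_{Q'}$ into $t$ rounding intervals that contains $\operatorname{can}_{Q'}(v_\xi)$.

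Equivalently, I would use the standard trick $\lfloor (t/Q')\operatorname{ctr}(v)\rceil=\lfloor (t\,\operatorname{can}(v)+\lfloor Q'/2\rfloor)/Q'\rfloor$, up to a public centered correction. This makes decoding an integer division by public constants of an integer in $[0,Q')$.

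\textbf{Step 2: correctness of the joint mask and the circuit.} The preprocessing $\Pi_{\mathsf{SD}}^{\mathsf{pre}}$ produces a random $r\in\mathbb Z_{Q'}$. It is held in bit-decomposed Boolean shares $\llbracket r\rrbracket_{\mathsf B}$, together with arithmetic shares over $\mathbb Z_t$ of the decoding-relevant quantities of $r$, obtained via daBits.

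Online, the parties open $v_\xi-r\pmod{Q'}$. Adding $r$ back is done by a GMW adder with modular reduction, implemented as a comparison against $Q'$ and a conditional subtraction. The integer division and rounding of Step~1 are then evaluated in the same circuit. I will verify gate by gate that GMW with correct Boolean triples outputs XOR shares of the bits of $\mathsf{DecCoeff}(v_\xi)$.

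These bits are converted to $\langle\cdot\rangle_t$ using daBits via the identity $b=\rho\oplus c=\rho+c-2\rho c$ with $c$ public. The weighted bit-sum then gives shares of $\mathsf{DecCoeff}(v_\xi)\bmod t$. $\mathsf{PP}$ adds $a_\xi$ to its share and sends that share to $\mathsf{CSP}$, which outputs $w_\xi$. This matches Eq.~\ref{eq:presence-masked-decoding1}, and $\mathsf{PP}$ receives nothing.

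\textbf{Step 3: security.} I would write simulators for a semi-honest $\mathsf{PP}$ and $\mathsf{CSP}$. Opened values $v_\xi-r$ and Beaver/GMW masked differences are uniform because $r$ and the triple masks are fresh. $\mathsf{PP}$'s final message $\text{share}+a_\xi$ is uniform given $\mathsf{CSP}$'s other shares, so $\mathsf{CSP}$'s view is simulatable from $w_\xi$ alone.

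\textbf{Main obstacle.} The main difficulty is getting the modular wraparound right: the sum of two $\mathbb Z_{Q'}$ shares versus centered rounding, and the carry into $\mathbb Z_t$ at the boundary $\operatorname{can}=(Q'-1)/2$. I would first handle it by computing a public-constant comparison bit for $v\ge (Q'+1)/2$ inside the circuit and subtracting $Q'$ conditionally before rounding, then check the edge intervals explicitly.
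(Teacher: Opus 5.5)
\textbf{Gap: the proof is about a different protocol.} Your argument establishes correctness of a protocol you reconstructed from the announced ingredients. It does not cover the $\Pi_{\mathsf{SD}}^{\mathsf{pre}}$ and $\Pi_{\mathsf{SD}}^{\mathsf{on}}$ the lemma refers to, and the two differ exactly where the correctness argument lives.

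In the paper, the mask and opening work as follows. Each party holds an additive share $r_\xi^A\in\mathbb Z_{Q'}$ of the mask. The parties open $\bar v_\xi=v_\xi+r_\xi\pmod{Q'}$, and both compute $(k_{\bar v,\xi},\rho_{\bar v,\xi})=\mathsf{ScaleDiv}_{Q',t}(\bar v_\xi)$ in the clear. Preprocessing has already supplied $\langle k_{r,\xi}\rangle_t$ and $\llbracket\rho_{r,\xi}\rrbracket_{\mathsf B}$ with $t r_\xi^+=k_{r,\xi}Q'+\rho_{r,\xi}$.

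No Boolean sharing of $v_\xi$ is ever formed, and no multiplication-by-$t$, division, or rounding circuit is evaluated. GMW computes only three quantities:
\begin{itemize}
\item the borrow $b_\xi=\mathbf 1[\rho_{\bar v,\xi}<\rho_{r,\xi}]$;
\item the remainder $\rho_{v,\xi}=(\rho_{\bar v,\xi}-\rho_{r,\xi})\bmod Q'$;
\item the rounding bit $h_\xi=\mathbf 1[\rho_{v,\xi}\ge\lceil Q'/2\rceil]$.
\end{itemize}
Two daBits convert $b_\xi$ and $h_\xi$, and the output shares are $m_\xi^A=k_{\bar v,\xi}^A-k_{r,\xi}^A-b_\xi^A+h_\xi^A$.

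Your Step~2 mentions arithmetic shares of decoding-relevant quantities of $r$, but your own online phase never uses them. In the actual protocol they carry the mask's scaled quotient. A gate-by-gate check of an A2B adder followed by a decoding circuit therefore does not show that $m_\xi^{\mathsf{PP}}+m_\xi^{\mathsf{CSP}}=\mathsf{DecCoeff}_{Q',t}(v_\xi)$.

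\textbf{The missing idea.} What you need is the Euclidean-division identity that justifies this shortcut. Write $v_\xi^+=\bar v_\xi^+-r_\xi^++\lambda_\xi Q'$ for some $\lambda_\xi\in\{0,1\}$, and $\rho_{v,\xi}=\rho_{\bar v,\xi}-\rho_{r,\xi}+b_\xi Q'\in[0,Q')$. Multiplying by $t$ and substituting the two scaled decompositions gives
\[
t v_\xi^+=(k_{\bar v,\xi}-k_{r,\xi}-b_\xi+\lambda_\xi t)\,Q'+\rho_{v,\xi}.
\]
Hence nearest-integer rounding of $t v_\xi^+/Q'$ is this quotient plus $h_\xi$ (there is no tie since $Q'$ is odd), and $\lambda_\xi t\equiv 0\pmod t$.

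The same trick at preprocessing, $t r_\xi^+=(k_\xi^{\mathsf{PP}}+k_\xi^{\mathsf{CSP}}+c_\xi-t\omega_\xi)Q'+\rho_{r,\xi}$, is what makes the quotient shares reconstruct $k_{r,\xi}$ modulo $t$.

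\textbf{Your main obstacle disappears.} $\operatorname{ctr}_{Q'}(v_\xi)$ and $v_\xi^+$ differ by $0$ or $Q'$, so the scaled values differ by $0$ or $t$. That shift commutes with rounding and vanishes modulo $t$, so no in-circuit comparison against $(Q'+1)/2$ is needed.

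Your generic A2B-plus-Boolean-decoding design would be a valid realization of $\mathcal F_{\mathsf{SD}}$ in its own right. It pays for a constant-division circuit and per-bit output conversions, and it is not the protocol whose correctness the lemma asserts. Your Step~3 simulators belong to Lemma~\ref{lem:seldec-realization-security}, not to this functional-correctness statement.
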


\begin{lemma}[Aggregate Correctness]
\label{lem:presence-decoding-aggregation}
By Lemma~\ref{lem:seldec-realization-correctness} and Eq.~\ref{eq:presence-masked-decoding}, selected decoding yields additive shares of every $x_\xi$, while zero-test conversion yields additive shares of $\beta_\xi x_\xi$:
$$
x_\xi^{\mathsf{PP}}+x_\xi^{\mathsf{CSP}}= x_\xi\pmod t,\,\,y_\xi^{\mathsf{PP}}+y_\xi^{\mathsf{CSP}}= \beta_\xi x_\xi\pmod t
$$
If $\mathcal I\neq\varnothing$, the Beaver tree reconstructs the aggregate in Eq.~\ref{eq:presence-aggregate-product}. If $\mathcal I=\varnothing$, the tree is skipped and the empty product is defined as $Y=1$. In either case, 
$
Y=0
\Longleftrightarrow
\exists\,\xi\in\mathcal I:x_\xi=0.
$
\end{lemma}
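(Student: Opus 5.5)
The proof splits into three parts, taken in order: the two share identities, the Beaver-tree reconstruction, and the zero-product equivalence.

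\emph{First share identity.} By Lemma~\ref{lem:seldec-realization-correctness}, $\mathsf{CSP}$ receives $w_\xi=\mathsf{DecCoeff}_{Q',t}(v_\xi)+a_\xi \pmod t$. BFV decoding correctness at $Q'$ gives $\mathsf{DecCoeff}_{Q',t}(v_\xi)=z_j[s]\pmod t$. Substituting $a_\xi=-\theta_\xi-\alpha_\xi^{\mathsf{PP}}$ yields Eq.~\ref{eq:presence-masked-decoding}, i.e.\ $w_\xi=x_\xi-\alpha_\xi^{\mathsf{PP}}$. Hence
\[
x_\xi^{\mathsf{PP}}+x_\xi^{\mathsf{CSP}}=\alpha_\xi^{\mathsf{PP}}+w_\xi=x_\xi \pmod t .
\]

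\emph{Second share identity.} Here I expand $y_\xi^{\mathsf{PP}}+y_\xi^{\mathsf{CSP}}$ directly. It equals
\[
\mu_\xi^{\mathsf{PP}}+\mu_\xi^{\mathsf{CSP}}+\beta_\xi\bigl(w_\xi-\alpha_\xi^{\mathsf{CSP}}\bigr).
\]
Insert the OLE token invariant $\mu_\xi^{\mathsf{PP}}=\beta_\xi(\alpha_\xi^{\mathsf{PP}}+\alpha_\xi^{\mathsf{CSP}})-\mu_\xi^{\mathsf{CSP}}$. The $\mu_\xi^{\mathsf{CSP}}$ terms cancel, as do the $\alpha_\xi^{\mathsf{CSP}}$ terms, leaving $\beta_\xi(\alpha_\xi^{\mathsf{PP}}+w_\xi)=\beta_\xi x_\xi$ by the first identity. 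This relies on the OLE returning exactly $a^{\mathsf{OLE}}x^{\mathsf{OLE}}+b^{\mathsf{OLE}}$ with the stated sender and receiver inputs, and on the token being consumed for the same $\xi$.

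\emph{Beaver-tree reconstruction.} If $\mathcal I\neq\varnothing$, I argue by induction over the levels of the balanced binary tree. Each internal node applies one Beaver multiplication with a fresh triple, and by the identity $z^{\mathsf 1}+z^{\mathsf 2}=xy$ from Section~\ref{sec:beaver} it outputs additive shares of the product of its children's shared values. By induction the root holds shares of $\prod_{\xi}y_\xi$. Since multiplication in $\mathbb Z_t$ is commutative, this equals $\prod_\xi\beta_\xi\prod_\xi x_\xi$, which is Eq.~\ref{eq:presence-aggregate-product}. Reconstruction to $\mathsf{PP}$ is just summing the two root shares. The tree uses $K-1$ internal nodes and depth $\lceil\log_2K\rceil$, but these counts are bookkeeping and not needed for correctness.

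\emph{Zero-product equivalence.} This is the only nonroutine point, and it rests on $t$ being prime. Since $t$ is prime, $\mathbb Z_t$ is a field with no zero divisors, so a finite product is zero iff some factor is zero. Each $\beta_\xi$ is sampled from $\mathbb Z_t^*$, so $\prod\beta_\xi\neq0$. Therefore
\[
Y=0 \iff \prod_\xi x_\xi=0 \iff \exists\,\xi:\ x_\xi=0 .
\]
When $\mathcal I=\varnothing$, we have $Y=1\neq0$, and the existential over the empty set is false, so the equivalence still holds.

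No-wrap issues (Lemma~\ref{lem:presence-no-wrap}) are not needed at this stage. The statement concerns $x_\xi$ as an element of $\mathbb Z_t$, and the link between $x_\xi=0$ and an actual keyword match is deferred to the end-to-end correctness theorem.
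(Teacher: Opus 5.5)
Your proposal is correct and follows the paper's proof essentially step for step: shares of $x_\xi$ via Lemma~\ref{lem:seldec-realization-correctness} and substitution of $a_\xi$, the token-invariant cancellation that yields shares of $\beta_\xi x_\xi$, inductive Beaver-tree correctness, and the no-zero-divisor argument in the prime field, with the empty case $Y=1$ handled separately. The one detail the paper makes explicit and you leave implicit is that $v_\xi=p_\xi^{\mathsf{PP}}+p_\xi^{\mathsf{CSP}}$ equals coefficient $s$ of $c_{0,j}^{Q'}+c_{1,j}^{Q'}\mathsf{sk}_{R,e}$ modulo $Q'$ by the additive key-sharing relation; this identification is what allows BFV decoding correctness to be applied to $v_\xi$.
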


\begin{theorem}[End-to-End Correctness of \ProtocolName{}]
\label{thm:presence-e2e-correctness}
Let $M_{\mathsf S\rightarrow\mathsf R}$ denote an $L$-bit encoded message history.
For \ProtocolName{} and BFV parameters $L_P,L_{\max},N\in8\mathbb N$, $L_P\leq L_{\max}\leq N$, prime $t>4L_{\max}+2$, and correct BFV decoding at $Q'$, \ProtocolName{} satisfies:
\begin{equation}
\begin{aligned}
\mathsf{present}=1
\Longleftrightarrow
\exists\,M_{\mathsf S\rightarrow\mathsf R},\
\exists\,u\in\{0,8,\ldots,L-L_P\}:\\
\!\!\!\!\!\!\!\mathsf{Ham}_{\gamma}
(
(
\mathbf m_{\mathsf S\rightarrow\mathsf R}[u],
\ldots,
\mathbf m_{\mathsf S\rightarrow\mathsf R}[u+L_P-1]
),
\mathbf p
)
=0
\end{aligned}
\label{eq:presence-e2e-correctness}
\end{equation}
Therefore, \ProtocolName{} produces neither false positives nor false negatives for byte-aligned case-insensitive keyword search.
\end{theorem}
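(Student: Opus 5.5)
The plan is to chain Lemmas~\ref{lem:presence-fragment-coverage}--\ref{lem:presence-decoding-aggregation} into a single equivalence. The central intermediate claim is
\[
\mathsf{present}=1 \iff \exists\,\xi\in\mathcal I:\ x_\xi=0 \iff \exists\,u\in\{0,8,\ldots,L-L_P\}:\ \mathsf{Ham}_\gamma(\mathbf m[u..u+L_P-1],\mathbf p)=0 .
\]
I read the existential over $M_{\mathsf S\rightarrow\mathsf R}$ in Eq.~\ref{eq:presence-e2e-correctness} as ranging over the messages in the stored history, each handled fragment-wise. The argument is the same per message, and the final product over all retained indices of all messages realizes the disjunction.

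The first equivalence is exactly Lemma~\ref{lem:presence-decoding-aggregation}. It applies because Lemma~\ref{lem:seldec-realization-correctness} together with BFV decoding correctness at $Q'$ gives $\mathsf{DecCoeff}_{Q',t}(v_\xi)=z_j[s] \bmod t$, so that $w_\xi+\alpha^{\mathsf{PP}}_\xi=x_\xi$. Each $\beta_\xi$ is nonzero in the field $\mathbb Z_t$, and $\mathsf{PP}$ outputs $\mathbf 1[Y=0]$.

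For the second equivalence I would work through the index bijection. Each $\xi=(j,s)\in\mathcal I$ corresponds to the absolute end position $e=\mathsf{off}_j+s\equiv 7 \pmod 8$, and by the retention part of Lemma~\ref{lem:presence-fragment-coverage} each byte-aligned end position is retained exactly once. I would then split on $s$.

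\emph{Case $s\ge L_P-1$.} Here $s_0=s-L_P+1\ge 0$, and I must check that $\mathsf{Win}_{j,s_0}$ lies inside the logical fragment, i.e.\ $s<T_j$. This holds since $\mathcal I$ requires $s<T_j$. The window is therefore the genuine message window at $u=\mathsf{off}_j+s_0$, which is byte-aligned because $e\equiv 7$ and $L_P\in 8\mathbb N$. Lemmas~\ref{lem:presence-correlation-correctness} and~\ref{lem:presence-no-wrap} then give $x_\xi=0$ iff $\mathsf{Ham}_\gamma=0$, using $t>4L_{\max}+2$ so that no nonzero integer in the range reduces to $0 \bmod t$.

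\emph{Case $s<L_P-1$.} Lemma~\ref{lem:presence-no-wrap} gives $x_\xi\neq 0$. This establishes soundness: every zero corresponds to a true match.

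For completeness I go in the converse direction. Given a matching $u$, Lemma~\ref{lem:presence-fragment-coverage} places the window inside some fragment $j$. Its end position $u+L_P-1$ is byte-aligned and is retained as a unique $\xi'=(j',s')$. The remaining point is that this retained representative also has $s'\ge L_P-1$, so the window is complete there.

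I expect this last step to be the main obstacle, because the pruning retains the *earliest* fragment containing the end position, and that fragment need not be the one that contains the whole window. I would argue it as follows. If $j'=0$, then $\mathsf{off}_0=0$ and $s'=u+L_P-1\ge L_P-1$. If $j'>0$, pruning forces $s'\ge L_{\max}-1\ge L_P-1$. In both cases the window start $s'-L_P+1\ge 0$ lies in fragment $j'$, the coefficient at $s'$ encodes that complete window, and the case analysis above yields $x_{\xi'}=0$. Combining both directions over all messages gives Eq.~\ref{eq:presence-e2e-correctness}, and absence of false positives and false negatives follows.
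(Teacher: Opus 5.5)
Your proposal is correct and matches the paper's proof essentially step for step. Soundness goes through Lemma~\ref{lem:presence-decoding-aggregation}, then Lemma~\ref{lem:presence-no-wrap} to force $s\ge L_P-1$, then byte alignment of $u=\mathsf{off}_j+s-(L_P-1)$, then Lemma~\ref{lem:presence-correlation-correctness}; completeness goes through coverage and unique retention, then correlation semantics and aggregation. The step you flag as the main obstacle (that the retained representative has $s'\ge L_P-1$) is exactly the $j^\star=0$ versus $j^\star>0$ argument at the end of the paper's proof of Lemma~\ref{lem:presence-fragment-coverage}, so you can cite that lemma rather than re-derive it.
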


\section{Security Analysis  of \ProtocolName{}}
\label{sec:seek-security}

We analyze \ProtocolName{} in the standard real-world/ideal-world simulation paradigm under the threat model of Section~\ref{subsec:ThreatModel}. Before keyword search, receiver-provided artifacts undergo the functional setup validation of Section~\ref{subsec:functional-validation} and any test failure rejects the setup and flags $\mathsf{R}$ accordingly. The analysis therefore conditions on an accepted and consistent setup 
as specified in Section~\ref{subsec:setup}. Full proofs are deferred to Appendix~\ref{app:seek-security}.

\noindent \ding{110}
\paragraphNew{\DeepUL{Prescribed Leakage}}
The prescribed leakage includes the BFV public parameters $(N,t,Q,Q')$, public BFV keys, $L_{\max}$, and public corpus geometry, which includes message and fragment counts and identifiers, fragment offsets and lengths, and the retained-candidate count $K=|\mathcal I|$. For $\mathsf{CSP}$, the leakage excludes the query-dependent values $P$, $L_P$, $\gamma$, $L_{\mathsf{fix}}$, and the search result. For $\mathsf{PP}$, $P$ and $L_P$ are ideal inputs and beyond those inputs and the final presence bit, $\mathsf{PP}$ learns no plaintext message, correlation coefficient, match count, or location.

\begin{definition}[Ideal Functionality]
\label{def:seek-ci-functionality}
Ideal functionality $\mathcal F_{\mathsf{SEEK}}$ is initialized with retained plaintext messages and public corpus geometry. Upon receiving keyword $P$ from $\mathsf{PP}$, it computes $\mathsf{present}\in\{0,1\}$ according to Eq.~\ref{eq:presence-e2e-correctness}. It releases only $\mathsf{present}$ to $\mathsf{PP}$, no search result to $\mathsf{CSP}$ or $\mathsf{R}$, and otherwise only the prescribed leakage. Repeated admitted queries release the corresponding presence-bit sequence only to $\mathsf{PP}$.
\end{definition}

\begin{lemma}[Selected-Decoding Privacy]
\label{lem:seldec-privacy}
Under the semi-honest-secure 2PC and preprocessing primitives of Section~\ref{sec:beaver},
$\Pi_{\mathsf{SD}}^{\mathsf{pre}}$ and $\Pi_{\mathsf{SD}}^{\mathsf{on}}$ securely realize $\mathcal F_{\mathsf{SD}}$ against a static semi-honest adversary corrupting at most one of $\mathsf{PP}$ and $\mathsf{CSP}$.
\end{lemma}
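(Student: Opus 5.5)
The plan is a standard simulation argument for each corrupted party. We write $\Pi_{\mathsf{SD}}^{\mathsf{pre}}$ and $\Pi_{\mathsf{SD}}^{\mathsf{on}}$ as a composition of ideal sub-functionalities (GMW Boolean circuit evaluation, daBit-based Boolean-to-arithmetic conversion, and Beaver multiplication over $\mathbb Z_t$), with the structure fixed in Appendix~\ref{app:seldec-realization}. By the sequential modular composition theorem for semi-honest security, it suffices to prove security in the hybrid model where each primitive is replaced by its ideal functionality. The same primitives are realized by the semi-honest-secure protocols of Section~\ref{sec:beaver}, using fresh, independent, consume-once triples and daBits. Functional correctness is already given by Lemma~\ref{lem:seldec-realization-correctness}, so the joint output distribution in the hybrid and ideal worlds agrees. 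What remains is simulating each party's view given only its input and output.

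\emph{Corrupt $\mathsf{PP}$.} Its inputs are $(p_\xi^{\mathsf{PP}},a_\xi)_{\xi\in\mathcal I}$ and it receives no output. In the hybrid model, its view consists of three kinds of values:
\begin{itemize}
\item its own shares of the joint mask produced by $\Pi_{\mathsf{SD}}^{\mathsf{pre}}$;
\item values opened during the protocol, namely Beaver differences $d,f$, GMW AND-gate openings, and masked daBit openings;
\item its output shares from sub-functionalities.
\end{itemize}
Each opened value is a secret masked additively (in $\mathbb Z_t$, $\mathbb Z_{Q'}$, or $\mathbb F_2$) by a fresh uniform component of a triple or daBit that $\mathsf{PP}$ does not fully hold. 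Each received share is uniform and independent of the other party's input. The simulator $\mathcal S_{\mathsf{PP}}$ therefore samples all these values uniformly in the appropriate domains, and the simulated view is identically distributed to the real one.

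\emph{Corrupt $\mathsf{CSP}$.} Its input is $(p_\xi^{\mathsf{CSP}})$ and its output is $(w_\xi)$. The simulator $\mathcal S_{\mathsf{CSP}}$ simulates all intermediate openings uniformly, as above. It must then program $\mathsf{CSP}$'s final output shares so that, once reconstructed, they equal the given $w_\xi$. To do this, $\mathcal S_{\mathsf{CSP}}$ chooses the last incoming message (the $\mathsf{PP}$-side share of $w_\xi$, or the last conversion opening) as $w_\xi$ minus $\mathsf{CSP}$'s simulated local share. This is distributed exactly as in the real protocol, because the $\mathsf{PP}$-side share is masked by fresh randomness independent of $\mathsf{CSP}$'s view. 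We also note that $w_\xi$ itself is uniform from $\mathsf{CSP}$'s perspective, since $a_\xi$ contains the fresh $\alpha_\xi^{\mathsf{PP}}$. This gives an extra consistency check but is not needed for the simulation, which is handed $w_\xi$ directly.

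\emph{Main obstacle.} I expect the hard step to be the interior of the decoding circuit: computing $\lfloor (t/Q')\operatorname{ctr}_{Q'}(v_\xi)\rceil$ from additive shares modulo $Q'$. This requires handling the wrap bit of $p^{\mathsf{PP}}+p^{\mathsf{CSP}}$ over the integers and the centering, both done in Boolean shares, and then converting to $\mathbb Z_t$ via daBits. The hybrid argument needs every opened value there, for example the masked value $v_\xi+r\bmod Q'$ under the joint mask $r$ from $\Pi_{\mathsf{SD}}^{\mathsf{pre}}$, to be uniform and independent of $v_\xi$. So I would first check that the preprocessing mask is uniform modulo $Q'$ and that its Boolean decomposition is shared consistently. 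Any carry or comparison information must stay inside GMW shares. Once that is verified, the remaining hybrids follow from the uniform-masking principle, and a union over the batch $\mathcal I$, whose instances use disjoint fresh randomness, completes the proof.
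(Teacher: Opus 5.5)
Your proposal is correct and follows the paper's argument in substance. The masked opening $\bar v_\xi=v_\xi+r_\xi \pmod{Q'}$ is uniform because of the honest party's fresh mask share, and the GMW and daBit sub-views are handled by their semi-honest simulators. A corrupted $\mathsf{PP}$ receives no output, and for a corrupted $\mathsf{CSP}$ the incoming rebased share is programmed as $\mathsf{reb}_\xi=w_\xi-m_\xi^{\mathsf{CSP}}$. Functional correctness and parallel composition over $\mathcal I$ then complete the proof.

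The difference is only organizational. The paper factors the argument through the intermediate functionality $\mathcal F_{\mathsf{SD}}^{\mathsf{pre}}$, using Lemma~\ref{lem:seldec-prep-realization} for $\Pi_{\mathsf{SD}}^{\mathsf{pre}}$ and Lemma~\ref{lem:seldec-realization-security} for $\Pi_{\mathsf{SD}}^{\mathsf{on}}$ in the $\mathcal F_{\mathsf{SD}}^{\mathsf{pre}}$-hybrid, and concludes by sequential composition; you instead flatten everything into a single hybrid.

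Two descriptive slips are harmless to your argument. First, $\Pi_{\mathsf{SD}}$ uses no arithmetic Beaver triples; those belong to aggregation. Second, the wrap and centering are not handled by a GMW wrap bit of $p^{\mathsf{PP}}+p^{\mathsf{CSP}}$. They are absorbed by opening $\bar v_\xi$ and applying $\mathsf{ScaleDiv}_{Q',t}$ locally, with only the borrow $b_\xi$ and rounding bit $h_\xi$ computed in GMW.
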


\begin{lemma}[Presence-Only Aggregate Leakage]
\label{lem:presence-output-privacy}
Let $V_{\mathsf{PP}}^0$ denote a corrupted $\mathsf{PP}$'s complete view in the input-independent preprocessing-hybrid model after invoking $\mathcal F_{\mathsf{SD}}$ and immediately before the Beaver product tree, using fresh, independent zero-test tokens as specified in Section~\ref{subsec:presence-zero-target}. Conditioned on $V_{\mathsf{PP}}^0$, public $K$, and $\mathsf{present}=\mathbf 1[Y=0]$, the aggregate satisfies:
\[
Y\sim
\begin{cases}
1, & K=0\;(\text{necessarily }\mathsf{present}=0)\\
0, & K\geq1\ \text{and}\ \mathsf{present}=1\\
U(\mathbb Z_t^*), & K\geq1\ \text{and}\ \mathsf{present}=0
\end{cases}
\]
Here $U(\mathbb Z_t^*)$ is uniform over the nonzero field elements. For $K=0$, the aggregation transcript is empty and $Y=1$ is fixed by the protocol definition. For $K\geq1$, semi-honest security of Beaver product~\cite{beaver1991efficient} makes the aggregation transcript and $Y$ computationally simulatable from $V_{\mathsf{PP}}^0$, $K$, and $\mathsf{present}$.
\end{lemma}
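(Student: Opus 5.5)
The plan is to describe $\mathsf{PP}$'s view $V_{\mathsf{PP}}^0$ exactly, and then show that, conditioned on it, each $y_\xi$ is either the fixed value $0$ or a uniform nonzero element that is independent of the other factors. The distribution of the product $Y$ then follows from elementary facts about the multiplicative group $\mathbb Z_t^*$.

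\textbf{Step 1 (the view).} In the $\mathcal F_{\mathsf{SD}}$- and OLE-hybrid model, $\mathsf{PP}$ sees the following:
\begin{itemize}
\item its inputs and its key share $\mathsf{sk}^{\mathsf{PP}}_{R,e}$;
\item the public ciphertext components $c^{Q'}_{1,j}$ and the geometry;
\item its own samples $\alpha^{\mathsf{PP}}_\xi$;
\item the OLE outputs $\mu^{\mathsf{PP}}_\xi=\beta_\xi(\alpha^{\mathsf{PP}}_\xi+\alpha^{\mathsf{CSP}}_\xi)-\mu^{\mathsf{CSP}}_\xi$.
\end{itemize}
It receives nothing from $\mathcal F_{\mathsf{SD}}$ (Lemma~\ref{lem:seldec-realization-correctness}).

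Because $\mu^{\mathsf{CSP}}_\xi$ is uniform and used only once, each $\mu^{\mathsf{PP}}_\xi$ is uniform on $\mathbb Z_t$ and independent of $\beta_\xi$ and $\alpha^{\mathsf{CSP}}_\xi$. Consequently $(\beta_\xi)_{\xi\in\mathcal I}$ remain i.i.d.\ uniform on $\mathbb Z_t^*$ given $V^0_{\mathsf{PP}}$. They are also independent of the corpus, the query and the values $x_\xi$, since tokens are sampled independently of all inputs.

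The ciphertext parts $c_{1,j}$ are themselves statistically or computationally independent of these token variables. This step requires care, but only in the sense that the $\beta_\xi$ never enter anything $\mathsf{PP}$ sees except through the masked $\mu^{\mathsf{PP}}_\xi$.

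\textbf{Step 2 (the algebra).} By Lemma~\ref{lem:presence-decoding-aggregation}, $Y=\prod_\xi\beta_\xi x_\xi$. The case analysis is then as follows.
\begin{itemize}
\item If $K=0$, the protocol fixes $Y=1$. Here $\mathsf{present}=0$ necessarily, because $\mathcal I=\varnothing$ and so Eq.~\ref{eq:presence-e2e-correctness} has no witness (via Theorem~\ref{thm:presence-e2e-correctness}, or directly by Lemma~\ref{lem:presence-fragment-coverage}).
\item If $K\ge1$ and $\mathsf{present}=1$, some $x_\xi=0$, so $Y=0$ deterministically.
\item If $K\ge1$ and $\mathsf{present}=0$, every $x_\xi\in\mathbb Z_t^*$ by field structure and Lemma~\ref{lem:presence-no-wrap}. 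Fix any $\xi_0$. Then $Y=\beta_{\xi_0}\cdot c$, where $c=x_{\xi_0}\prod_{\xi\ne\xi_0}\beta_\xi x_\xi$ is nonzero and independent of $\beta_{\xi_0}$ given the view. Multiplication by a fixed nonzero $c$ is a bijection of $\mathbb Z_t^*$, so $Y\sim U(\mathbb Z_t^*)$, whatever the values $x_\xi$ and hence the match structure are.
\end{itemize}

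\textbf{Step 3 (simulation).} For $K\ge1$, the simulator takes $V^0_{\mathsf{PP}}$, $K$ and $\mathsf{present}$, and samples $Y$ from the distribution in Step 2. It then invokes the semi-honest Beaver simulator for the balanced tree. In each layer, the opened values $d=x-u$ and $f=y-v$ are uniform because the triples are fresh and uniform. These openings are simulated by uniform elements, and $\mathsf{CSP}$'s final share is set to $Y$ minus $\mathsf{PP}$'s locally computed share. Standard composition (one hybrid per triple) gives indistinguishability.

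\textbf{Main obstacle.} The delicate point is Step 1: justifying that conditioning on the full $V^0_{\mathsf{PP}}$ leaves the $\beta_\xi$ uniform and jointly independent of the $x_\xi$. I would argue this by writing the joint density of $(\beta_\xi,\alpha^{\mathsf{CSP}}_\xi,\mu^{\mathsf{CSP}}_\xi)$ and observing that, for each fixed $(\beta_\xi,\alpha^{\mathsf{CSP}}_\xi)$, the map $\mu^{\mathsf{CSP}}_\xi\mapsto\mu^{\mathsf{PP}}_\xi$ is a bijection. The pair $(\beta_\xi,\alpha^{\mathsf{CSP}}_\xi)$ is therefore independent of $\mu^{\mathsf{PP}}_\xi$.
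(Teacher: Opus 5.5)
Your proposal is correct and follows essentially the same route as the paper: because $\mu^{\mathsf{CSP}}_\xi$ is uniform, $(\alpha^{\mathsf{PP}}_\xi,\mu^{\mathsf{PP}}_\xi)$ is independent of $\beta_\xi$, so the $\beta_\xi$ remain i.i.d.\ uniform on $\mathbb Z_t^*$ given $V^0_{\mathsf{PP}}$; the three-case analysis of $Y=\prod_\xi\beta_\xi x_\xi$ and a call to $\mathcal S^{\mathsf{PP}}_{\mathsf{Prod}}$ with the sampled output then complete the argument. Isolating a single factor $\beta_{\xi_0}$ instead of the paper's ``product of independent uniforms is uniform'' makes no material difference, and the ciphertext components are in fact exactly (not merely computationally) independent of the tokens, since the tokens are sampled input-independently.
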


\begin{theorem}[Semi-Honest Privacy of \ProtocolName{}]
\label{thm:seek-semi-honest}
Under the adaptive multi-message IND-CPA security of BFV, the semi-honest security of batched OLE~\cite{baum2020ole}, and the component guarantees of Lemmas~\ref{lem:seldec-privacy} and~\ref{lem:presence-output-privacy}, the post-setup execution of \ProtocolName{}, including the encrypted corpus, the concrete $\Pi_{\mathsf{SD}}^{\mathsf{pre}}$ and
$\Pi_{\mathsf{SD}}^{\mathsf{on}}$ protocols, and all subsequent search phases, securely realizes $\mathcal F_{\mathsf{SEEK}}$ with the prescribed leakage against a static semi-honest adversary corrupting at most one of $\mathsf{PP}$ and $\mathsf{CSP}$. The corrupted party's view is simulatable from its local setup state, ideal input/output, and prescribed leakage. Consequently, beyond its query and prescribed leakage, $\mathsf{PP}$ learns only $\mathsf{present}$ bit, while $\mathsf{CSP}$ receives no search result. This guarantee extends to any polynomially bounded and potentially adaptive sequence of authorized queries, provided each execution employs fresh independent one-time preprocessing: the information leakage is limited to the ideal presence-bit sequence and its logical implications.
\end{theorem}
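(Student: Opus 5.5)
We prove Theorem~\ref{thm:seek-semi-honest} by constructing two separate simulators, $\mathcal S_{\mathsf{CSP}}$ and $\mathcal S_{\mathsf{PP}}$, one for each possible static corruption. Each indistinguishability argument is a hybrid sequence. We work in the preprocessing-hybrid model: batched OLE is replaced by its ideal functionality, $\Pi_{\mathsf{SD}}^{\mathsf{pre}},\Pi_{\mathsf{SD}}^{\mathsf{on}}$ by $\mathcal F_{\mathsf{SD}}$ (via Lemma~\ref{lem:seldec-privacy}), and the Beaver product tree by an ideal multiplication functionality. The real protocol is recovered at the end by sequential composition, since each substituted primitive is semi-honest secure and all preprocessing is fresh and input-independent.

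\textbf{Corrupted $\mathsf{CSP}$.} The view of $\mathsf{CSP}$ consists of:
\begin{itemize}
\item its local state $(\mathsf{sk}^{\mathsf{CSP}}_{R,e},\mathsf{pk},\mathsf{rlk})$;
\item the ciphertexts $c_{M,j}$ and $c_P$;
\item its own OLE sender randomness $(\alpha^{\mathsf{CSP}}_\xi,\mu^{\mathsf{CSP}}_\xi,\beta_\xi)$;
\item the outputs $w_\xi$ from $\mathcal F_{\mathsf{SD}}$;
\item its messages and shares in the Beaver tree.
\end{itemize}
The simulator $\mathcal S_{\mathsf{CSP}}$ encrypts all-zero fragment polynomials of the public geometry and an all-zero trapdoor. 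It samples $w_\xi\gets\mathbb Z_t$ independently, and simulates the Beaver tree with uniform opened differences and no output.

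The hybrids proceed as follows:
\begin{enumerate}
\item Replace each $w_\xi$ by a uniform value. This is an identical distribution, because $\alpha^{\mathsf{PP}}_\xi$ is fresh, uniform, and unseen by $\mathsf{CSP}$ (Eq.~\ref{eq:presence-masked-decoding}).
\item Replace the Beaver openings $d,f$ by uniform values, which again gives an identical distribution.
\item Replace the trapdoor and the fragment ciphertexts by encryptions of zero.
\end{enumerate}

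\textbf{The main obstacle is step 3.} $\mathsf{CSP}$ holds $\mathsf{sk}^{\mathsf{CSP}}_{R,e}$, so we must argue that IND-CPA still applies. The key observation is that $\mathsf{sk}^{\mathsf{CSP}}_{R,e}=\mathsf{sk}-\mathsf{sk}^{\mathsf{PP}}$ is uniform over $R_Q$ and independent of the key tuple, because $\mathsf{sk}^{\mathsf{PP}}$ is uniform and hidden from $\mathsf{CSP}$. The reduction can therefore sample this share itself, and then invoke adaptive multi-message IND-CPA with $\mathsf{rlk}$ published.

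This argument only goes through because, once $\mathcal F_{\mathsf{SD}}$ is ideal, no later message depends on the decryption through anything other than the masked $w_\xi$. The ordering of the hybrids is therefore essential: $\mathcal F_{\mathsf{SD}}$ and the Beaver tree must be idealized before the IND-CPA hop. The resulting view is independent of $P$, $L_P$, $\gamma$, $L_{\mathsf{fix}}$ and $\mathsf{present}$, matching the prescribed leakage.

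\textbf{Corrupted $\mathsf{PP}$.} The view of $\mathsf{PP}$ consists of:
\begin{itemize}
\item its input $P$;
\item $\mathsf{sk}^{\mathsf{PP}}_{R,e}$;
\item the components $c^{Q'}_{1,j}$;
\item its OLE outputs $\mu^{\mathsf{PP}}_\xi$;
\item nothing from $\mathcal F_{\mathsf{SD}}$;
\item the Beaver transcript and $Y$.
\end{itemize}

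The simulator $\mathcal S_{\mathsf{PP}}$ handles these in turn. It replaces the $c_1$-components with those of encryptions of zero, which is justified by IND-CPA: $\mathsf{PP}$ lacks $c_0$ and $\mathsf{sk}^{\mathsf{CSP}}$, and its share is uniform, so the reduction can generate it. It samples $\mu^{\mathsf{PP}}_\xi$ uniformly, which is exact, since $\mu^{\mathsf{CSP}}_\xi$ is a fresh uniform one-time pad in the OLE output. It then simulates the Beaver openings uniformly. Finally, it programs the last opened share so that $Y$ is distributed exactly as in Lemma~\ref{lem:presence-output-privacy}, given $K$ and the bit $\mathsf{present}$ obtained from $\mathcal F_{\mathsf{SEEK}}$.

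Correctness of the ideal output follows from Theorem~\ref{thm:presence-e2e-correctness}. This ensures that the joint distribution of the real output and the view matches the ideal one, and not merely the view's marginal.

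\textbf{Multiple queries.} We extend the argument to a polynomial, adaptive sequence of queries by a hybrid over executions. Each execution uses fresh OLE tokens, fresh selected-decoding masks and fresh Beaver triples, so the per-execution simulators compose. The corpus ciphertexts are shared across executions, but they are simulated once by multi-message IND-CPA. The only cross-query dependence remaining is the sequence of presence bits supplied by the ideal functionality.
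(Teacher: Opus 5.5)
Your proposal is correct and follows the paper's proof essentially step for step. It uses the same preprocessing-hybrid setting and the same hybrid order: first idealize selected decoding and replace $w_\xi$ by uniform values using the hidden $\alpha_\xi^{\mathsf{PP}}$; then simulate the Beaver tree with $Y$ sampled from $K$ and $\mathsf{present}$ as in Lemma~\ref{lem:presence-output-privacy}; only then swap hidden ciphertexts for dummy encryptions, using that the corrupted party's key share is uniform and independent of the key tuple. It also handles adaptive query sequences the same way, with one persistent dummy corpus and fresh per-query preprocessing.

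The paper's proof is more explicit on two points. It treats the $K=0$ case separately. For a corrupted $\mathsf{PP}$, it obtains the simulated $c_{1,j}^{Q'}$ by evaluating $\mathsf{PP}$'s genuine trapdoor against the dummy corpus. The IND-CPA hop is therefore applied to the corpus ciphertexts, with the correlation treated as efficient post-processing. You should state this construction explicitly rather than writing ``$c_1$-components of encryptions of zero''.
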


\section{Experimental Evaluation}
\label{sec:eval}

We evaluate \ProtocolName{} across three complementary dimensions. We first microbenchmark the computation and communication costs of individual protocol stages.
We then measure E2E latency using web and mobile application prototypes.
Finally, we assess the 
feasibility of repeated presence-only queries.

\smallskip
\noindent \ding{110}
\paragraphNew{\DeepUL{Implementation and Baseline}}
We implement \ProtocolName{} and the fragmentation-based baseline~\cite{1} using \textsf{Microsoft SEAL}~\cite{sealcrypto} and instantiate all protocol roles $(\mathsf{S}/\mathsf{R}/\mathsf{PP}/\mathsf{CSP})$ 
to isolate protocol design costs from runtime heterogeneity.
All microbenchmarks are executed single-threaded on a machine with two 64-core AMD EPYC processors, 2~TB of DDR4 memory, and Ubuntu~22.04.5~LTS. Both schemes use the same parameter grid: $N\in\{2^{12},2^{13},2^{14},2^{15}\}$ and $L_{\max}\in\{128,256,512,1024,2048\}$. Each configuration uses a 20-bit prime plaintext modulus satisfying $t\equiv1\pmod{2N}$ and $t>4L_{\max}+2$, together with a ciphertext-modulus chain providing at least 128-bit classical security~\cite{HomomorphicEncryptionSecurityStandard}. Appendix~\ref{app:bfv-params} reports the full implementation parameterization $(N,t,Q,Q',q_\ell)$ and BFV noise-budget stress test. 
We select~\cite{1} as our primary baseline because its RLWE-based fragmentation design closely matches our construction and E2EE messaging setting, compared to other HE-based techniques (Appendix~\ref{ComparisonWithHE}).

\smallskip
\noindent \ding{110}
\paragraphNew{\DeepUL{Datasets and Keywords}} \ProtocolName{} is evaluated on the \textsf{SAMSum} messenger-style dataset, comprising $173{,}717$ messages with approximately $1.5$M words between $4{,}416$ distinct users. We use a common parameter setting across all users, which makes the total cost of searching individual conversations comparable to that of searching the entire corpus. This approach demonstrates \ProtocolName{}'s scalability with increasing message volume. Since \textsf{SAMSum} lacks timestamps, we assign interpretive workload durations using population-average estimates of $54.15$ messages per user per day and $14.31$ words per message~\cite{ofcomMessaging2023,onsPopulation2022,lyddy2014analysis}. Under this mapping, the complete corpus represents approximately five years of messaging, while $1$K and $25$K words correspond to about one day and one month, respectively. We evaluate ten increasing workloads spanning these scales under $500$ patterns averaging $16$ bytes, evenly divided between present and absent queries. Present patterns are sampled from \textsf{SAMSum}, while absent patterns are drawn from a frequency-ranked WordFreq/SUBTLEX-US universe of $248{,}266$ keywords~\cite{wordfreq,brysbaert-new-2009}. Section~\ref{sec:eval_repeated_queries} uses this universe to analyze the operational cost of repeated queries.

\smallskip
\noindent \ding{110} {\paragraphNew{{\textbf{\DeepUL{End-to-End Web and Mobile Prototype}}}}}
To measure E2E latency and deployment feasibility on commodity devices, we build a web and cross-platform mobile application with $\mathsf{S}$, $\mathsf{R}$, $\mathsf{PP}$, and $\mathsf{CSP}$ deployed as distinct networked endpoints. We use a modified \textsf{node-seal} backend~\cite{node_seal} supporting the coefficient-domain plaintext construction required by \ProtocolName{} but unavailable in the standard bindings. We run the mobile client on a OnePlus~11R with 8~GB RAM and 256~GB storage. We provide detailed implementation, demonstrations, and interface discussion in Appendix~\ref{applicationImages}.

\begin{figure*}[!t]
  \centering
  \includegraphics[width=\linewidth] {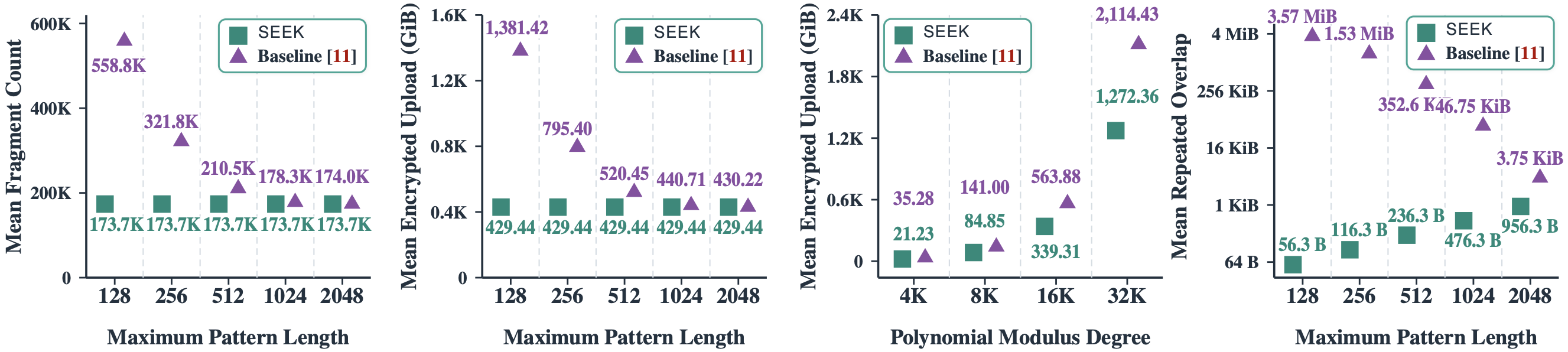}%
    \caption{Fragmentation, encrypted-upload, and overlap overheads of \ProtocolName{} and \cite{1} for the $\approx1.5$M-word message corpus (five year history). Values indexed by $L_{\max}$ are averaged over all $N$ and the $N$-indexed upload is averaged over all $L_{\max}$. }
  \label{fig:seek-rlwe-fragments-upload}
\end{figure*}

\subsection{Protocol Microbenchmarks and Scalability}
\label{sec:eval_microbenchmarks}

\noindent \ding{110}
\paragraphNew{\DeepUL{Correctness}} Across all $20$ $(N,L_{\max})$ pairs, we execute $10{,}000$ encrypted searches: $5000$ positive and $5000$ negative. \ProtocolName{} produces no false positives or false negatives, achieving $100\%$ precision, recall, F1 score, and accuracy. By contrast, the exact-match protocol of~\cite{1} maintains $100\%$ precision but achieves only $60\%$ recall, $75\%$ F1 score, and $80\%$ accuracy because case variations cause $2000$ false negatives. Its wildcard extension supports equivalent ASCII case-insensitive matching but requires a two-ciphertext trapdoor, compared with one ciphertext for \ProtocolName{}. Thus, \ProtocolName{} handles capitalization using negligible local case-mask construction, without additional trapdoor variants or online communication.

\noindent \ding{110}
\paragraphNew{\DeepUL{Fragmentation, Encryption, and Trapdoor Generation}} Fig.~\ref{fig:seek-rlwe-fragments-upload} reports fragmentation, overlap, and encrypted-ingestion overhead for the $1.5$M-word (five-year) workload. Across the four $N$ values, \ProtocolName{} averages $173.7$K fragments at every $L_{\max}$, while the baseline decreases from $558.8$K at $L_{\max}=128$ to $210.5$K at $512$ and $174.0$K at $2048$. At $L_{\max}=128$ and $512$, \ProtocolName{} reduces mean fragment count and encrypted upload by $3.22\times$ and $1.21\times$. Across all $20$ pairs, it reduces the mean fragment count from $288.7$K to $173.7$K and upload/storage from $713.64$ to $429.44$~GiB, a $39.82\%$ reduction. Increasing $N$ removes only the $15$ excess fragments at $N=4096$ but expands \ProtocolName{}'s mean footprint from $21.23$~GiB to $1272.36$~GiB at $N=32768$, favoring the smallest secure degree satisfying the computation and correctness bounds. At the practical $(N,L_{\max})=(4096,512)$ pair, \ProtocolName{} and the baseline require $173{,}732$ and $210{,}535$ fragments, occupy $21.23$ and $25.73$~GiB, and complete fragmentation and encryption in $198.27$ and $230.13$~s, respectively. This corresponds to $1.14$~ms per message for \ProtocolName{}. At $(4096,128)$, the sender costs are $1.12$ and $3.44$~ms per message. Overlap-induced repetition shows an even larger gap. Averaged over $N$, \ProtocolName{} and the baseline process $56.3$~bytes and $3.57$~MiB at $L_{\max}=128$ ($66{,}516\times$ reduction), and $236.3$~bytes and $352.6$~KiB at $L_{\max}=512$ ($1{,}528\times$ reduction). Their grid-wide averages are $368.3$~bytes and $1.10$~MiB, giving a $3{,}129\times$ reduction. At $N=4096$, \ProtocolName{}/baseline trapdoors occupy $128.11/256.22$~KiB and take $15.99/29.70$~ms to construct, approximately halving both costs without penalizing smaller policy-adequate values of $L_{\max}$. Long documents amplify the fragmentation advantage even more. For $N=4096/8192/16384/32768$, $1$K/$10$K-word documents at $L_{\max}=512$ require $84/839$ baseline blocks but only $12/120$, $6/56$, $3/27$, and $2/14$ \ProtocolName{} fragments, yielding $7$--$42\times$ and $6.99$--$59.93\times$ reductions. At $L_{\max}=128$, the baseline requires $335/3{,}353$ blocks, compared with $11/108$, $6/54$, $3/27$, and $2/14$ for \ProtocolName{}, yielding $30.45$--$167.5\times$ and $31.05$--$239.5\times$ reductions. Because the construction of~\cite{1} couples its base-fragment length to the maximum supported pattern length, our primary comparison uses the common $L_{\max}$. Even when the baseline alone is overprovisioned from $512$ bits to its maximum $N/2$, its $1$K/$10$K-word counts of $21/210$, $11/105$, $6/53$, and $3/27$ remain $1.50$--$2.00\times$ and $1.75$--$1.96\times$  those of \ProtocolName{} at $L_{\max}=512$. This bound already supports four times the average pattern length. However, larger values only increase overlap, reduce stride, and generate more fragments for long messages.

\noindent \ding{110}
\paragraphNew{\DeepUL{Correlation Computation}}
Fig.~\ref{fig:seek-rlwe-correlation-time}\footnote{Here and subsequently, costs increase with $N$, history size, or query count, as applicable (Figs.~\ref{fig:seek-rlwe-correlation-time}--\ref{fig:attack}), enabling distinction without varying markers.} compares \ProtocolName{}'s encrypted correlation overhead with the baseline's online adjacent-pair reconstruction and case-insensitive wildcard correlation cost across all histories and $20$ $(N,L_{\max})$ pairs. Following~\cite{1}, adjacent-pair overlaps are reconstructed online for each query. Precomputing and storing them would increase storage by up to $2\times$ while saving only the $0.18\%$ median time spent on reconstruction. \ProtocolName{} is faster in all $200$ observations, achieving median and maximum speedups of $1.99\times$ and $5.47\times$, with a $4.52\times$ median at $L_{\max}=128$. Including preprocessing yields a $1.95\times$ speedup because preprocessing contributes only $1.54\%$ of the combined cost. Computation grows approximately linearly with retained history. From $1$K to $1.5$M words, the median over $L_{\max}$ increases from $0.382/0.784$ to $670.3/1{,}352.6$~s for \ProtocolName{}/baseline at $N=4096$, and from $37.45/69.15$~s to $65.71/119.17$~ks at $N=32768$. For fixed $N$ and history, \ProtocolName{}'s median max--min ratio across the five $L_{\max}$ values is only $1.03\times$, compared with up to $2.9\times$ for the baseline. This advantage follows from fewer instances and lighter arithmetic. \ProtocolName{} uses one ct--ct multiplication per fragment, while each baseline wildcard variant's core correlation uses two ct--ct multiplications, two ct--pt multiplications, and two additions/ciphertext subtractions, with adjacent-pair reconstruction adding one monomial ct--pt multiplication and one ciphertext addition.

\begin{figure}[!t]
  \centering
  \includegraphics[width=\columnwidth] {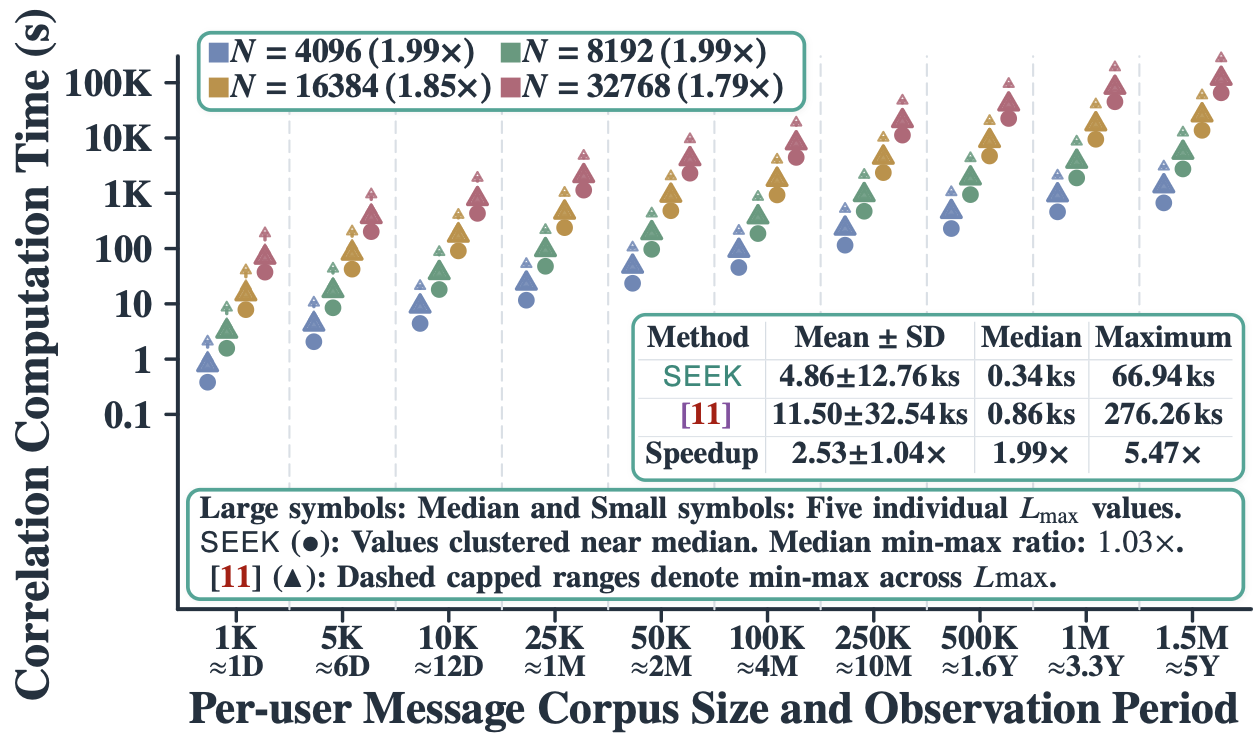}
\caption{Query-wise average correlation cost of \ProtocolName{} and overlap construction with correlation cost of \cite{1}.}
  \label{fig:seek-rlwe-correlation-time}
\end{figure}

\noindent \ding{110}
\paragraphNew{\DeepUL{Online Cost Analysis and Scalability}}
Fig.~\ref{fig:full-protocol-overhead} reports average per-query costs across all message histories and $(N,L_{\max})$ pairs. Computation begins once the encrypted trapdoor, preprocessed corpus, and fresh preprocessing material are available and ends when $\mathsf{PP}$ constructs $\mathsf{present}$. Communication includes trapdoor delivery and all subsequent $\mathsf{PP}$--$\mathsf{CSP}$ exchanges. Across all $(N,L_{\max})$ pairs, daily, monthly, and five-year histories require $0.408$--$42.46$~s and $3.74$--$32.97$~MiB, $12.41$~s--$1.29$~ks and $107.49$--$778.96$~MiB, and $715.50$~s--$74.50$~ks and $6.05$--$43.44$~GiB, respectively. From one month to five years, computation increases by $57.64$--$57.65\times$ and communication by $57.10$--$57.62\times$, closely tracking message growth and confirming near-linear scalability. The bottom panel identifies $N$ as the principal performance determinant. For the five-year history, increasing $N$ from $4096$ to $32768$ raises median computation over $L_{\max}$ from $727.43$~s to $73.19$~ks, a $100.6\times$ increase, and communication from $6.05$ to $43.44$~GiB, a $7.18\times$ increase. At fixed $N$, varying $L_{\max}$ changes computation by at most $1.05\times$ and leaves communication unchanged because overlap pruning limits selected decoding and aggregation to unique byte-aligned positions. Across all $200$ searches, correlation multiplication, relinearization, and modulus switching account for $97.0\%$ of computation, while reduced-$c_1$ transfer and GMW openings account for $96.6\%$ of communication. Consequently, $N=4096$ is the practical sweet spot whenever its decoding and correctness conditions hold. For $L_{\max}=256$, $512$, and $1024$, it requires only $0.408$--$0.424$~s, $12.41$--$12.90$~s, and $715.50$--$743.65$~s for daily, monthly, and five-year histories, with corresponding communication of $3.74$~MiB, $107.49$~MiB, and $6.05$~GiB.

\begin{figure}[!t]
  \centering
  \includegraphics[width=\columnwidth] {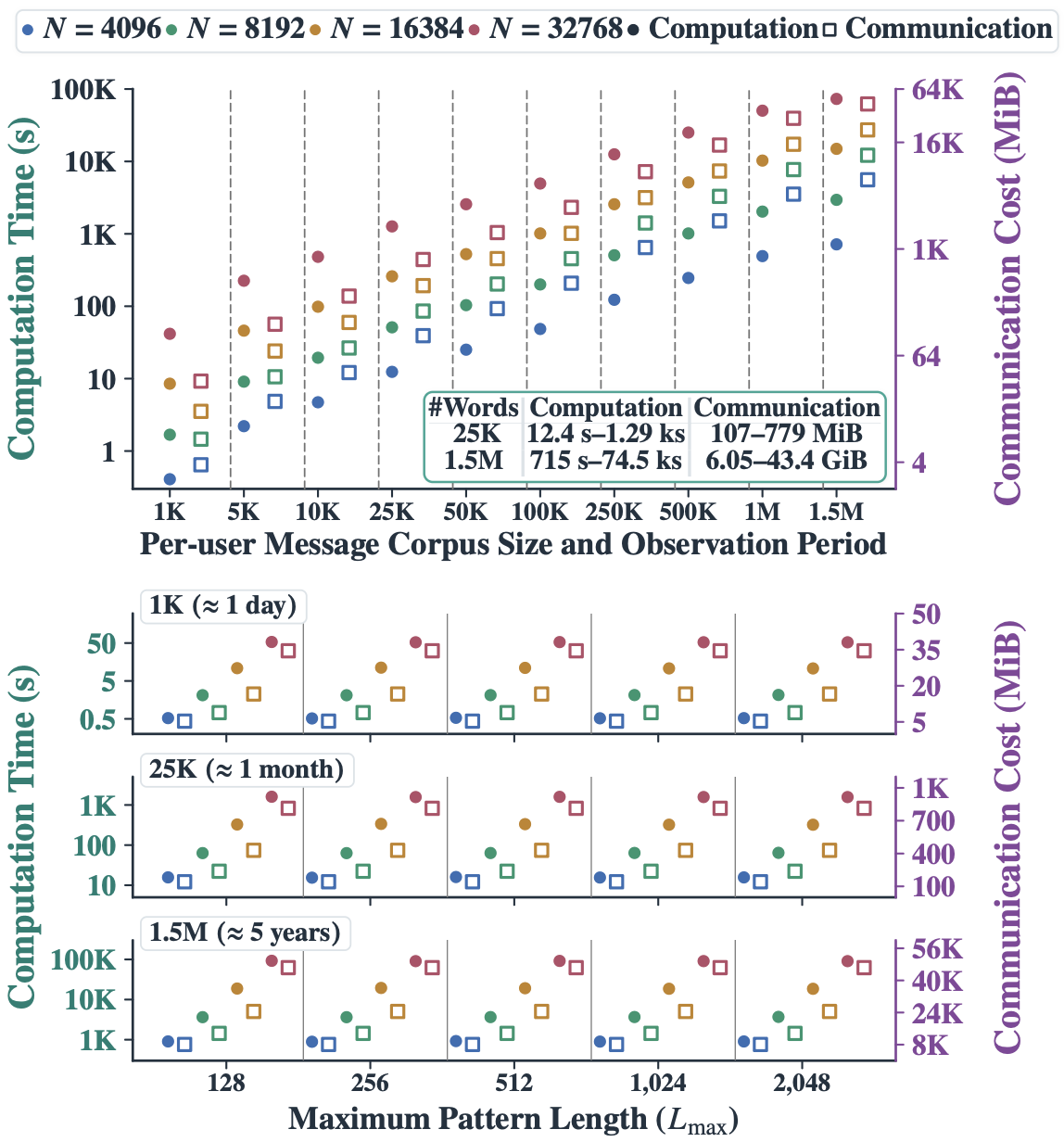}
  \caption{Computation and communication costs of \ProtocolName{} with increasing message history and across all $(N,L_{\max})$ pairs.}
  \label{fig:full-protocol-overhead}
\end{figure}

\noindent \ding{110}
\paragraphNew{\DeepUL{Offline Cost Analysis}} Fig.~\ref{fig:one-time-setup-cost} reports the one-time main BFV setup and fresh per-query $\Pi_{\mathsf{SD}}^{\mathsf{pre}}$ costs, assuming Boolean triples and daBits are available. The setup includes BFV context and key generation, additive secret-key sharing, serialization, and key/share distribution. Its cost increases from $12.518$~ms and $1.501$~MiB at $N=4096$ to $1.053$~s and $272.008$~MiB at $N=32{,}768$ and is amortized over the key-lifecycle period. At the practical $(N,L_{\max})=(4096,512)$ setting, $\Pi_{\mathsf{SD}}^{\mathsf{pre}}$ requires $0.562$~ms and $0.365$~MiB for the daily history, $14.460$~ms and $9.138$~MiB for the monthly history, and $0.817$~s and $530.340$~MiB for the five-year history. The complete fresh per-query bundle for joint-mask preprocessing, selected decoding, zero testing, and aggregation contains $K(8\ell_Q+2)$ Boolean triples, $3K$ daBits, $K-1$ arithmetic Beaver triples, and $K$ scalar OLEs~\cite{demmler2015aby,rotaru2019marbled,escudero2020mixed,beaver1991efficient,baum2020ole}. Under a state-of-the-art $128$-bit semi-honest generation model using batches of $10^7$ and an amortized rate of $0.118$ communicated bits per random oblivious transfer~\cite{bruggemann2023flute}, generating this bundle is projected to communicate $1.666$~MiB, $41.745$~MiB, and $2.366$~GiB for the daily, monthly, and five-year histories, respectively. Since these resources depend only on public $K$ and the BFV parameters, rather than the corpus, keyword, or result, they can be batch-generated before query arrival.

\begin{figure}[!t]
\centering
\includegraphics[width=0.8\columnwidth] {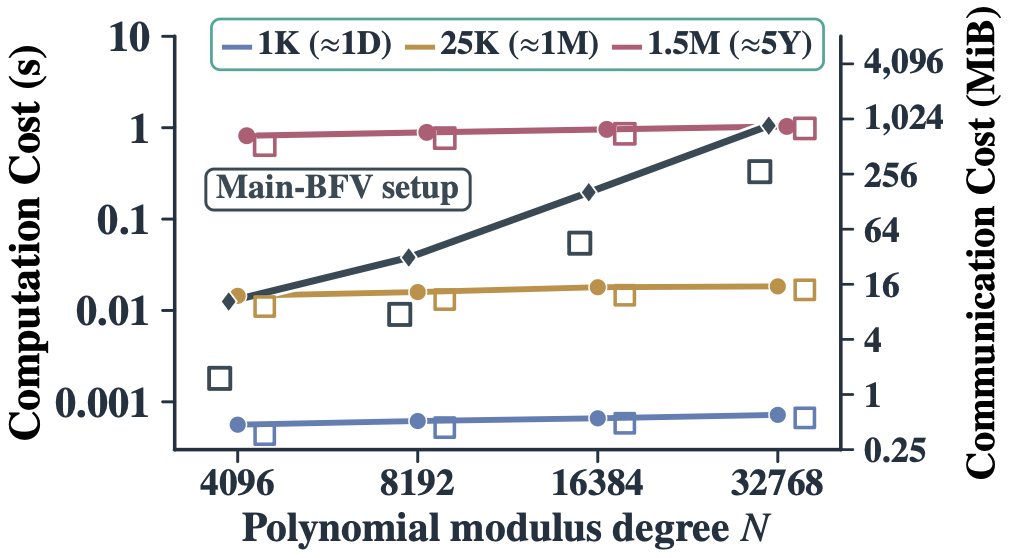}
\caption{One-time BFV setup and fresh per-query selected-decoding preprocessing overheads of \ProtocolName{} at $L_{\max}=512$.}
\label{fig:one-time-setup-cost}
\end{figure}

\noindent \ding{42} \paragraphNew{\DeepUL{Example}} At $(4096,512)$, with 
preprocessing available, the five-year online search from trapdoor delivery to final $\mathsf{present}$ bit requires $743.65$~s and $6.05$~GiB, or $13.26$~min with serialized computation and ideal $1$~Gbps transmission. 
Even the largest evaluated corpus, $1.243$~TiB at $(32768,2048)$, is searched online in approximately $20.39$~h, demonstrating \ProtocolName{}'s feasibility at substantially larger scales.

\subsection{Repeated-Query Attack Analysis}
\label{sec:eval_repeated_queries}

Fig.~\ref{fig:attack} demonstrates the conservative lower-bound costs for repeated-query abuse over one-month and five-year \textsf{SAMSum} histories~\cite{gliwa-etal-2019-samsum}, considering only the online correlation-to-presence pipeline under the practical $(N,L_{\max})=(4096,128)$ configuration. Following the fixed, target-independent external WordFreq/SUBTLEX-US frequency ranking~\cite{wordfreq,brysbaert-new-2009}, the first $1334/1290$, $4655/4521$, and $21{,}537/20{,}145$ queries cover at least $70\%$, $80\%$, and $90\%$ of word occurrences in the one-month/five-year histories, respectively. The complete $248{,}266$-query sweep reaches only $92.50\%/92.57\%$ coverage. These evaluator-side oracle measurements overstate actual disclosure because each query reveals only one history-wide substring-presence bit without counts, message identity, position, or order. Consequently, even $90\%$ occurrence coverage neither recovers $90\%$ of the plaintext nor provides sufficient structure to reconstruct conversations. At WhatsApp scale \cite{metaWhatsAppThreeBillion2026}, applying a $5$K-query sweep, which exceeds both $80\%$ thresholds, to three billion five-year histories is projected to require $346.5$ million compute-years and $97.4$~ZB. Even one platform-wide query requires $69{,}300$ compute-years and $19.5$~EB, taking approximately $36$ weeks with $10^5$ perfectly balanced workers. Within one week, this cluster can process at most approximately $83$ million user-keyword evaluations, equivalent to one keyword over $83$ million users or $5$K keywords over only $16{,}600$ users, before communication further reduces this bound. An attacker must therefore trade keyword breadth against population and history length. Narrow targeted probing remains feasible, but platform-scale reconstruction is computationally and communicationally infeasible.

\begin{figure}[!t]
\includegraphics[width=\columnwidth] {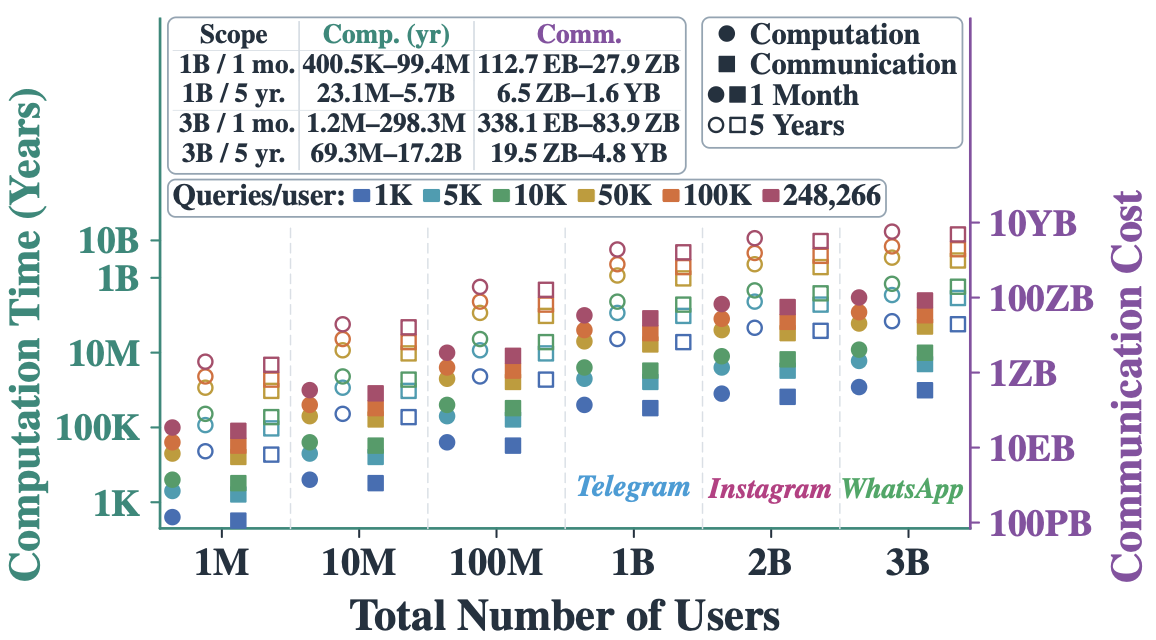}
\caption{Repeated-Query Overhead.}
\label{fig:attack}
\end{figure}

\subsection{End-to-End Prototype Evaluation}
\label{sec:eval_prototype}

We evaluate the complete \ProtocolName{} application prototype under $(4096,512)$ using a one-week history of approximately $5$K words represented by $91$ fragments. The complete setup and functional validation ($\kappa=4$) takes $24.62$~s on the web client and $24.81$~s on mobile, with $6.37$~MB of communication, indicating that server-side $\mathsf{PP}$--$\mathsf{CSP}$ operations dominate. Message packing and BFV encryption cost approximately $0.4$~ms per word on the web client and $4$~ms per word on mobile. Fresh input-independent preprocessing requires $67.76$~s of computation in the current serialized implementation, with $36.1$~s incurred by $\mathsf{PP}$ and $31.66$~s by $\mathsf{CSP}$. Computation is dominated by Boolean-triple generation ($49.91$~s) and base OT ($12.15$~s), while preprocessing exchanges $622.72$~MB of payload, primarily for Boolean triples ($549.38$~MB) and arithmetic Beaver triples ($46.31$~MB), corresponding to an ideal serialization time of $4.98$~s over a $1$~Gbps link. \ProtocolName{} additionally exchanges $15.84$~MB across corpus processing and online 2PC ($11.65$~MB). A fresh execution including preprocessing therefore corresponds to $5.11$~s of communication time at $1$~Gbps. With input-independent preprocessing off the query path, trapdoor encryption, encrypted $\pm1$ mapping, and correlation-to-bit reconstruction require $1.59$~ms, $8.51$~ms, and $1.91$~s, respectively. The latter includes $319.8$~ms for homomorphic multiplication, $626.4$~ms for joint-mask processing, and $631.3$~ms for selected decoding. Overall, the query-time cryptographic pipeline requires $1.92$~s, demonstrating practical E2E encrypted messaging and privacy-preserving search across web and commodity mobile devices.

\section{Conclusion}
\label{sec:conclusion}

We presented \ProtocolName{}, a privacy-preserving keyword-search protocol that combines minimum-sufficient byte-aligned fragmentation, packed BFV correlation, and 2PC-based selected decoding, blinded zero testing, and secure aggregation. It supports ASCII case-insensitive search using one fixed-size encrypted trapdoor and one query-dependent homomorphic multiplication per fragment, releasing only a corpus-wide presence bit to $\mathsf{PP}$. We established search correctness and post-setup privacy against a static semi-honest adversary under the stated assumptions. 
\ProtocolName{} achieved 100\% observed accuracy, reduced mean 
fragment 
overhead by 39.82\%, and provided maximum correlation speedup of $5.47\times$ over the 
wildcard baseline. Our web/mobile prototype and scalability analysis further demonstrate practical and efficient presence-only search over retained encrypted messages.

\bibliographystyle{ACM-Reference-Format}
\bibliography{references}

@article{1,
  title={Efficient post-quantum pattern matching on encrypted data},
  author={Bkakria, Anis and Izabach{\`e}ne, Malika},
  journal={IACR Communications in Cryptology},
  volume={1},
  number={2},
  year={2024}
}

@inproceedings{24,
  title={$\{$MUSES$\}$: Efficient $\{$Multi-User$\}$ Searchable Encrypted Database},
  author={Le, Tung and Behnia, Rouzbeh and Guajardo, Jorge and Hoang, Thang},
  booktitle={33rd USENIX Security Symposium (USENIX Security 24)},
  pages={2581--2598},
  year={2024}
}

@inproceedings{25,
  title={$\{$d-DSE$\}$: Distinct Dynamic Searchable Encryption Resisting Volume Leakage in Encrypted Databases},
  author={Liu, Dongli and Wang, Wei and Xu, Peng and Yang, Laurence T and Luo, Bo and Liang, Kaitai},
  booktitle={33rd USENIX Security Symposium (USENIX Security 24)},
  pages={2563--2580},
  year={2024}
}

@article{1:CS15,
  title={Substring-searchable symmetric encryption},
  author={Chase, Melissa and Shen, Emily},
  journal={Proceedings on Privacy Enhancing Technologies},
  year={2015}
}

@inproceedings{2,
  title={Authorized keyword search on encrypted data},
  author={Shi, Jie and Lai, Junzuo and Li, Yingjiu and Deng, Robert H and Weng, Jian},
  booktitle={European symposium on research in computer security},
  pages={419--435},
  year={2014},
  organization={Springer}
}

@inproceedings{3,
  title={Efficient encrypted keyword search for multi-user data sharing},
  author={Kiayias, Aggelos and Oksuz, Ozgur and Russell, Alexander and Tang, Qiang and Wang, Bing},
  booktitle={European symposium on research in computer security},
  pages={173--195},
  year={2016},
  organization={Springer}
}

@inproceedings{7,
  title={CIPHERMATCH: Accelerating Homomorphic Encryption-Based String Matching via Memory-Efficient Data Packing and In-Flash Processing},
  author={Kabra, Mayank and Nadig, Rakesh and Gupta, Harshita and Bera, Rahul and Frouzakis, Manos and Arulchelvan, Vamanan and Liang, Yu and Mao, Haiyu and Sadrosadati, Mohammad and Mutlu, Onur},
  booktitle={Proceedings of the 30th ACM International Conference on Architectural Support for Programming Languages and Operating Systems, Volume 2},
  pages={111--130},
  year={2025}
}

@inproceedings{8,
  title={Authorized private keyword search over encrypted data in cloud computing},
  author={Li, Ming and Yu, Shucheng and Cao, Ning and Lou, Wenjing},
  booktitle={2011 31st international conference on distributed computing systems},
  pages={383--392},
  year={2011},
  organization={IEEE}
}

@inproceedings{11,
  title={$\{$I/O-Efficient$\}$ dynamic searchable encryption meets forward \& backward privacy},
  author={Mondal, Priyanka and Chamani, Javad Ghareh and Demertzis, Ioannis and Papadopoulos, Dimitrios},
  booktitle={33rd USENIX Security Symposium (USENIX Security 24)},
  pages={2527--2544},
  year={2024}
}

@inproceedings{12,
  title={Pattern matching on encrypted streams},
  author={Desmoulins, Nicolas and Fouque, Pierre-Alain and Onete, Cristina and Sanders, Olivier},
  booktitle={International Conference on the Theory and Application of Cryptology and Information Security},
  pages={121--148},
  year={2018},
  organization={Springer}
}

@inproceedings{15,
  title={Public key encryption with flexible pattern matching},
  author={Bouscati{\'e}, Elie and Castagnos, Guilhem and Sanders, Olivier},
  booktitle={International Conference on the Theory and Application of Cryptology and Information Security},
  pages={342--370},
  year={2021},
  organization={Springer}
}

@article{20,
  title={SWiSSSE: System-wide security for searchable symmetric encryption},
  author={Gui, Zichen and Paterson, Kenneth G and Patranabis, Sikhar and Warinschi, Bogdan},
  journal={Proceedings on Privacy Enhancing Technologies},
  year={2024}
}

@inproceedings{22,
  title={Dynamic searchable encryption with optimal search in the presence of deletions},
  author={Chamani, Javad Ghareh and Papadopoulos, Dimitrios and Karbasforushan, Mohammadamin and Demertzis, Ioannis},
  booktitle={31st USENIX Security Symposium (USENIX Security 22)},
  pages={2425--2442},
  year={2022}
}

@inproceedings{23,
  title={$\{$FEASE$\}$: Fast and Expressive Asymmetric Searchable Encryption},
  author={Meng, Long and Chen, Liqun and Tian, Yangguang and Manulis, Mark and Liu, Suhui},
  booktitle={33rd USENIX Security Symposium (USENIX Security 24)},
  pages={2545--2562},
  year={2024}
}

@inproceedings{1:SLPR15,
  title={Blindbox: Deep packet inspection over encrypted traffic},
  author={Sherry, Justine and Lan, Chang and Popa, Raluca Ada and Ratnasamy, Sylvia},
  booktitle={Proceedings of the 2015 ACM conference on special interest group on data communication},
  pages={213--226},
  year={2015}
}

@inproceedings{1:CDK+17,
  title={BlindIDS: Market-compliant and privacy-friendly intrusion detection system over encrypted traffic},
  author={Canard, S{\'e}bastien and Diop, A{\"\i}da and Kheir, Nizar and Paindavoine, Marie and Sabt, Mohamed},
  booktitle={Proceedings of the 2017 ACM on Asia Conference on Computer and Communications Security},
  pages={561--574},
  year={2017}
}

@inproceedings{1:DFOS18,
  title={Pattern matching on encrypted streams},
  author={Desmoulins, Nicolas and Fouque, Pierre-Alain and Onete, Cristina and Sanders, Olivier},
  booktitle={International Conference on the Theory and Application of Cryptology and Information Security},
  pages={121--148},
  year={2018},
  organization={Springer}
}

@inproceedings{1:BCC20,
  title={Privacy-preserving pattern matching on encrypted data},
  author={Bkakria, Anis and Cuppens, Nora and Cuppens, Fr{\'e}d{\'e}ric},
  booktitle={International Conference on the Theory and Application of Cryptology and Information Security},
  pages={191--220},
  year={2020},
  organization={Springer}
}

@inproceedings{1:BCS21,
  title={Public key encryption with flexible pattern matching},
  author={Bouscati{\'e}, Elie and Castagnos, Guilhem and Sanders, Olivier},
  booktitle={International Conference on the Theory and Application of Cryptology and Information Security},
  pages={342--370},
  year={2021},
  organization={Springer}
}

@inproceedings{1:BCS23,
  title={Pattern matching in encrypted stream from inner product encryption},
  author={Bouscati{\'e}, {\'E}lie and Castagnos, Guilhem and Sanders, Olivier},
  booktitle={IACR International Conference on Public-Key Cryptography},
  pages={774--801},
  year={2023},
  organization={Springer}
}

@misc{N19,
  author       = {{National Crime Agency}},
  title        = {{NCA and police smash thousands of criminal conspiracies after infiltration of encrypted communication platform in UK's biggest ever law enforcement operation}},
  howpublished = {\url{https://www.nationalcrimeagency.gov.uk/news/operation-venetic}},
  year         = {2020},
}

@misc{N20,
  author       = {{US Department of Justice}},
  title        = {{FBI's Encrypted Phone Platform Infiltrated Hundreds of Criminal Syndicates, Resulting in Massive Worldwide Takedown}},
  howpublished = {\url{https://www.justice.gov/usao-sdca/pr/fbi-s-encrypted-phone-platform-infiltrated-hundreds-criminal-syndicates-result-massive}},
  year         = {2021},
}

@misc{sealcrypto,
    title = {{M}icrosoft {SEAL} (release 4.1)},
    howpublished = {\url{https://github.com/Microsoft/SEAL}},
    month = jan,
    year = 2023,
    note = {Microsoft Research, Redmond, WA.},
    key = {SEAL}
}

@misc{node_seal,
  author ={Nick Angelou},
  title = {{node-seal: Homomorphic Encryption for TypeScript or JavaScript - Microsoft SEAL}},
  howpublished = {\url{https://github.com/s0l0ist/node-seal/tree/main}},
  Date ={2025},
}

@misc{expo,
  author= {Expo},
  title = {{Expo: An open-source framework for making universal native apps with React.}},
  howpublished = {\url{https://github.com/expo/expo}},
  Date ={2025},
}

@article{fan2012somewhat,
  title={Somewhat practical fully homomorphic encryption},
  author={Fan, Junfeng and Vercauteren, Frederik},
  journal={Cryptology ePrint Archive},
  year={2012}
}

@inproceedings{brakerski2012fully,
  title={Fully homomorphic encryption without modulus switching from classical GapSVP},
  author={Brakerski, Zvika},
  booktitle={Annual cryptology conference},
  pages={868--886},
  year={2012},
  organization={Springer}
}

@misc{N21,
  author       = {{Reuters}},
  title        = {{UK orders Apple to open up users' encrypted cloud data, report says}},
  howpublished = {\url{https://www.reuters.com/world/uk/uk-asks-apple-let-it-spy-users-encrypted-accounts-washington-post-reports-2025-02-07/}},
  year         = {2025}
}

@misc{N22,
  author       = {{Reuters}},
  title        = {{Apple pulls data protection feature in UK amid government demands}},
  howpublished = {\url{https://www.reuters.com/technology/apple-removing-end-to-end-cloud-encryption-feature-uk-bloomberg-news-reports-2025-02-21/}},
  year         = {2025}
}

@misc{N23,
  author       = {{European Commission}},
  title        = {{Proposal for a Regulation laying down rules to prevent and combat child sexual abuse}},
  howpublished = {\url{https://eur-lex.europa.eu/legal-content/EN/TXT/?uri=celex\%3A52022PC0209}},
  year         = {2022}
}

@misc{N24,
  author       = {{Reuters}},
  title        = {{EU fails to extend rules on child abuse content detection by online platforms}},
  howpublished = {\url{https://www.reuters.com/legal/litigation/eu-fails-extend-rules-child-abuse-content-detection-by-online-platforms-2026-03-16/}},
  year         = {2026}
}

@misc{N25,
  author       = {{Rest of World}},
  title        = {{WhatsApp gives India an ultimatum on encryption}},
  howpublished = {\url{https://restofworld.org/2024/exporter-whatsapp-encryption-india/}},
  year         = {2024}
}

@misc{N26,
  author       = {{Australian Department of Home Affairs}},
  title        = {{The Assistance and Access Act 2018}},
  howpublished = {\url{https://www.homeaffairs.gov.au/about-us/our-portfolios/national-security/lawful-access-telecommunications/data-encryption}},
  year         = {2023}
}

@misc{N27,
  author       = {{Parliament of Australia}},
  title        = {{Telecommunications and Other Legislation Amendment (Assistance and Access) Bill 2018}},
  howpublished = {\url{https://www.aph.gov.au/Parliamentary_Business/Bills_Legislation/Bills_Search_Results/Result?bId=r6195}},
  year         = {2018}
}

@online{N28,
  author       = {{The Guardian}},
  title        = {{Instagram to remove end-to-end encryption for private messages in May}},
  url = {{https://www.theguardian.com/technology/2026/mar/18/instagram-to-remove-end-to-end-encryption-for-private-messages-in-may}},
  year         = {2026}
}

@online{N18,
  author       = {{Europol}},
  title        = {{Dismantling encrypted criminal EncroChat communications leads to over 6,500 arrests and close to EUR 900 million seized}},
  url = {{https://www.europol.europa.eu/media-press/newsroom/news/dismantling-encrypted-criminal-encrochat-communications-leads-to-over-6-500-arrests-and-close-to-eur-900-million-seized}},
  year         = {2023},
}

@article{mouchet2021multiparty,
  title={Multiparty homomorphic encryption from ring-learning-with-errors},
  author={Mouchet, Christian and Troncoso-Pastoriza, Juan and Bossuat, Jean-Philippe and Hubaux, Jean-Pierre},
  journal={Proceedings on Privacy Enhancing Technologies},
  volume={2021},
  number={4},
  pages={291--311},
  year={2021}
}

@inproceedings{beaver1991efficient,
  title={Efficient multiparty protocols using circuit randomization},
  author={Beaver, Donald},
  booktitle={Annual international cryptology conference},
  pages={420--432},
  year={1991},
  organization={Springer}
}

@misc{N29,
  author       = {Nicas, Jack and Zhong, Raymond and Wakabayashi, Daisuke},
  title        = {{Censorship, Surveillance and Profits: A Hard Bargain for Apple in China}},
  howpublished = {\url{https://www.nytimes.com/2021/05/17/technology/apple-china-censorship-data.html}},
  year         = {2021},
  note         = {The New York Times}
}

@misc{N30,
  author       = {Canetti, Ran and Kaptchuk, Gabriel},
  title        = {{The Broken Promise of Apple's Announced Forbidden-photo Reporting System -- And How To Fix It}},
  howpublished = {\url{https://www.bu.edu/riscs/2021/08/10/apple-csam/}},
  year         = {2021}
}

@article{N31,
  title={Bugs in our pockets: the risks of client-side scanning},
  author={Abelson, Harold and Anderson, Ross and Bellovin, Steven M and Benaloh, Josh and Blaze, Matt and Callas, Jon and Diffie, Whitfield and Landau, Susan and Neumann, Peter G and Rivest, Ronald L and others},
  journal={Journal of Cybersecurity},
  volume={10},
  number={1},
  pages={tyad020},
  year={2024},
  publisher={Oxford University Press}
}

@inproceedings{GrubbsLR17,
  author    = {Grubbs, Paul and Lu, Jiahui and Ristenpart, Thomas},
  title     = {{Message Franking via Committing Authenticated Encryption}},
  booktitle = {Advances in Cryptology -- CRYPTO 2017},
  series    = {Lecture Notes in Computer Science},
  volume    = {10403},
  pages     = {66--97},
  publisher = {Springer},
  year      = {2017},
  doi       = {10.1007/978-3-319-63697-9_3},
  url       = {https://eprint.iacr.org/2017/664}
}

@inproceedings{issa2022hecate,
  title={Hecate: Abuse reporting in secure messengers with sealed sender},
  author={Issa, Rawane and Alhaddad, Nicolas and Varia, Mayank},
  booktitle={31st USENIX Security Symposium (USENIX Security 22)},
  pages={2335--2352},
  year={2022}
}

@article{N32,
    author = {Abelson, Harold and Anderson, Ross and Bellovin, Steven M. and Benaloh, Josh and Blaze, Matt and Diffie, Whitfield and Gilmore, John and Green, Matthew and Landau, Susan and Neumann, Peter G. and Rivest, Ronald L. and Schiller, Jeffrey I. and Schneier, Bruce and Specter, Michael A. and Weitzner, Daniel J.},
    title = { Keys under doormats: mandating insecurity by requiring government access to all data and communications },
    journal = {Journal of Cybersecurity},
    volume = {1},
    number = {1},
    pages = {69-79},
    year = {2015},
    month = {09},
    issn = {2057-2085},
    doi = {10.1093/cybsec/tyv009},
    url = {https://doi.org/10.1093/cybsec/tyv009}
}

@misc{N33,
  author       = {{England and Wales Court of Appeal, Criminal Division}},
  title        = {{A, B, D and C v. Regina, [2021] EWCA Crim 128}},
  howpublished = {\url{https://www.judiciary.uk/judgments/a-b-d-c-v-regina/}},
  year         = {2021}
}

@misc{N34,
  author       = {{Court of Justice of the European Union}},
  title        = {{Judgment of the Court in Case C-670/22, M.N. (EncroChat)}},
  howpublished = {\url{https://eur-lex.europa.eu/legal-content/EN/TXT/?uri=CELEX:62022CJ0670_RES}},
  year         = {2024}
}

@misc{N37,
  author       = {{Meta}},
  title        = {{Information for law enforcement authorities}},
  howpublished = {\url{https://www.meta.com/safety/communities/law/guidelines/}},
  year         = {2026}
}

@misc{N38,
  author       = {{Signal Messenger}},
  title        = {{Government Requests}},
  howpublished = {\url{https://signal.org/bigbrother/}},
  year         = {2026}
}

@misc{N39,
  author       = {The Guardian},
  title        = {{Facebook gave police their private data. Now, this duo face abortion charges}},
  howpublished = {\url{https://www.theguardian.com/us-news/2022/aug/10/facebook-user-data-abortion-nebraska-police}},
  year         = {2022}
}

@misc{N40,
  author       = {{Apple}},
  title        = {{Legal Process Guidelines: Government and Law Enforcement within the United States}},
  howpublished = {\url{https://www.apple.com/legal/privacy/law-enforcement-guidelines-us.pdf}},
  year         = {2025}
}

@misc{N41,
  author       = {Tech Crunch},
  title        = {{US Government Loses Bid to Force Facebook to Wiretap Messenger Calls}},
  howpublished = {\url{https://techcrunch.com/2018/09/28/us-government-loses-bid-to-force-facebook-to-wiretap-messenger-calls/}},
  year         = {2018}
}

@misc{N42,
  author       = {{Electronic Frontier Foundation}},
  title        = {{EFF, ACLU v. DOJ -- Facebook Messenger Unsealing}},
  howpublished = {\url{https://www.eff.org/cases/eff-aclu-v-doj-facebook-messenger-unsealing?language=en}},
  year         = {2018}
}

@misc{N43,
  author       = {{National Institute of Standards and Technology}},
  title        = {{Security and Privacy Controls for Information Systems and Organizations}},
  howpublished = {\url{https://csrc.nist.gov/pubs/sp/800/53/r5/upd1/final}},
  year         = {2020}
}

@misc{N44,
  title        = {{RFC 9162: Certificate Transparency Version 2.0}},
  howpublished = {\url{https://www.rfc-editor.org/rfc/rfc9162}},
  author={Laurie, Ben and Messeri, Eran and Stradling, Rob},
  year={2021},
  publisher={RFC Editor}
}

@inproceedings{N45,
  title={Leakage-abuse attacks against searchable encryption},
  author={Cash, David and Grubbs, Paul and Perry, Jason and Ristenpart, Thomas},
  booktitle={Proceedings of the 22nd ACM SIGSAC conference on computer and communications security},
  pages={668--679},
  year={2015}
}

@inproceedings{N46,
  title={Generic attacks on secure outsourced databases},
  author={Kellaris, Georgios and Kollios, George and Nissim, Kobbi and O'neill, Adam},
  booktitle={Proceedings of the 2016 ACM SIGSAC Conference on Computer and Communications Security},
  pages={1329--1340},
  year={2016}
}

@misc{N47,
  author       = {Perrin, Trevor and Marlinspike, Moxie and Schmidt, Rolfe},
  title        = {{The Double Ratchet Algorithm}},
  howpublished = {\url{https://signal.org/docs/specifications/doubleratchet/}},
  year         = {2025},
  note         = {accessed July 30, 2026}
}

@misc{N48,
  author       = {Barnes, Richard and Beurdouche, Benjamin and Robert, Raphael and Millican, Jon and Omara, Emad and Cohn-Gordon, Katriel},
  title        = {{RFC 9420: The Messaging Layer Security (MLS) Protocol}},
  howpublished = {\url{https://www.rfc-editor.org/rfc/rfc9420}},
  year         = {2023},
  note         = {RFC 9420}
}

@incollection{goldreich1987mental,
  title={How to play any mental game, or a completeness theorem for protocols with honest majority},
  author={Goldreich, Oded and Micali, Silvio and Wigderson, Avi},
  booktitle={Providing sound foundations for cryptography: on the work of Shafi Goldwasser and Silvio Micali},
  pages={307--328},
  year={2019}
}

@inproceedings{demmler2015aby,
  title={ABY-A framework for efficient mixed-protocol secure two-party computation.},
  author={Demmler, Daniel and Schneider, Thomas and Zohner, Michael},
  booktitle={Ndss},
  year={2015}
}

@inproceedings{rotaru2019marbled,
  title={Marbled circuits: Mixing arithmetic and boolean circuits with active security},
  author={Rotaru, Dragos and Wood, Tim},
  booktitle={International Conference on Cryptology in India},
  pages={227--249},
  year={2019},
  organization={Springer}
}

@inproceedings{escudero2020mixed,
  title={Improved primitives for MPC over mixed arithmetic-binary circuits},
  author={Escudero, Daniel and Ghosh, Satrajit and Keller, Marcel and Rachuri, Rahul and Scholl, Peter},
  booktitle={Annual international cryptology conference},
  pages={823--852},
  year={2020},
  organization={Springer}
}

@inproceedings{zyskind2025threshold,
  title={High-Throughput Universally Composable Threshold FHE Decryption},
  author={Zyskind, Guy and Zarchy, Doron and Leibovich, Max and Peikert, Chris},
  booktitle={Proceedings of the 2025 ACM SIGSAC Conference on Computer and Communications Security},
  pages={2339--2353},
  year={2025}
}

@article{gliwa-etal-2019-samsum,
  author       = {Bogdan Gliwa and
                  Iwona Mochol and
                  Maciej Biesek and
                  Aleksander Wawer},
  title        = {SAMSum Corpus: {A} Human-annotated Dialogue Dataset for Abstractive
                  Summarization},
  journal      = {CoRR},
  volume       = {abs/1911.12237},
  year         = {2019},
  url          = {http://arxiv.org/abs/1911.12237},
  eprinttype   = {arXiv},
  eprint       = {1911.12237},
  bibsource    = {dblp computer science bibliography, https://dblp.org}
}

@incollection{HomomorphicEncryptionSecurityStandard,
  title={Homomorphic encryption standard},
  author={Albrecht, Martin and Chase, Melissa and Chen, Hao and Ding, Jintai and Goldwasser, Shafi and Gorbunov, Sergey and Halevi, Shai and Hoffstein, Jeffrey and Laine, Kim and Lauter, Kristin and others},
  booktitle={Protecting privacy through homomorphic encryption},
  pages={31--62},
  year={2022},
  publisher={Springer}
}

@misc{ofcomMessaging2023,
  author       = {{Ofcom}},
  title        = {WhatsAppening in the World of Online Communications?},
  year         = {2023},
  howpublished = {Ofcom research report},
  url          = {https://www.ofcom.org.uk/internet-based-services/technology/whatsappening-in-the-world-of-online-communications}
}

@misc{onsPopulation2022,
  author       = {{Office for National Statistics}},
  title        = {Population Estimates for the UK, England, Wales,
                  Scotland, and Northern Ireland: Mid-2022},
  year         = {2024},
  url          = {https://www.ons.gov.uk/peoplepopulationandcommunity/populationandmigration/populationestimates/bulletins/annualmidyearpopulationestimates/mid2022}
}

@article{lyddy2014analysis,
  title={An analysis of language in university students' text messages},
  author={Lyddy, Fiona and Farina, Francesca and Hanney, James and Farrell, Lynn and Kelly O'Neill, Niamh},
  journal={Journal of Computer-Mediated Communication},
  volume={19},
  number={3},
  pages={546--561},
  year={2014},
  publisher={Oxford University Press Oxford, UK}
}

@software{wordfreq,
  author       = {Robyn Speer},
  title        = {rspeer/wordfreq: v3.0},
  month        = sep,
  year         = 2022,
  publisher    = {Zenodo},
  version      = {v3.0.2},
  doi          = {10.5281/zenodo.7199437},
  url          = {https://doi.org/10.5281/zenodo.7199437}
}

@article{brysbaert-new-2009,
  title={Moving beyond Ku{\v{c}}era and Francis: A critical evaluation of current word frequency norms and the introduction of a new and improved word frequency measure for American English},
  author={Brysbaert, Marc and New, Boris},
  journal={Behavior research methods},
  volume={41},
  number={4},
  pages={977--990},
  year={2009},
  publisher={Springer}
}

@misc{metaWhatsAppThreeBillion2026,
  author       = {{Meta}},
  title        = {{It's Time to Reserve Your WhatsApp Username}},
  year         = {2026},
  howpublished = {\url{https://about.fb.com/news/2026/06/its-time-to-reserve-your-whatsapp-username/}},
  note         = {Accessed July 30, 2026}
}

@inproceedings{keller2016mascot,
  title={MASCOT: faster malicious arithmetic secure computation with oblivious transfer},
  author={Keller, Marcel and Orsini, Emmanuela and Scholl, Peter},
  booktitle={Proceedings of the 2016 ACM SIGSAC conference on computer and communications security},
  pages={830--842},
  year={2016}
}

@article{baum2020ole,
  title={Efficient protocols for oblivious linear function evaluation from ring-LWE},
  author={Baum, Carsten and Escudero, Daniel and Pedrouzo-Ulloa, Alberto and Scholl, Peter and Troncoso-Pastoriza, Juan Ram{\'o}n},
  journal={Journal of Computer Security},
  volume={30},
  number={1},
  pages={39--78},
  year={2022},
  publisher={SAGE Publications Sage UK: London, England}
}

@inproceedings{bruggemann2023flute,
  title={FLUTE: fast and secure lookup table evaluations},
  author={Br{\"u}ggemann, Andreas and Hundt, Robin and Schneider, Thomas and Suresh, Ajith and Yalame, Hossein},
  booktitle={2023 IEEE Symposium on Security and Privacy (SP)},
  pages={515--533},
  year={2023},
  organization={IEEE}
}

@article{gui2024leakage,
  title={Analyze Your Leakage! Security Analysis of Encryption Schemes for Substring Search},
  author={Gui, Zichen and Paterson, Kenneth G and Patranabis, Sikhar},
  journal={Cryptology ePrint Archive},
  year={2024}
}

@inproceedings{bonte2020homomorphic,
  title={Homomorphic string search with constant multiplicative depth},
  author={Bonte, Charlotte and Iliashenko, Ilia},
  booktitle={Proceedings of the 2020 ACM SIGSAC Conference on Cloud Computing Security Workshop},
  pages={105--117},
  year={2020}
}

@article{narisada2026tfhe, 
title={Efficient Homomorphic String Search via TFHE}, 
volume={1}, 
ISSN={3083-2454}, 
DOI={10.53941/pc.2026.100013}, 
number={2}, 
journal={Pragmatic Cybersecurity}, 
publisher={Scilight Press}, 
author={Narisada, Shintaro and Okada, Hiroki and Nishide, Takashi and Fukushima, Kazuhide}, 
year={2026},
pages={13} }

@article{secoaguirre2026text,
  title={Extending homomorphic algorithms for encrypted text comparison},
  author={Seco-Aguirre, I{\~n}aki and Regueiro, Cristina and Bernab{\'e}-Rodr{\'\i}guez, Julen and Jacob, Eduardo},
  journal={Scientific Reports},
  url={https://doi.org/10.1038/s41598-026-48255-2},
  year={2026},
  publisher={Nature Publishing Group UK London}
}

@inproceedings{yasuda2013secure,
  title={Secure pattern matching using somewhat homomorphic encryption},
  author={Yasuda, Masaya and Shimoyama, Takeshi and Kogure, Jun and Yokoyama, Kazuhiro and Koshiba, Takeshi},
  booktitle={Proceedings of the 2013 ACM workshop on Cloud computing security workshop},
  pages={65--76},
  year={2013}
}

@inproceedings{yasuda2014practical,
  author       = {Masaya Yasuda and
                  Takeshi Shimoyama and
                  Jun Kogure and
                  Kazuhiro Yokoyama and
                  Takeshi Koshiba},
  editor       = {Joaqu{\'{\i}}n Garc{\'{\i}}a{-}Alfaro and
                  Georgios V. Lioudakis and
                  Nora Cuppens{-}Boulahia and
                  Simon N. Foley and
                  William M. Fitzgerald},
  title        = {Practical Packing Method in Somewhat Homomorphic Encryption},
  booktitle    = {Data Privacy Management and Autonomous Spontaneous Security - 8th
                  International Workshop, {DPM} 2013, and 6th International Workshop,
                  {SETOP} 2013, Egham, UK, September 12-13, 2013, Revised Selected Papers},
  series       = {Lecture Notes in Computer Science},
  volume       = {8247},
  pages        = {34--50},
  publisher    = {Springer},
  year         = {2013},
  url          = {https://doi.org/10.1007/978-3-642-54568-9\_3},
  doi          = {10.1007/978-3-642-54568-9\_3},
  bibsource    = {dblp computer science bibliography, https://dblp.org}
}

@inproceedings{ishimaki2017private,
  title={Private substring search on homomorphically encrypted data},
  author={Ishimaki, Yu and Imabayashi, Hiroki and Yamana, Hayato},
  booktitle={2017 IEEE International Conference on Smart Computing (SMARTCOMP)},
  pages={1--6},
  year={2017},
  organization={IEEE}
}

\appendix

\begin{figure*}[t]
    \centering
    \begin{minipage}[t]{0.64\textwidth}
        \centering
        \begin{subfigure}[t]{0.35\textwidth}
            \centering
            \includegraphics[width=\textwidth]
                {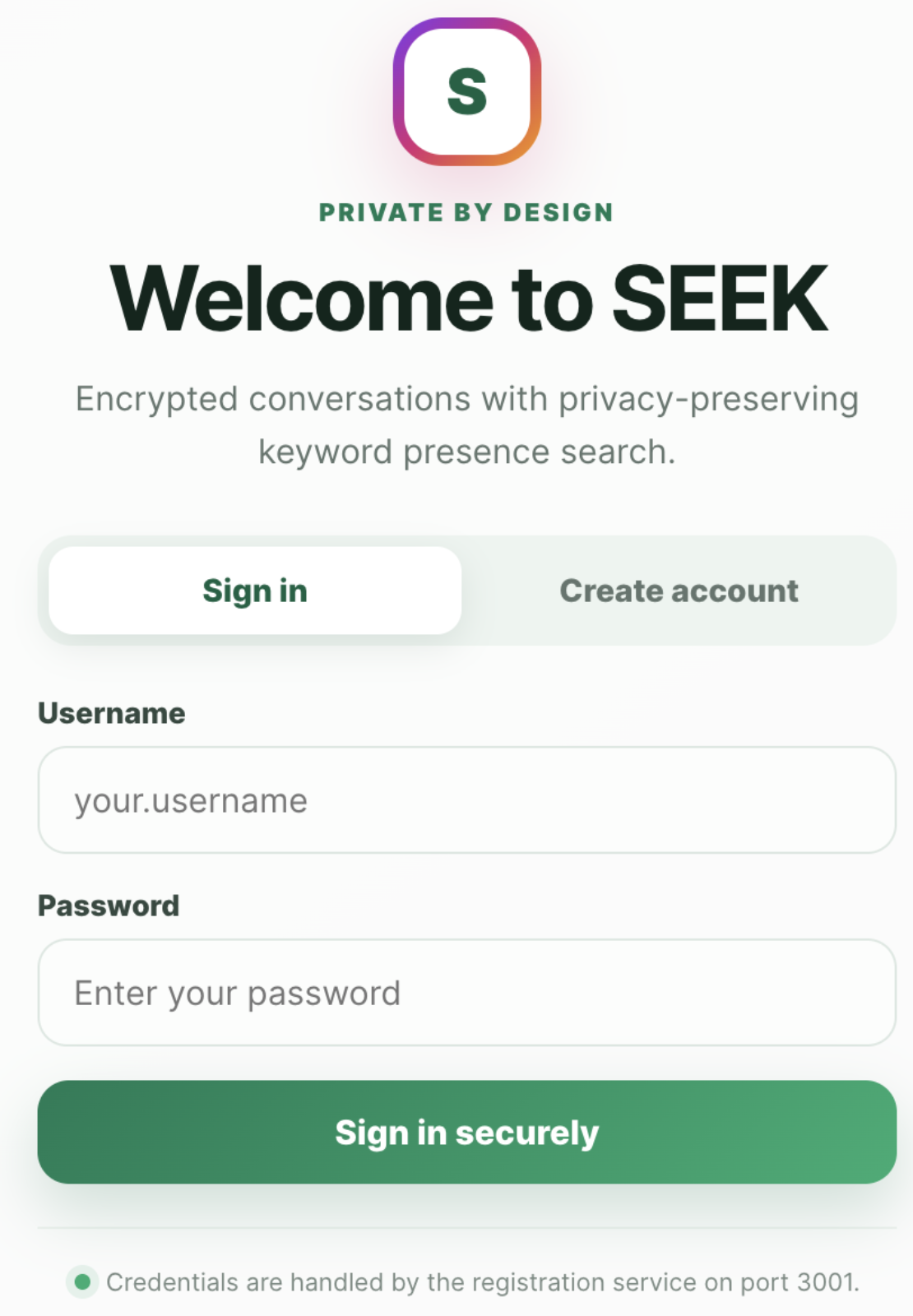}
            \caption{Login Interface.}
            \label{fig:login-interface}
        \end{subfigure}
        \hfill
        \begin{subfigure}[t]{0.315\textwidth}
            \centering
            \includegraphics[width=\textwidth]
                {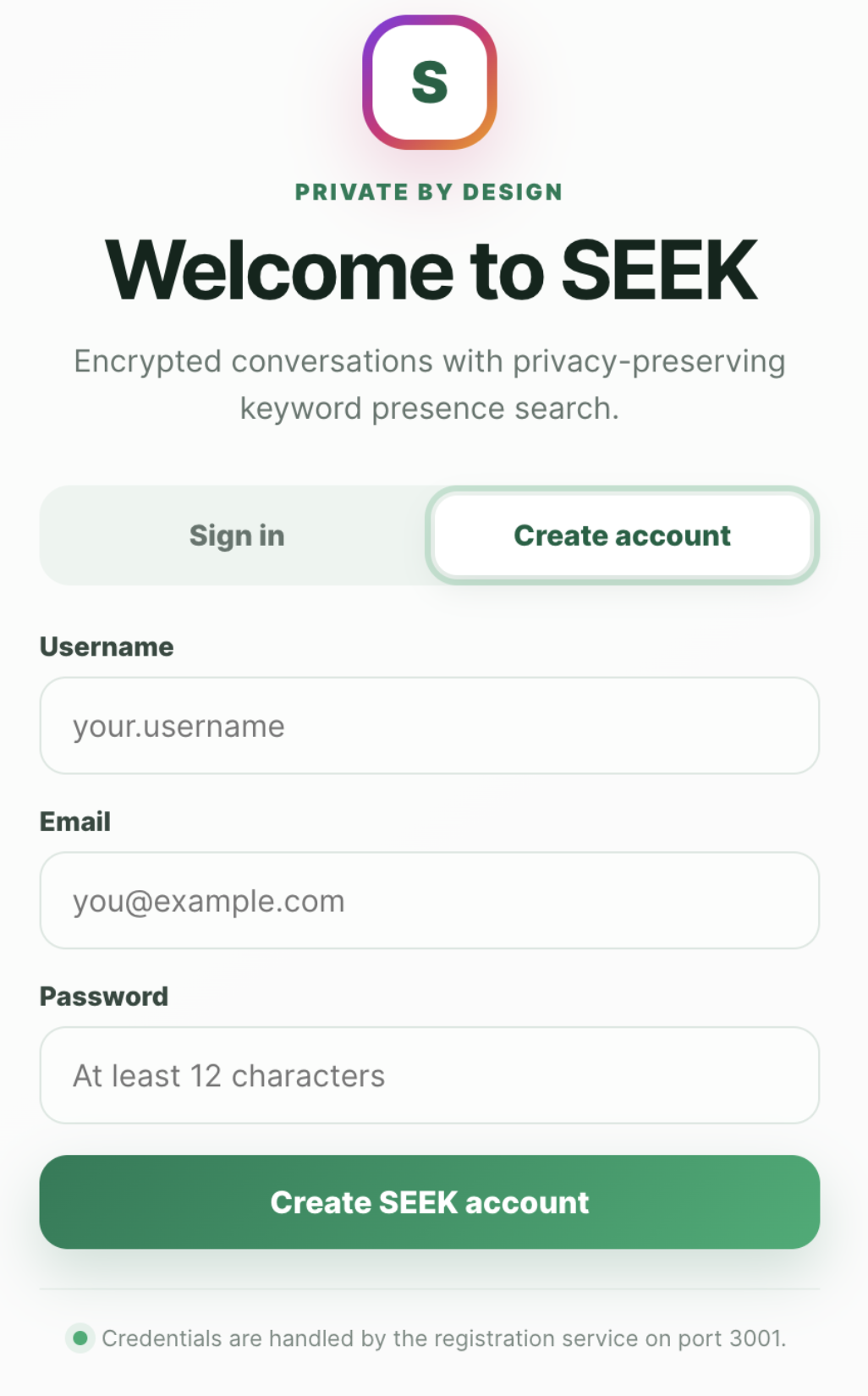}
            \caption{Registration Interface.}
            \label{fig:registration-interface}
        \end{subfigure}
        \hfill
        \begin{subfigure}[t]{0.315\textwidth}
            \centering
            \includegraphics[width=\textwidth]
                {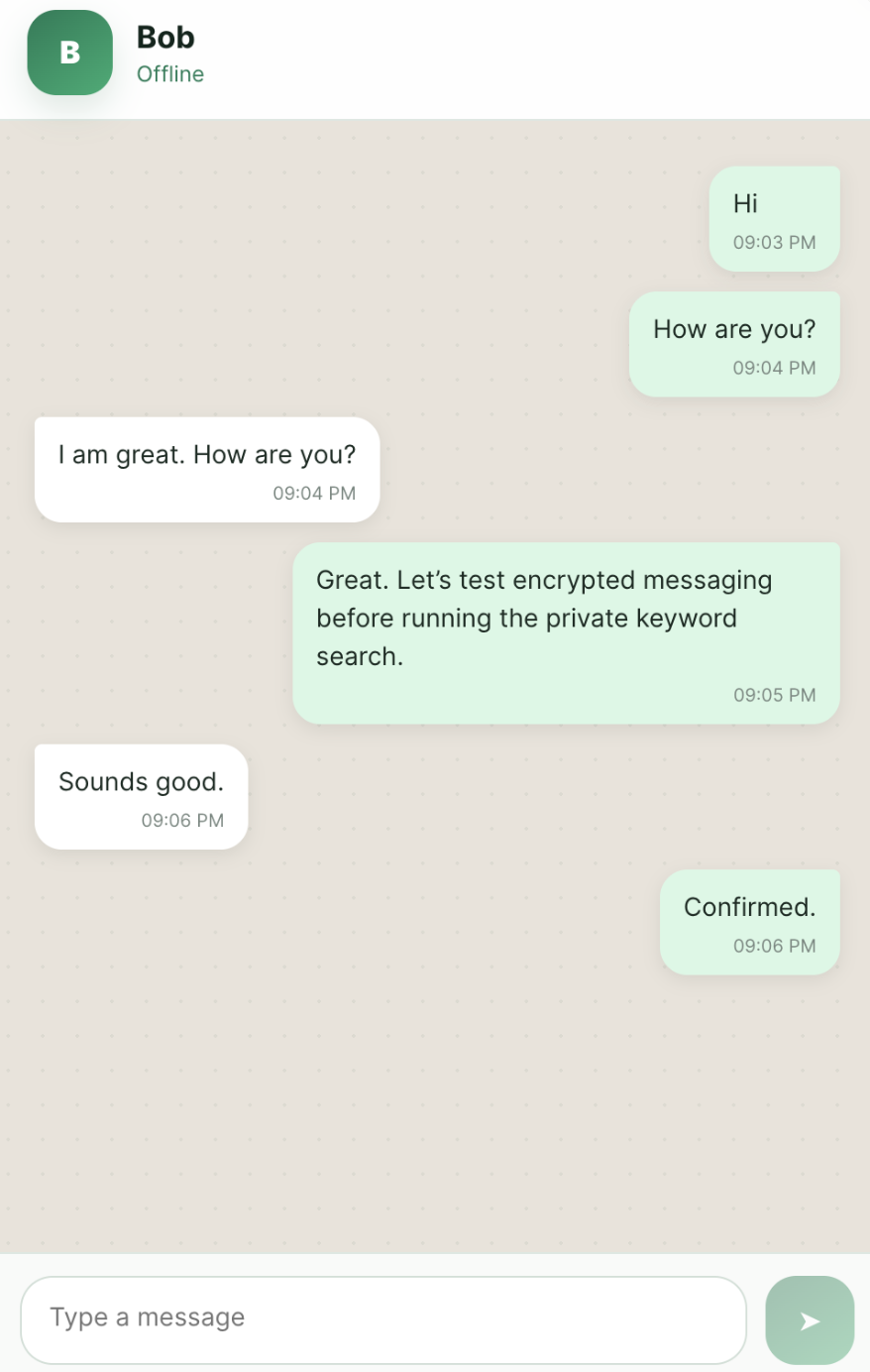}
            \caption{Chat Interface.}
            \label{fig:chat-interface}
        \end{subfigure}
    \caption{Client interfaces for login, registration, and encrypted messaging.}
    \label{fig:client-interfaces}
    \end{minipage}
    \hfill
    \begin{minipage}[t]{0.35\textwidth}
        \centering
        \includegraphics[width=\linewidth]
            {\detokenize{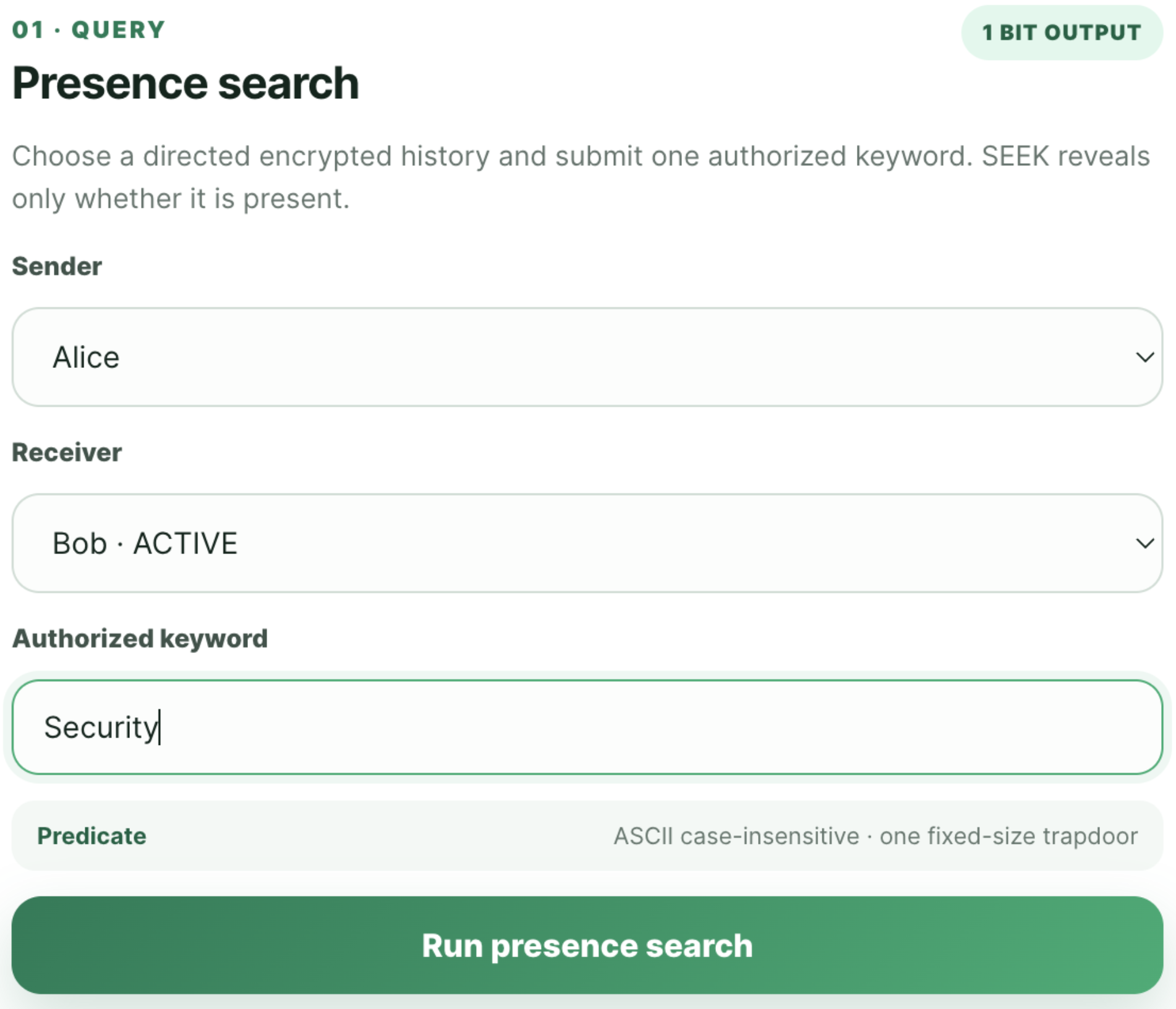}}
        \caption{Search Interface: Requires a directed message history of targeted end users, and an ASCII case-insensitive keyword.}
        \label{fig:search-interface}
    \end{minipage}
\end{figure*}

\begin{figure*}[t]
    \centering
    \begin{subfigure}[t]{0.17\linewidth}
        \centering
        \includegraphics[width=\linewidth]
            {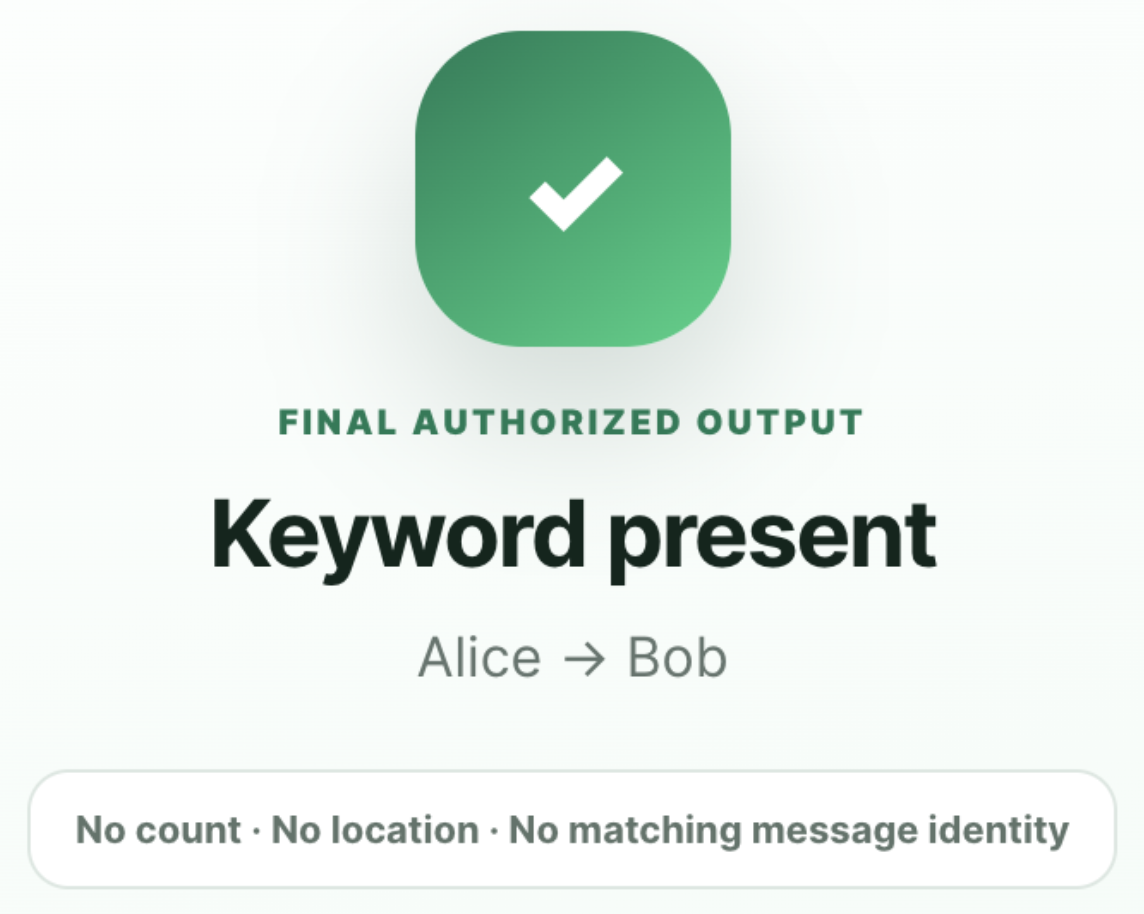}
        \caption{Successful Search.}
        \label{fig:keyword-present}
    \end{subfigure}
    \hfill
    \begin{subfigure}[t]{0.17\linewidth}
        \centering
        \includegraphics[width=\linewidth]
            {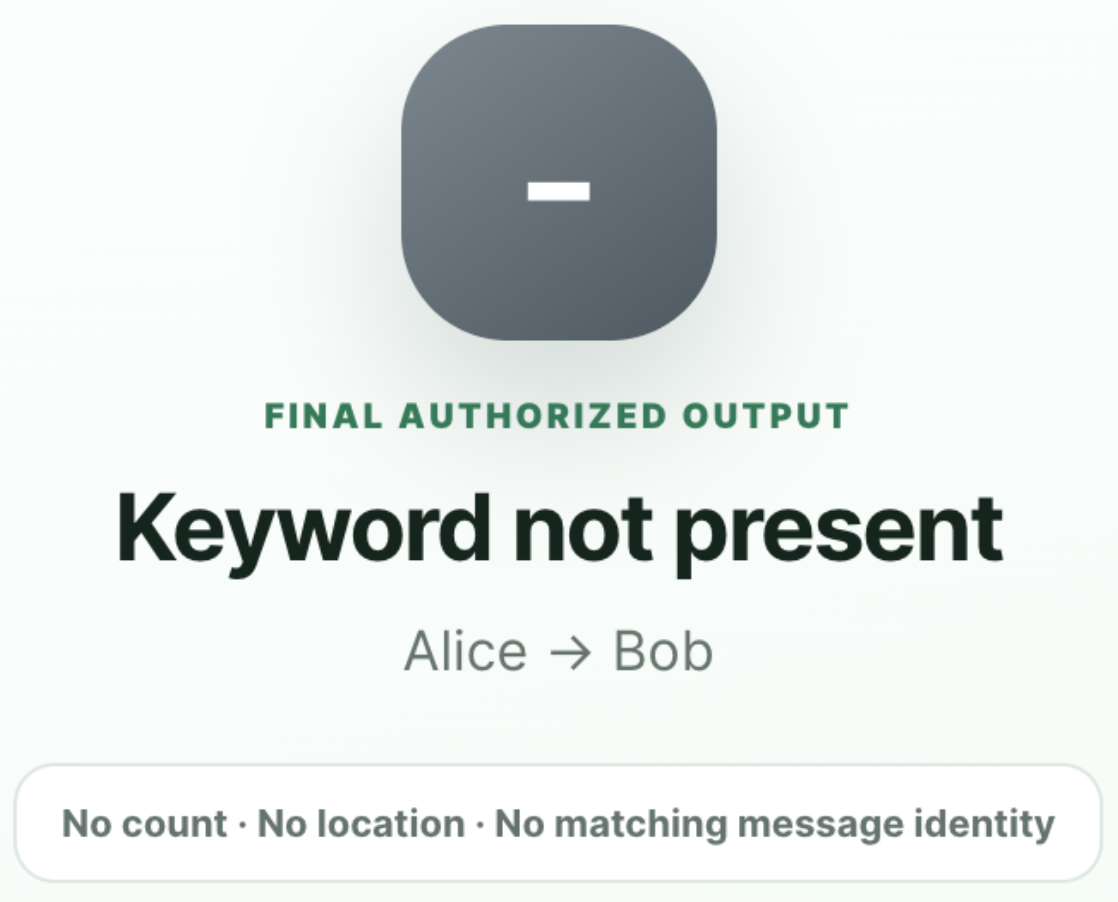}
        \caption{Unsuccessful Search.}
        \label{fig:keyword-absent}
    \end{subfigure}
    \hfill
    \begin{subfigure}[t]{0.65\linewidth}
        \centering
        \includegraphics[width=\linewidth]
            {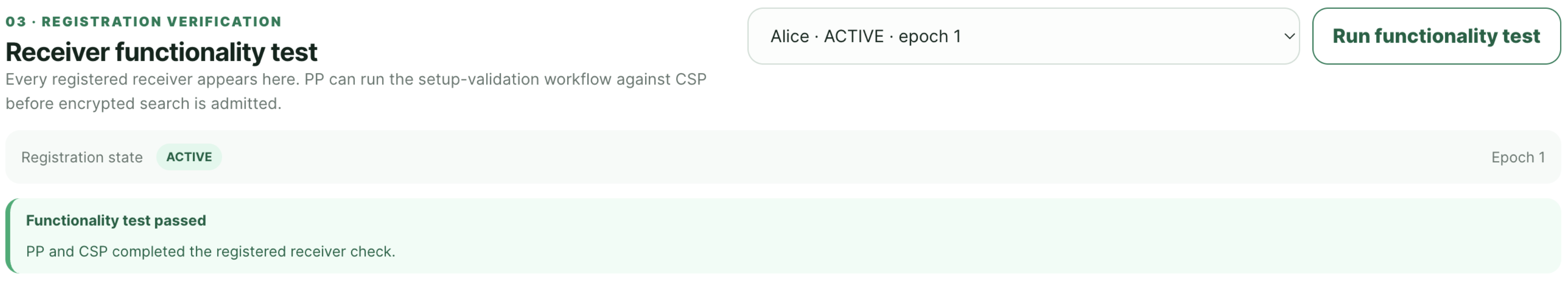}
        \caption{Functional validation of the end users' setup artifacts, as described in Section~\ref{subsec:functional-validation}.}
        \label{fig:setup-functional-validation}
    \end{subfigure}

    \caption{\ProtocolName{} application outputs for keyword-presence search and functional validation of end-user setup artifacts.}
    \label{fig:application-search-validation}
\end{figure*}

\section{Prototype \ProtocolName{} Application Details}
\label{applicationImages}

\noindent \ding{110} \paragraphNew{\textbf{\DeepUL{Prototype Implementation}}}
To quantify end-to-end latency on commodity devices and validate deployability within an encrypted-messaging workflow, a proof-of-concept web and cross-platform mobile application was developed for the \ProtocolName{} protocol. The prototype consists of four logical components: (i) a combined Chat/Registration layer for authentication, receiver setup, certified public-key distribution, and encrypted-message communication; (ii) a $\mathsf{PP}$ service for trapdoor generation, search coordination, and final-result reconstruction; (iii) a $\mathsf{CSP}$ service for encrypted storage and search evaluation; and (iv) sender/receiver endpoints implemented as web and mobile messaging clients. The React web client utilizes a bundled \textsf{node-seal} backend~\cite{node_seal}, which exposes the coefficient-domain BFV plaintext operations required by \ProtocolName{} through WebAssembly. The cross-platform mobile client is built in React Native using Expo SDK~\cite{expo}. Since React Native does not execute WebAssembly directly, the mobile client invokes the same embedded SEAL WASM stack inside a hidden WebView via a structured message-passing interface. Both clients verify the receiver's certified public key, encrypt outgoing message fragments locally, and decrypt received messages exclusively at the endpoint. The $\mathsf{PP}$ and $\mathsf{CSP}$ components are implemented as Python services using FastAPI, with core cryptographic operations executed by native helpers built on \textsf{Microsoft SEAL}. The components communicate through authenticated REST APIs, and Socket.IO supports real-time messaging. For each authorized search, $\mathsf{PP}$ pins the selected directed corpus and generates a fixed-size encrypted trapdoor, while $\mathsf{CSP}$ applies the encrypted transformation $m\mapsto1-2m$, evaluates BFV correlation, performs relinearization and modulus switching, and retains its local BFV shares. The parties directly generate fresh Boolean multiplication triples~\cite{demmler2015aby}, doubly authenticated bits (daBits)~\cite{rotaru2019marbled,escudero2020mixed}, scalar oblivious linear evaluation (OLE) zero-test tokens~\cite{baum2020ole}, and arithmetic Beaver multiplication triples using RSA-based oblivious transfers (OTs) and extensions. These resources support joint-mask GMW, selected decoding with daBits, scalar-OLE zero testing, and balanced Beaver product aggregation. Only the final product-tree root is reconstructed at $\mathsf{PP}$ and mapped to a single presence bit, whereas no match position, count, correlation score, or matching-message identity is revealed, and $\mathsf{CSP}$ receives no result. The four components communicate as distinct networked roles within a containerized local deployment with different endpoints, and the unified client architecture supports both iOS and Android.

\noindent \ding{110}
\paragraphNew{\textbf{\DeepUL{Interfaces and Workflow}}}
Fig.~\ref{fig:client-interfaces} and Fig.~\ref{fig:search-interface} illustrate the authentication, encrypted messaging, and authorized search interfaces. Fig.~\ref{fig:application-search-validation} displays the search outputs and setup diagnostics. Upon successful registration, a receiver epoch is automatically provisioned and validated, enabling the client to access the chat interface. Each outgoing message is independently fragmented and encrypted locally using the receiver's certified BFV public key, transmitted via Chat services to $\mathsf{CSP}$, and decrypted and reassembled exclusively by the receiver. As a result, searches do not span message boundaries. In Fig.~\ref{fig:search-interface}, the $\mathsf{PP}$ operator selects a sender $\mathsf{S}$, a receiver $\mathsf{R}$, and an ASCII case-insensitive keyword. The direction $\mathsf{S}\rightarrow\mathsf{R}$ refers solely to the ordered history of messages sent from $\mathsf{S}$ to $\mathsf{R}$ within the selected receiver epoch, excluding the reverse history. $\mathsf{PP}$ submits a single fixed-size encrypted trapdoor for this history, after which the online search is conducted with $\mathsf{CSP}$ without receiver involvement. As depicted in Fig.~\ref{fig:keyword-present} and Fig.~\ref{fig:keyword-absent}, the only corpus-dependent output is the bit $\mathsf{present}\in\{0,1\}$, shown as \emph{Keyword present} or \emph{Keyword not present}. No match count, location, or correlation score is revealed, and $\mathsf{CSP}$ obtains no result. Finally, Fig.~\ref{fig:setup-functional-validation} presents an optional diagnostic for rerunning the $\mathsf{PP}$-$\mathsf{CSP}$ validation of an active receiver setup. 

\section{Comparison with Prior Encrypted Schemes}
\label{ComparisonWithHE}

This appendix expands upon Section~\ref{RelatedWork} by evaluating \ProtocolName{} in comparison with representative HE-based pattern and keyword-matching constructions. The fragmentation-based RLWE scheme~\cite{1} is identified as the closest construction-level baseline to \ProtocolName{}, enabling direct comparison of fragmentation and operation counts. Other constructions, such as \texttt{CipherMatch}~\cite{7}, constant-depth BGV~\cite{bonte2020homomorphic}, BWT-based FHE substring search~\cite{ishimaki2017private}, suffix-array TFHE~\cite{narisada2026tfhe}, and CKKS-based comparison~\cite{secoaguirre2026text}, utilize distinct encodings, evaluation models, correctness regimes, and output interfaces. Therefore, their suitability for \ProtocolName{}'s incremental messaging history, role-separated, presence-only setting is assessed qualitatively rather than through direct runtime comparison.

Fragmentation-based RLWE matching~\cite{1} extends the boundary-coverage strategy introduced in~\cite{1:BCC20}. Here, $\phi$ represents the base-fragment length, and $\eta=\lceil L/\phi\rceil$ for a message of length $L$. The scheme in~\cite{1:BCC20} requires $\phi\geq2(L_{\max}-1)$ and generates $\eta$ base fragments along with $\eta-1$ boundary fragments, resulting in a total of $2\eta-1$ logical fragments. In the padded two-grid instantiation, these fragments form two length-$\phi$ grids offset by $\phi/2$. 
The RLWE refinement~\cite{1} uploads only the $\eta$ non-overlapping base ciphertexts. Its matching algorithm constructs $\eta-1$ adjacent-pair ciphertexts, and applies packed Hamming-distance matching to each pair. In contrast, under its fixed 8-bit encoding and byte-aligned fragmentation, \ProtocolName{} fragments each message independently and employs the minimal sufficient overlap $L_{\max}-8$. For a message of length $L_i$, it stores $\mathcal J_i$ ciphertext fragments, defined as follows (refer Section \ref{subsec:fragEncrypt}): 
$
  \mathcal J_i=
  \max\!\left\{
    1,
    \left\lceil
      {(L_i-L_{\max}+8)}/
           ({N-L_{\max}+8})
    \right\rceil
  \right\}
$
Therefore, for messages with lengths $L_1,\ldots,L_M$, the total number of fragments is given by $\mathcal J=\sum_{i=1}^{M}\mathcal J_i$.

\ding{42} \paragraphNew{\DeepUL{{Example}}} Consider a message with $L=131072$ and $L_{\max}=512$. For~\cite{1}, our comparison uses $\phi=L_{\max}=512$, so $\eta=256$: scheme~\cite{1} uploads $256$ base ciphertexts and evaluates $255$ adjacent-pair instances. For the padded instantiation of~\cite{1:BCC20}, let $\phi=2L_{\max}=1024$, so $\eta=128$ and the scheme materializes $2\eta-1=255$ logical fragments. \ProtocolName{} uses $\mathcal J=37$ for $N=4096$ and $\mathcal J=5$ for $N=2^{15}$. Relative to~\cite{1}, this reduces the stored-ciphertext count by $6.92\times$ and $51.2\times$, respectively, and the per-query matching-instance count by $6.89\times$ and $51\times$. Relative to~\cite{1:BCC20}, the corresponding logical-fragment-footprint reductions are $6.89\times$ and $51\times$.

As specified in~\cite{1}, forming each adjacent-pair ciphertext uses one ct-pt multiplication by the public monomial and one ciphertext addition. Its normal match algorithm then uses one ct-ct multiplication, two ct-pt polynomial multiplications, and two ciphertext additions/subtractions per pair.
The published wildcard match used in our case-insensitive comparison (Section~\ref{sec:eval_microbenchmarks}) has a core of two ct--ct multiplications, two ct--pt multiplications, including the public scalar-by-2 multiplication, and two ciphertext additions/subtractions per pair. Including adjacent-pair construction, the complete evaluated wildcard path therefore performs two ct--ct multiplications, three ct--pt multiplications, and three ciphertext additions/subtractions per pair. This wildcard extension can realize the same ASCII case-insensitive predicate. The distinction of \ProtocolName{} is therefore not the predicate itself, but its combined construction using one trapdoor ciphertext, one query-dependent ct-ct multiplication per stored fragment, fewer evaluated instances, and presence-only aggregation. Adding a presence-only output layer to~\cite{1} can align its leakage with \ProtocolName{}, but does not reduce its per-query adjacent-pair evaluations.

\texttt{CipherMatch}~\cite{7} addresses client-server exact binary string matching by packing $b$ bits per BFV plaintext coefficient, complementing and replicating the query, encrypting multiple left-shifted query polynomials, and adding each variant to every encrypted database block. A matching $b$-bit chunk produces the all-ones value $2^b-1$. Let $S(L_P)$ represent the number of shifted variants and $B$ the number of encrypted database blocks, where $B=\lceil L/(Nb)\rceil$ for a single $L$-bit sequence. The matching core uploads $S(L_P)$ query ciphertexts once and reuses them across the $B$ blocks, resulting in $S(L_P)B$ ciphertext additions at the server. Since~\cite{7} specifies only eight variants for its 8-bit example and does not provide a general formula for $S(L_P)$, a query-length-hiding adaptation must pad to a public bound $S_{\max}$. This approach requires $\Theta(S_{\max})$ query ciphertexts and $\Theta(S_{\max}B)$ server additions. Furthermore,~\cite{7} does not address matches that cross separately encrypted database polynomials, and its server natively generates and returns exact match locations. Consequently, a presence-only E2EE adaptation necessitates explicit boundary handling and secure result aggregation. \textit{In contrast, \ProtocolName{} employs a single fixed-size encrypted trapdoor and retrieves all within-fragment offsets using one ct-ct multiplication per fragment, thereby avoiding shifted-query amplification.}

Constant-depth BGV search~\cite{bonte2020homomorphic} encodes one character per SIMD slot and represents an $N_T$-character text using an $(M,k)$-cover, where $k$ is the slot count and $M<k$ is the server-visible pattern-length parameter. It stores $r=\lceil(N_T-M+1)/(k-M+1)\rceil$ length-$k$ chunks with $M-1$ characters of overlap, ensuring unique coverage of length-$M$ candidates. The search invokes its randomized $\mathsf{HomEQ}$ circuit $rM$ times. Thus, although its multiplicative depth is independent of $M$, its operation count remains pattern-length dependent through repeated multiplications, rotations, and Frobenius maps. An unequal candidate is falsely accepted with probability $t^{-d}$, giving whole-search correctness of at least $(1-t^{-d})^{r(k-M+1)}$. Its wildcard extension adds one ct-ct multiplication per equality evaluation but does not natively implement ASCII case-insensitive matching. The compressed output preserves all occurrence locations for client decryption. Although the server can derive a new cover from existing ciphertexts, this requires additional selections, rotations, and additions. The construction provides neither a length-hiding fixed-bound execution nor canonical pruning of duplicate shorter-pattern candidates. \textit{In contrast, \ProtocolName{} fixes fragmentation and public pruning using $L_{\max}$, privately masks candidates incomplete for the hidden $L_P$, and releases only a corpus-wide presence-or-absence bit.}

The private substring search of~\cite{ishimaki2017private} evaluates encrypted substring queries over encrypted data using SIMD batching and a BWT-based representation. However, this construction does not specify mechanisms for updating independently encrypted, incrementally arriving messages or for supporting a role-separated, corpus-wide presence-only interface. Supporting \ProtocolName{}'s setting would therefore require rebuilding or updating the BWT representation, or maintaining separate structures and privately aggregating their outputs. Moreover, TFHE-based search~\cite{narisada2026tfhe} replaces exhaustive alignment testing with homomorphic binary search over an encrypted suffix array. Prior to encryption, the data owner pads the entire text, constructs its suffix array in plaintext, and encrypts both the text and the decomposed suffix-array tables. Considering the padded pattern and text as $\widetilde P$ and $\widetilde T$, CMux-tree lookups and bootstrapped lexicographic comparisons achieve $O(|\widetilde P|\log|\widetilde T|)$ matching complexity while concealing the actual lengths within publicly known padded bounds. Similarly, the construction specifies no dynamic index-update procedure, so incrementally arriving messages would require rebuilding/updating the suffix array, or maintaining separate indexes and aggregating their outputs. Its query contains one LWE ciphertext per encoded pattern character. Under the proposed extension, each character expands into four encoded characters and therefore four query ciphertexts, resulting in increased runtime and memory use. The native output is one encrypted matching position or nothing, rather than role-separated corpus-wide presence bit. \textit{\ProtocolName{} instead directly supports incrementally encrypted histories, native case-insensitive matching, and one fixed-size trapdoor ciphertext.}

CKKS-based comparison~\cite{secoaguirre2026text} packs an ASCII-encoded target and a replicated, zero-padded pattern into CKKS slots. The evaluator receives the target bound $n$, pattern length $L_P$, and padding metadata in the clear, then examines the resulting $n-L_P+1$ alignments using rotations. Across these alignments, two stages of Chebyshev-polynomial approximation implement absolute-value and nonzero testing, after which a balanced multiplication tree produces an encrypted containment value intended to be zero for presence and one for absence. Because this predicate is approximate, near-ASCII mismatches can produce values undesirably close to zero and increase false-positive risk. Increasing the polynomial degree and multiplicative depth improves the distinguishability but raises runtime and memory. A single invocation is also limited by the available CKKS slots, so longer or incrementally accumulated histories require additional fragmentation, boundary handling, and cross-ciphertext aggregation. \textit{In contrast, \ProtocolName{} deterministically evaluates overlap-complete fragments, hides $L_P$ within public $L_{\max}$, and releases corpus-wide presence through role-separated two-party computation.}

\section{Preprocessing and Selected Decoding}
\label{app:seldec-realization}

This section realizes the ideal selected-decoding preprocessing relation $\mathcal F_{\mathsf{SD}}^{\mathsf{pre}}$ through $\Pi_{\mathsf{SD}}^{\mathsf{pre}}$ and, using its outputs, realizes $\mathcal F_{\mathsf{SD}}$ from Section~\ref{subsec:presence-selected-decoding} through $\Pi_{\mathsf{SD}}^{\mathsf{on}}$. The former combines local scaled decomposition, semi-honest GMW, and one daBit conversion per selected coefficient, while the latter uses GMW comparisons and daBit-based Boolean-to-arithmetic conversion~\cite{goldreich1987mental,demmler2015aby,rotaru2019marbled,escudero2020mixed}. Thus, Boolean triples and daBits are the only standard input-independent resources required within selected decoding. The construction follows the multiparty-BFV paradigm~\cite{mouchet2021multiparty} and the masked-opening and private-rounding techniques of MPC threshold-FHE decryption~\cite{zyskind2025threshold}.

\noindent \ding{110} \textbf{\DeepUL{Scaled Quotient and Remainder.}} For $u\in\mathbb Z_{Q'}$, let $u^+=\operatorname{can}_{Q'}(u)\in\{0,\ldots,Q'-1\}$ denote its canonical integer representative. We define the \emph{scaled quotient} $k_u$ and the
\emph{scaled remainder} $\rho_u$ as the quotient and remainder obtained by
dividing the integer $t u^+$ by $Q'$, such that: $t u^+=k_u Q'+\rho_u$, for $0\leq\rho_u<Q'$. Equivalently, $k_u = \lfloor{t u^+}/{Q'}\rfloor$ and $\rho_u = t u^+-k_uQ'$. We denote this quotient-remainder pair by $\mathsf{ScaleDiv}_{Q',t}(u)= (k_u,\rho_u)$. Since $0\leq u^+<Q'$, these values satisfy $0\leq k_u<t$ and 
$0\leq\rho_u<Q'$. For a nonnegative integer $\rho<Q'$, the
notation $\llbracket\rho\rrbracket_{\mathsf B}$ denotes componentwise Boolean
XOR shares of its fixed-length binary representation. Let $\ell_Q=\lceil\log_2 Q'\rceil$, so every $\rho\in\{0,\ldots,Q'-1\}$ has an $\ell_Q$-bit representation.

\noindent \ding{110} \textbf{\DeepUL{Two-party Offline
Preprocessing.}} For parameters $(Q',t)$, every $\xi\in\mathcal I$, $\mathcal F_{\mathsf{SD}}^{\mathsf{pre}}$ receives $r_\xi^{\mathsf{PP}},r_\xi^{\mathsf{CSP}}\in\mathbb Z_{Q'}$, and computes:
$$r_\xi=r_\xi^{\mathsf{PP}}+r_\xi^{\mathsf{CSP}}\!\!\!\!\!\pmod{Q'},(k_{r,\xi},\,\,\rho_{r,\xi})=\mathsf{ScaleDiv}_{Q',t}(r_\xi)$$
To share these outputs, it samples $k_{r,\xi}^{\mathsf{PP}}{\leftarrow}\mathbb Z_t$ and $\rho_{r,\xi,i}^{\mathsf{PP}}{\leftarrow}\{0,1\}$ independently for every bit $i$, sets $k_{r,\xi}^{\mathsf{CSP}}$ and $\rho_{r,\xi,i}^{\mathsf{CSP}}$ as follows and returns only the local shares. 
\[
k_{r,\xi}^{\mathsf{CSP}}
 =k_{r,\xi}-k_{r,\xi}^{\mathsf{PP}}\!\!\!\pmod t,\,\,\,
\rho_{r,\xi,i}^{\mathsf{CSP}}
 =\rho_{r,\xi,i}\oplus\rho_{r,\xi,i}^{\mathsf{PP}}
\]
All sampling is independent across components, coefficients, and executions. Protocol $\Pi_{\mathsf{SD}}^{\mathsf{pre}}$ realizes this relation as follows. 
\begin{enumerate}
    \item Each party $A\in\{\mathsf{PP},\mathsf{CSP}\}$ independently samples $r_\xi^A\gets\mathbb Z_{Q'}$ and locally computes $(k_\xi^A,\rho_\xi^A)=\mathsf{ScaleDiv}_{Q',t}(r_\xi^A)$.
    \item Using GMW on private $\ell_Q$-bit values $\rho_\xi^{\mathsf{PP}}$ and $\rho_\xi^{\mathsf{CSP}}$, the parties compute Boolean shares of $c_\xi$ and $\rho_{r,\xi}$, as: 
    \[S_\xi=\rho_\xi^{\mathsf{PP}}+\rho_\xi^{\mathsf{CSP}},\, c_\xi=\mathbf 1[S_\xi\geq Q'],\,\rho_{r,\xi}=S_\xi-c_\xi Q'\]
    \item Parties convert only $\llbracket c_\xi\rrbracket_{\mathsf B}$ to arithmetic shares $\langle c_\xi\rangle_t$ using one daBit and locally set $k_{r,\xi}^A=k_\xi^A+c_\xi^A\pmod t$.
\end{enumerate}

\noindent The resulting local state is $(r_\xi^A,k_{r,\xi}^A, \llbracket\rho_{r,\xi}\rrbracket_{\mathsf B}^A)$ and neither party reconstructs $r_\xi$, $c_\xi$, $k_{r,\xi}$, or $\rho_{r,\xi}$.

\begin{lemma}[Correctness and Security of $\Pi_{\mathsf{SD}}^{\mathsf{pre}}$]
\label{lem:seldec-prep-realization}
Given fresh, non-reused Boolean triples and daBits and semi-honest-secure GMW and Boolean-to-arithmetic conversion, $\Pi_{\mathsf{SD}}^{\mathsf{pre}}$ securely realizes $\mathcal F_{\mathsf{SD}}^{\mathsf{pre}}$ against a static semi-honest adversary corrupting at most one of $\mathsf{PP}$ and $\mathsf{CSP}$.
\end{lemma}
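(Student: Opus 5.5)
We prove the lemma in two parts: correctness of the output relation, then simulation security for each corrupted party separately. We work in the hybrid model where GMW (with fresh Boolean triples) and daBit-based Boolean-to-arithmetic conversion are ideal.

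\emph{Correctness.} Fix $\xi$. Write $a=\operatorname{can}_{Q'}(r_\xi^{\mathsf{PP}})$ and $b=\operatorname{can}_{Q'}(r_\xi^{\mathsf{CSP}})$, and let $S=a+b$. Then $\operatorname{can}_{Q'}(r_\xi)=S-cQ'$ with $c=\mathbf 1[S\ge Q']$. By Step~1,
\[
ta=k^{\mathsf{PP}}Q'+\rho^{\mathsf{PP}}, \qquad tb=k^{\mathsf{CSP}}Q'+\rho^{\mathsf{CSP}}.
\]
Adding these gives
\[
t\operatorname{can}_{Q'}(r_\xi)=(k^{\mathsf{PP}}+k^{\mathsf{CSP}}-tc)\,Q'+(\rho^{\mathsf{PP}}+\rho^{\mathsf{CSP}}).
\]
Set $S_\xi=\rho^{\mathsf{PP}}+\rho^{\mathsf{CSP}}\in[0,2Q')$ and $c_\xi=\mathbf 1[S_\xi\ge Q']$. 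Then $\rho_{r,\xi}:=S_\xi-c_\xi Q'\in[0,Q')$ and
\[
t\operatorname{can}_{Q'}(r_\xi)=(k^{\mathsf{PP}}+k^{\mathsf{CSP}}+c_\xi-tc)\,Q'+\rho_{r,\xi}.
\]
By uniqueness of Euclidean division, $\rho_{r,\xi}$ is the scaled remainder. The scaled quotient is $k_{r,\xi}=k^{\mathsf{PP}}+k^{\mathsf{CSP}}+c_\xi-tc$, which reduces modulo $t$ to $k^{\mathsf{PP}}+k^{\mathsf{CSP}}+c_\xi$. Since $0\le k_{r,\xi}<t$, this residue determines $k_{r,\xi}$. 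Hence the local assignments $k_{r,\xi}^A=k_\xi^A+c_\xi^A$ form additive shares of $k_{r,\xi}$ over $\mathbb Z_t$. The GMW adder, comparator and conditional subtraction output XOR shares of $c_\xi$ and of the bits of $\rho_{r,\xi}$. The daBit conversion gives $\langle c_\xi\rangle_t$. The key fact here is that the modulo-$t$ wrap term $tc$ vanishes, so no second comparison is needed.

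\emph{Distribution matching.} The ideal relation samples $\mathsf{PP}$'s output shares uniformly and sets $\mathsf{CSP}$'s shares as complements. In the real protocol, $r_\xi$ is uniform because each $r_\xi^A$ is uniform. The Boolean shares of $\rho_{r,\xi}$ produced by GMW with fresh triples have one party's share uniform and independent of everything else, and the other share is then determined. The daBit conversion masks $c_\xi$ with a fresh uniform bit, so $c_\xi^{\mathsf{PP}}$, and hence $k_{r,\xi}^{\mathsf{PP}}=k_\xi^{\mathsf{PP}}+c_\xi^{\mathsf{PP}}$, is uniform conditioned on $r_\xi^{\mathsf{PP}}$. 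The step I expect to be the main obstacle is this last claim. The term $k_\xi^{\mathsf{PP}}$ is a deterministic function of the corrupted party's own input, so uniformity of $k_{r,\xi}^{\mathsf{PP}}$ rests entirely on the additive share $c_\xi^{\mathsf{PP}}$ being uniform over $\mathbb Z_t$. I will check that the daBit conversion outputs a fresh uniform arithmetic share rather than, for example, a value in $\{0,1\}$. If a given conversion realization does not guarantee this, the fix is to re-randomize with a fresh zero-sharing, or to define the ideal output shares as whatever the ideal conversion returns.

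\emph{Simulation.} For a corrupted party $A$, the simulator receives $A$'s ideal input $r_\xi^A$ and ideal output shares $(k_{r,\xi}^A,\llbracket\rho_{r,\xi}\rrbracket_{\mathsf B}^A)$. It recomputes $(k_\xi^A,\rho_\xi^A)$ locally and sets $c_\xi^A=k_{r,\xi}^A-k_\xi^A \pmod t$. It then invokes the GMW simulator, whose opened values are uniform masked wires, with output shares consistent with the given Boolean shares. Finally, it invokes the conversion simulator on output $c_\xi^A$. All components are independent across $\xi$ and across executions because triples and daBits are fresh. By the composition theorem for semi-honest protocols, the joint output of the simulated view and the honest party's outputs is then distributed identically, or computationally indistinguishably if OT-based preprocessing is used, to the real execution.
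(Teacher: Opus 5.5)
Your proposal is correct and follows essentially the same route as the paper: the same Euclidean-division identity $t\operatorname{can}_{Q'}(r_\xi)=(k_\xi^{\mathsf{PP}}+k_\xi^{\mathsf{CSP}}+c_\xi-t\omega_\xi)Q'+\rho_{r,\xi}$ (your $c$ is the paper's $\omega_\xi$), with the wrap term vanishing modulo $t$, followed by an appeal to fresh GMW output masks and the daBit mask for uniformity of the output shares; your explicit simulator (setting $c_\xi^A=k_{r,\xi}^A-k_\xi^A$) only spells out what the paper leaves to composition. The obstacle you flag resolves affirmatively: the standard conversion with daBit $(\llbracket\rho\rrbracket_{\mathsf B},\langle\rho\rangle_t)$ opens $e=c_\xi\oplus\rho$ and sets $\langle c_\xi\rangle_t=e+(1-2e)\langle\rho\rangle_t$, and multiplying the daBit's uniform arithmetic sharing by the unit $1-2e\in\{\pm1\}$ keeps the result uniform over $\mathbb Z_t$, so no re-randomization is needed.
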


\begin{proof}
For $A\in\{\mathsf{PP},\mathsf{CSP}\}$, let
$(r_\xi^A)^+=\operatorname{can}_{Q'}(r_\xi^A)$. Local decomposition gives
$t(r_\xi^A)^+=k_\xi^A Q'+\rho_\xi^A$. Let
\[
\omega_\xi=
\mathbf 1[(r_\xi^{\mathsf{PP}})^+
+(r_\xi^{\mathsf{CSP}})^+\geq Q'] \,\,\,\text{and}\,\,\, r_\xi^+=\operatorname{can}_{Q'}(r_\xi)
\]
Then
$r_\xi^+=(r_\xi^{\mathsf{PP}})^++(r_\xi^{\mathsf{CSP}})^+
-\omega_\xi Q'$. Because
$S_\xi=c_\xi Q'+\rho_{r,\xi}$, adding the two local decompositions gives:
\[
t r_\xi^+=(k_\xi^{\mathsf{PP}}+k_\xi^{\mathsf{CSP}}+c_\xi-t\omega_\xi)Q'+\rho_{r,\xi}
\]
Uniqueness of Euclidean division therefore implies:
\[
k_{r,\xi}
=
k_\xi^{\mathsf{PP}}+k_\xi^{\mathsf{CSP}}
+c_\xi-t\omega_\xi
\equiv
k_\xi^{\mathsf{PP}}+k_\xi^{\mathsf{CSP}}+c_\xi
\pmod t
\]
This is exactly the reconstruction of the arithmetic shares, while the GMW output reconstructs the required $\rho_{r,\xi}$. All interactions occur within the GMW circuit, and the daBit-based conversion of $c_\xi$ is local. Mask sampling, scaled decomposition, and quotient-share formation are performed locally. The semi-honest simulators, together with sequential and parallel composition, therefore simulate the batched protocol view. Fresh GMW output masks ensure uniformity of the Boolean remainder shares, while the independent daBit mask guarantees uniformity of the arithmetic quotient shares, resulting in the distribution specified by $\mathcal F_{\mathsf{SD}}^{\mathsf{pre}}$.
\end{proof}

\noindent \ding{110} \textbf{\DeepUL{Online Masked Opening and Secure Rounding.}}
Using the local state produced by $\Pi_{\mathsf{SD}}^{\mathsf{pre}}$, for each $\xi\in\mathcal I$ and $A\in\{\mathsf{PP},\mathsf{CSP}\}$, the parties compute and reconstruct only:
\begin{equation}
\begin{aligned}
\bar v_\xi^A
&=p_\xi^A+r_\xi^A\pmod{Q'}\\
\bar v_\xi
&=\bar v_\xi^{\mathsf{PP}}+\bar v_\xi^{\mathsf{CSP}}
=v_\xi+r_\xi\pmod{Q'}
\end{aligned}
\label{eq:seldec-masked-opening}
\end{equation}
Because $r_\xi$ is uniform, hidden, and used once, the opened value $\bar v_\xi$ is uniform in $\mathbb Z_{Q'}$ for every fixed $v_\xi$. Since $\bar v_\xi$ is public, both parties locally compute $(k_{\bar v,\xi},\rho_{\bar v,\xi})=\mathsf{ScaleDiv}_{Q',t}(\bar v_\xi)$. Using the GMW and mixed-domain primitives of Section~\ref{sec:beaver}, they then privately compute Boolean shares of:
\begin{equation}
\begin{aligned}
b_\xi
&=
\mathbf 1
[
\rho_{\bar v,\xi}<\rho_{r,\xi}
],\,\, \rho_{v,\xi}
=
(
\rho_{\bar v,\xi}-\rho_{r,\xi}
)
\bmod Q'
\nonumber
\end{aligned}
\end{equation}
$$
h_\xi
=
\mathbf 1
[
\rho_{v,\xi}
\geq
\lceil{Q'}/{2}\rceil
]
$$
The bit $b_\xi$ is the borrow in modular subtraction, while $h_\xi$ implements BFV nearest-integer rounding. Because $Q'$ is odd, the rounding comparison has no tie. Using the two daBits, the parties convert the Boolean shares of $b_\xi$ and $h_\xi$ into arithmetic shares $\langle b_\xi\rangle_t$ and $\langle h_\xi\rangle_t$. They share the public value $k_{\bar v,\xi}$ as $(k_{\bar v,\xi}^{\mathsf{PP}},k_{\bar v,\xi}^{\mathsf{CSP}})= (k_{\bar v,\xi}\bmod t,0)$ and locally compute:
\begin{equation}
m_\xi^A
=
k_{\bar v,\xi}^A-k_{r,\xi}^A-b_\xi^A+h_\xi^A
\pmod t
\label{eq:seldec-decoded-shares}
\end{equation}
Finally, $\mathsf{PP}$ sends the rebased share $\mathsf{reb}_\xi = m_\xi^{\mathsf{PP}}+a_\xi \pmod t$ to $\mathsf{CSP}$, which outputs $w_\xi$ as follows:
\begin{equation}
w_\xi
=
m_\xi^{\mathsf{CSP}}+\mathsf{reb}_\xi
\pmod t
\label{eq:seldec-rebased-output}
\end{equation}
The protocol gives no designated output to $\mathsf{PP}$ and processes all indices independently in one batch.

\begin{lemma}[Semi-Honest Security of Selected Decoding]
\label{lem:seldec-realization-security}
In the $\mathcal F_{\mathsf{SD}}^{\mathsf{pre}}$-hybrid model, for odd $Q'$ and semi-honest-secure GMW comparison and Boolean-to-arithmetic conversion, $\Pi_{\mathsf{SD}}^{\mathsf{on}}$ securely realizes $\mathcal F_{\mathsf{SD}}$ against a static semi-honest adversary corrupting at most one party.
\end{lemma}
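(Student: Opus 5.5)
We argue in two parts: first functional correctness of $\Pi_{\mathsf{SD}}^{\mathsf{on}}$, meaning that $w_\xi=\mathsf{DecCoeff}_{Q',t}(v_\xi)+a_\xi \pmod t$; then a simulator for each corrupted party in the $\mathcal F_{\mathsf{SD}}^{\mathsf{pre}}$-hybrid model.

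\emph{Correctness.} Write $v^+=\operatorname{can}_{Q'}(v_\xi)$.
\begin{itemize}
\item Since $\bar v_\xi=v_\xi+r_\xi \pmod{Q'}$, we have $v^+=\bar v^+-r^+ + b'Q'$, where $b'$ is the borrow bit. Multiplying by $t$ and substituting both scaled decompositions gives
\[
t v^+=(k_{\bar v,\xi}-k_{r,\xi})Q'+(\rho_{\bar v,\xi}-\rho_{r,\xi})+tb'Q' .
\]
\item Set $\rho_{v,\xi}=(\rho_{\bar v,\xi}-\rho_{r,\xi})\bmod Q'$ and $b_\xi=\mathbf 1[\rho_{\bar v,\xi}<\rho_{r,\xi}]$. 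Uniqueness of Euclidean division then yields
\[
\lfloor t v^+/Q'\rfloor \equiv k_{\bar v,\xi}-k_{r,\xi}-b_\xi \pmod t ,
\]
because the $tb'$ term vanishes mod $t$.
\item $\mathsf{DecCoeff}$ uses the centered representative and nearest-integer rounding. Check that $\lfloor (t/Q')\operatorname{ctr}(v)\rceil \equiv \lfloor t v^+/Q'\rfloor + h_\xi \pmod t$ with $h_\xi=\mathbf 1[\rho_{v,\xi}\ge\lceil Q'/2\rceil]$. Rounding a value with fractional part $\rho_{v,\xi}/Q'$ adds one exactly when that fraction exceeds $1/2$; oddness of $Q'$ excludes ties. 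Replacing $v^+$ by $v^+-Q'$ in the centered case shifts the quotient by $t\equiv 0$.
\item Hence $m_\xi^{\mathsf{PP}}+m_\xi^{\mathsf{CSP}}$ reconstructs $\mathsf{DecCoeff}_{Q',t}(v_\xi)$, and adding the rebase term $a_\xi$ gives $w_\xi$.
\end{itemize}
The expected main obstacle is this step: matching the centered-representative rounding to the canonical-representative quotient-and-remainder arithmetic. It needs a short case split on whether $v^+<Q'/2$.

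\emph{Security.} Messages seen by each party:
\begin{itemize}
\item Opening $\bar v_\xi$: each party sees the other's share $\bar v^A_\xi$. Given its own view, this equals $\bar v_\xi$ minus its own share. Because $r_\xi$ is a fresh uniform mask known to neither party in full, $\bar v_\xi$ is uniform in $\mathbb Z_{Q'}$ and independent of $v_\xi$. The simulator therefore samples the peer share uniformly.
\item GMW comparison, subtraction, and rounding circuits, and the daBit conversions: these are simulated by the standard semi-honest GMW and conversion simulators. Their outputs are fresh Boolean or arithmetic shares, so each party's output share is uniform.
\item Rebased share $\mathsf{reb}_\xi$: only $\mathsf{CSP}$ receives it. A simulator for a corrupted $\mathsf{CSP}$ is given $w_\xi$ from $\mathcal F_{\mathsf{SD}}$ and sets $\mathsf{reb}_\xi=w_\xi-m_\xi^{\mathsf{CSP}}$, using its simulated $m_\xi^{\mathsf{CSP}}$. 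In the real protocol $m_\xi^{\mathsf{PP}}$ is uniformly distributed given $\mathsf{CSP}$'s view, because it is masked by the fresh $k_{r,\xi}^{\mathsf{PP}}$ and daBit shares. So this choice is identically distributed and consistent with the output.
\item Corrupted $\mathsf{PP}$: it has no output and receives only uniformly distributed messages.
\end{itemize}
Hybrid arguments replace each sub-protocol by its simulator. Parallel composition over the independent indices $\xi\in\mathcal I$ then completes the proof.
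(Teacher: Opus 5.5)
Your proposal is correct and follows essentially the paper's argument: the masked opening $\bar v_\xi$ is simulated as uniform, the GMW and daBit sub-views come from their semi-honest simulators, a corrupted $\mathsf{CSP}$ gets $\mathsf{reb}_\xi = w_\xi - m_\xi^{\mathsf{CSP}} \pmod t$, a corrupted $\mathsf{PP}$ has no output, and parallel composition over $\mathcal I$ finishes the proof. Your correctness part is the same Euclidean-division and rounding argument that the paper places separately in Lemma~\ref{lem:seldec-realization-correctness}, so including it simply makes this proof self-contained.
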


\begin{proof}
For each fixed $v_\xi$, the honest party's fresh mask share ensures that $\bar v_\xi=v_\xi+r_\xi\pmod{Q'}$ is uniformly distributed in $\mathbb Z_{Q'}$. Therefore, the simulator samples $\bar v_\xi$ uniformly and selects the honest opening share. The GMW and Boolean-to-arithmetic views are simulatable under their assumed semi-honest security. If $\mathsf{PP}$ is corrupted, $\mathcal F_{\mathsf{SD}}$ produces no output, and its local values and outgoing rebased share are determined by its input and the simulated subprotocol views. If $\mathsf{CSP}$ is corrupted, the simulator receives $w_\xi$, simulates $m_\xi^{\mathsf{CSP}}$, and sets $\mathsf{reb}_\xi = w_\xi - m_\xi^{\mathsf{CSP}} \pmod t$. This approach preserves the real conditional distribution and reconstructs the prescribed $w_\xi$. Independence across indices and parallel composition together yield the claimed batched realization.
\end{proof}

\noindent \ding{110} \textbf{\DeepUL{Separate Zero-Test Token Generation.}}
The zero-test-token construction and its share invariant are defined in Section~\ref{subsec:presence-zero-target}. Its $K=|\mathcal I|$ OLE instances are generated batch-wise and independently of $\Pi_{\mathsf{SD}}^{\mathsf{pre}}$. Each batch is bound to
$(\mathsf{receiverID},e,Q',t,\mathcal I,\mathsf{batchID})$, delivered before its associated online execution, consumed once, and then erased. The input-independent backend also supplies the fresh Boolean triples and two daBits required to compute $b_\xi$ and $h_\xi$, together with the arithmetic triples required by the aggregation tree.

\noindent \ding{110} \textbf{\DeepUL{Cost Analysis.}}
Let $\ell_t=\lceil\log_2t\rceil$ and $K=|\mathcal I|$. The following counts cover $\Pi_{\mathsf{SD}}^{\mathsf{pre}}$, $\Pi_{\mathsf{SD}}^{\mathsf{on}}$, and online Beaver aggregation. Generation of the required OLEs, Boolean triples, daBits, and arithmetic triples is accounted for separately in the evaluation. For each retained coefficient, the ripple-carry implementation of $\Pi_{\mathsf{SD}}^{\mathsf{pre}}$ uses $2\ell_Q$ Boolean AND gates for addition, $\ell_Q+1$ for public-$Q'$ subtraction, and $\ell_Q$ for conditional reduction. It therefore consumes $4\ell_Q+1$ Boolean triples and one daBit. Since each GMW AND opening transmits two logical bits per party and the daBit conversion reveals one masked bit, this stage transmits $K(8\ell_Q+3)$ logical bits per party.
Protocol $\Pi_{\mathsf{SD}}^{\mathsf{on}}$ consumes another $4\ell_Q+1$ Boolean triples and two daBits per coefficient, bringing the selected-decoding totals to $K(8\ell_Q+2)$ Boolean triples and $3K$ daBits. Before fixed-width serialization and batch-wise byte padding, the online stage transmits approximately $K(9\ell_Q+4+\ell_t)$ bits from $\mathsf{PP}$ and $K(9\ell_Q+4)$ bits from $\mathsf{CSP}$. These totals include one $\ell_Q$-bit masked-value share from each party, two bits per party for each of the $4\ell_Q+1$ GMW AND gates, two daBit-opening bits per party, and one $\ell_t$-bit rebased share from $\mathsf{PP}$. Including $\Pi_{\mathsf{SD}}^{\mathsf{pre}}$ increases the respective totals to $K(17\ell_Q+7+\ell_t)$ and $K(17\ell_Q+7)$ bits before aggregation. For $K\geq1$, the Beaver tree protocol transmits an additional $2(K-1)\ell_t$ bits per party and one final $\ell_t$-bit root share from $\mathsf{CSP}$.

\section{Correctness Analysis of \ProtocolName{}}
\label{app:seek-correctness}

This appendix contains the proofs of Lemma \ref{lem:presence-fragment-coverage}, \ref{lem:presence-correlation-correctness}, \ref{lem:presence-no-wrap}, \ref{lem:seldec-realization-correctness}, \ref{lem:presence-decoding-aggregation}, and Theorem \ref{thm:presence-e2e-correctness}, as stated in Section \ref{sec:seek-correctness}.

\begin{proof}[\DeepUL{\textbf{Proof of Lemma~\ref{lem:presence-fragment-coverage}}}]
Consider a message of length $L$ and a complete byte-aligned query window beginning at $u\in\{0,8,\ldots,L-L_P\}$. By Section~\ref{subsec:fragEncrypt}, fragment $j$ begins at $\mathsf{off}_j=j\cdot\mathsf{stride}$, where $\mathsf{stride}=N-L_{\max}+8$. We define the relevant fragment index as $j = \min \{\lfloor{u}/{\mathsf{stride}}\rfloor,\mathcal J-1\}$.  This definition selects the latest available fragment whose starting offset does not exceed $u$, and therefore $\mathsf{off}_j\leq u$. We first consider the case in which $j<\mathcal J-1$. By definition, the window starts before the next fragment offset, so $u<\mathsf{off}_j+\mathsf{stride}$. Since both positions are byte aligned, $u\leq\mathsf{off}_j+\mathsf{stride}-8$, we have:
\begin{align*}
u+L_P-1
&\leq u+L_{\max}-1 \,\,(\text{as} \,\,L_P\leq L_{\max})\\
&\leq \mathsf{off}_j+\mathsf{stride}-8+L_{\max}-1 = \mathsf{off}_j+N-1
\end{align*}
Thus, the window is contained in fragment $j$. If $j=\mathcal J-1$, validity of the window and $T_j=L-\mathsf{off}_j$ give:
\begin{align*}
u+L_P-1
&\leq L-1 = \mathsf{off}_j+T_j-1
\end{align*}
So the final fragment also contains it. The overlap is minimal within the class of byte-aligned
fragmentations. Suppose instead that the overlap were
$O<L_{\max}-8$, and let two consecutive fragments begin at $a$ and
$b=a+N-O$. A length-$L_{\max}$ window beginning at $b-8$ starts
before the second fragment, while its final bit satisfies the following:
\begin{align*}
(b-8)+(L_{\max}-1)
&=a+N-1+(L_{\max}-O-8)>a+N-1
\end{align*}
It therefore extends beyond the first fragment and begins before the second, so no fragment contains it.

It remains to prove exact-once retention. Let $e=u+L_P-1$ be the window's absolute end position. Because $u$ and $L_P$ are multiples of eight, $e\equiv7\pmod 8$. Hence, $e$ satisfies
the byte-alignment condition in Eq.~\ref{eq:presence-open-list}. Fragment $0$ retains all its eligible byte ends, whereas each fragment $j>0$ retains those with $s\geq L_{\max}-1$. Every fragment preceding another is full, and the definition of $\mathcal J$ ensures that every noninitial final fragment has $T_j\geq L_{\max}$. Moreover, the first retained absolute end of any fragment $j>0$ is:
\begin{align*}
\mathsf{off}_j+L_{\max}-1
&=\mathsf{off}_{j-1}+\mathsf{stride}+L_{\max}-1 \\
&=\mathsf{off}_{j-1}+N+7
\end{align*}
This is the next byte-aligned end after the preceding fragment's last eligible end $\mathsf{off}_{j-1}+N-1$. Thus, the retained
byte-end sets are disjoint and consecutive and partition
$\{7,15,\ldots,L-1\}$. Consequently, $e$ has a unique representation $e=\mathsf{off}_{j^\star}+s^\star$, for $(j^\star,s^\star)\in\mathcal I$. If $j^\star=0$, then $s^\star=e\geq L_P-1$. Otherwise,
$s^\star\geq L_{\max}-1\geq L_P-1$. Therefore,
\begin{align*}
s_0^\star
&=s^\star-(L_P-1)\geq0, 
\mathsf{off}_{j^\star}+s_0^\star=e-(L_P-1)=u
\end{align*}
Since $s^\star<T_{j^\star}$, this unique retained coefficient
represents precisely the original complete window beginning at $u$.
\end{proof}

\begin{proof}[\DeepUL{\textbf{Proof of Lemma~\ref{lem:presence-correlation-correctness}}}]
Consider a complete candidate window beginning at relative offset
$s_0$ in fragment $j$. The reversed query polynomial places query
coefficient $i$ at degree $L_P-i-1$. Therefore, the products of the
aligned message and query coefficients contribute to output
coefficient $s=s_0+L_P-1.$ The unreduced product
$u_j(X)P_{\mathsf{CI}}(X)$ has degree at most
$N+L_P-2$, where $u_j(X)=\sum_{i=0}^{T_j-1}(1-2b_i^{(j)})X^i$. Since $s\geq L_P-1$, we have
$s+N\geq N+L_P-1$, which exceeds the maximum degree of the unreduced
product. Thus, no term of degree $s+N$ exists, and reduction modulo
$X^N+1$ introduces no negacyclic contribution at coefficient $s$.
Accordingly, its plaintext value is defined as in Eq. \ref{eq:presence-correlation-semantics}.
For every constrained position with $\gamma_i=1$, the corresponding
signed product is $+1$ when the two bits agree and $-1$ when they
differ. Positions with $\gamma_i=0$ contribute zero. Let $h= \mathsf{Ham}_{\gamma} (\mathsf{Win}_{j,s_0}, \mathbf p )$. Among the $L_{\mathsf{fix}}$ constrained positions, exactly $h$ differ
and $L_{\mathsf{fix}}-h$ agree. It follows that
$z_j[s]=
(L_{\mathsf{fix}}-h)-h=
L_{\mathsf{fix}}-2h$, which is Eq.~\ref{eq:presence-correlation-semantics}. Therefore,
$z_j[s]=L_{\mathsf{fix}}$ if and only if $h=0$. For an alphabetic query byte, $\gamma_i$ removes only the bit of
numerical weight $32$. Uppercase and lowercase encodings of the same ASCII letter differ exactly in this bit, while all other bits remain equal. Hence, $h=0$ holds precisely when the candidate matches $P$ under the ASCII case-insensitive predicate.
\end{proof}

\begin{proof}[\DeepUL{\textbf{Proof of Lemma~\ref{lem:presence-no-wrap}}}]
Each coefficient of the negacyclic correlation polynomial is a signed
sum containing at most one contribution from every constrained query
position. Each nonzero contribution has magnitude one. Therefore, for
every retained coefficient, including coefficients affected by
negacyclic reduction, we have
$
|z_j[s]|
\leq
L_{\mathsf{fix}}
\leq
L_{\max}.
$
Suppose first that $s\geq L_P-1$, so that $\xi=(j,s)$ represents a
complete candidate. By
Lemma~\ref{lem:presence-correlation-correctness} and the real target
$\theta_\xi=L_{\mathsf{fix}}$, the zero-test value reduces to
$x_\xi=
z_j[s]-L_{\mathsf{fix}}=
-2
\mathsf{Ham}_{\gamma}
(
\mathsf{Win}_{j,s_0},
\mathbf p
)$.
Consequently,
$x_\xi\in[-2L_{\mathsf{fix}},0]\subseteq[-2L_{\max},0]$, and
$x_\xi=0$ if and only if the candidate is a valid case-insensitive
match.

Now suppose that $s<L_P-1$. The protocol assigns the dummy target
$\theta_\xi=L_{\max}+1$. Combining this target with
$-L_{\mathsf{fix}}\leq z_j[s]\leq L_{\mathsf{fix}}$ provides the following observation:
\[
\begin{aligned}
x_\xi
&=
z_j[s]-(L_{\max}+1)\\
&\in
[-(L_{\mathsf{fix}}+L_{\max}+1),
\,L_{\mathsf{fix}}-L_{\max}-1]\\ &\subseteq
[-(2L_{\max}+1),-1]
\end{aligned}
\]
Thus, an early coefficient is always nonzero as an integer. Because $t>4L_{\max}+2$ and $t$ is odd, the centered representative
range of $\mathbb Z_t$ contains the complete interval $[-(2L_{\max}+1),\,2L_{\max}+1].$ All correlation and zero-test values considered above therefore have
their intended centered representatives modulo $t$. In particular,
modular reduction cannot map any nonzero incomplete or nonmatching
candidate to zero. This proves both no-wrap and target-soundness.
\end{proof}

\begin{proof}[\DeepUL{\textbf{Proof of Lemma~\ref{lem:seldec-realization-correctness}}}]
Let $\xi\in\mathcal I$, $v_\xi^+=\operatorname{can}_{Q'}(v_\xi)$,
$r_\xi^+=\operatorname{can}_{Q'}(r_\xi)$, and
$\bar v_\xi^+=\operatorname{can}_{Q'}(\bar v_\xi)$. From
Eq.~\ref{eq:seldec-masked-opening}, there exists
$\lambda_\xi\in\{0,1\}$ such that:
$v_\xi^+
=
\bar v_\xi^+-r_\xi^++\lambda_\xi Q'
$
By the definitions of scaled quotient, remainder and the borrow bit, we have:
\[
t\bar v_\xi^+
=
k_{\bar v,\xi}Q'+\rho_{\bar v,\xi},\,\,\
tr_\xi^+
=
k_{r,\xi}Q'+\rho_{r,\xi}
\]
\[
\rho_{v,\xi}
=
\rho_{\bar v,\xi}-\rho_{r,\xi}+b_\xi Q'
=
\left(
\rho_{\bar v,\xi}-\rho_{r,\xi}
\bmod Q'
\right)
\]
Therefore $0\leq\rho_{v,\xi}<Q'$. Substituting the two Euclidean decompositions gives the following:
\[
tv_\xi^+
=
\left(
k_{\bar v,\xi}
-
k_{r,\xi}
-
b_\xi
+
\lambda_\xi t
\right)Q'
+
\rho_{v,\xi}
\]
Since $0\leq\rho_{v,\xi}<Q'$, the definition of $h_\xi$ shows
exactly whether nearest-integer rounding adds one to the quotient.
Moreover, replacing the canonical representative $v_\xi^+$ by the
centered representative $\operatorname{ctr}_{Q'}(v_\xi)$ changes the
scaled value by an integer multiple of $t$, which vanishes modulo $t$.
Therefore, applying the definition of
$\mathsf{DecCoeff}_{Q',t}$ (Section~\ref{subsec:bfv}) yields:
\begin{equation}
\mathsf{DecCoeff}_{Q',t}(v_\xi)
=
k_{\bar v,\xi}
-
k_{r,\xi}
-
b_\xi
+
h_\xi
\pmod t
\label{eq:presence-secure-decoding-identity}
\end{equation}
By functional correctness of the preprocessing shares, GMW comparisons, and
daBit conversions, Eq.~\ref{eq:seldec-decoded-shares} therefore satisfies:
\[
m_\xi^{\mathsf{PP}}+m_\xi^{\mathsf{CSP}}
=
\mathsf{DecCoeff}_{Q',t}(v_\xi)
\pmod t
\]
Combining this equality with Eq.~\ref{eq:seldec-rebased-output} gives $w_\xi$ as:
\[
\begin{aligned}
w_\xi
&=
m_\xi^{\mathsf{CSP}}+\mathsf{reb}_\xi=
m_\xi^{\mathsf{CSP}}+m_\xi^{\mathsf{PP}}+a_\xi\\
&=
\mathsf{DecCoeff}_{Q',t}(v_\xi)+a_\xi
\pmod t
\end{aligned}
\]
This is exactly the output prescribed by
$\mathcal F_{\mathsf{SD}}$.
\end{proof}

\begin{proof}[\DeepUL{\textbf{Proof of
Lemma~\ref{lem:presence-decoding-aggregation}}}]
If $\mathcal I=\varnothing$, all per-index claims are vacuous, the Beaver tree
is skipped, and the protocol defines the empty product as $Y=1$. Assume
henceforth that $\mathcal I\neq\varnothing$, and fix an arbitrary
$\xi=(j,s)\in\mathcal I$. By the additive BFV secret-key sharing relation and the local definitions in Section~\ref{subsec:presence-selected-decoding}:
\[
\begin{aligned}
p_\xi^{\mathsf{PP}}+p_\xi^{\mathsf{CSP}}
&=
[
 c_{0,j}^{Q'}
 +
 c_{1,j}^{Q'}
 (
 \mathsf{sk}_{R,e}^{\mathsf{PP}}
 +
 \mathsf{sk}_{R,e}^{\mathsf{CSP}}
)
]_s\\
&=
[
 c_{0,j}^{Q'}+c_{1,j}^{Q'}\mathsf{sk}_{R,e}
]_s
\pmod{Q'}
\end{aligned}
\]
Thus,
$v_\xi=p_\xi^{\mathsf{PP}}+p_\xi^{\mathsf{CSP}}\pmod{Q'}$ is precisely the
selected BFV decryption-phase coefficient. By coefficient-wise BFV
correctness at $Q'$:
$
\mathsf{DecCoeff}_{Q',t}(v_\xi)
=
z_j[s]
\pmod t.
$
By Lemma~\ref{lem:seldec-realization-correctness}, the selected-decoding
protocol returns no designated output to $\mathsf{PP}$ and returns to
$\mathsf{CSP}$:
$
w_\xi
=
\mathsf{DecCoeff}_{Q',t}(v_\xi)+a_\xi
\pmod t.
$
Substituting
$a_\xi=-\theta_\xi-\alpha_\xi^{\mathsf{PP}}\pmod t$ and
$x_\xi=z_j[s]-\theta_\xi\pmod t$ gives:
\[
\begin{aligned}
w_\xi
&=
z_j[s]-\theta_\xi-\alpha_\xi^{\mathsf{PP}}=
x_\xi-\alpha_\xi^{\mathsf{PP}}
\pmod t
\end{aligned}
\]
Therefore, the assignments
$
x_\xi^{\mathsf{PP}}
=
\alpha_\xi^{\mathsf{PP}},\,x_\xi^{\mathsf{CSP}}
=
w_\xi
$
satisfy:
$
x_\xi^{\mathsf{PP}}+x_\xi^{\mathsf{CSP}}
=
x_\xi
\pmod t.
$
For token conversion, Section~\ref{subsec:presence-aggregation} defines
$
y_\xi^{\mathsf{PP}}
=
\mu_\xi^{\mathsf{PP}}$, and
$y_\xi^{\mathsf{CSP}}
=
\mu_\xi^{\mathsf{CSP}}
+
\beta_\xi
(
 x_\xi^{\mathsf{CSP}}-\alpha_\xi^{\mathsf{CSP}}
)
\pmod t.
$
Using the token invariant from
Section~\ref{subsec:presence-zero-target}:
$\mu_\xi^{\mathsf{PP}}+\mu_\xi^{\mathsf{CSP}}
=
\beta_\xi
(
 \alpha_\xi^{\mathsf{PP}}+\alpha_\xi^{\mathsf{CSP}}
)
\pmod t
$, we obtain the following:
\[
\begin{aligned}
y_\xi^{\mathsf{PP}}+y_\xi^{\mathsf{CSP}}
&=
\mu_\xi^{\mathsf{PP}}+\mu_\xi^{\mathsf{CSP}}
+
\beta_\xi
(
 x_\xi^{\mathsf{CSP}}-\alpha_\xi^{\mathsf{CSP}}
)\\
&=
\beta_\xi
(
 \alpha_\xi^{\mathsf{PP}}+x_\xi^{\mathsf{CSP}}
)=
\beta_\xi x_\xi
\pmod t
\end{aligned}
\]
Thus, the parties hold additive shares of
$y_\xi=\beta_\xi x_\xi\pmod t$. Applying correctness of Beaver multiplication
\cite{beaver1991efficient} inductively at every internal node of the balanced
product tree yields Eq.~\ref{eq:presence-aggregate-product}:
Because $t$ is prime,
$\mathbb Z_t$ is a field and because every
$\beta_\xi\in\mathbb Z_t^*$, every multiplier is nonzero. Therefore,
$$
Y=0
\Longleftrightarrow
\prod_{\xi\in\mathcal I}x_\xi=0
\Longleftrightarrow
\exists\,\xi\in\mathcal I:x_\xi=0
$$
This proves the claim.
\end{proof}

\begin{proof}[\DeepUL{\textbf{Proof of Theorem~\ref{thm:presence-e2e-correctness}}}]
We prove the two directions of
Eq.~\ref{eq:presence-e2e-correctness}, e.g., \textit{Soundness} and \textit{Completeness} separately.

\textbf{\emph{Soundness.}}
Assume $\mathsf{present}=1$. By the output rule in
Section~\ref{subsec:presence-aggregation}, this implies $Y=0$.
Lemma~\ref{lem:presence-decoding-aggregation} then guarantees the
existence of a retained coefficient
$\xi=(j,s)\in\mathcal I$ satisfying $x_\xi=0$.
Lemma~\ref{lem:presence-no-wrap} excludes every early dummy coefficient
and every nonmatching complete candidate. Hence, $\xi$ represents a
complete candidate whose correlation value is
$z_j[s]=L_{\mathsf{fix}}$.

Let the absolute starting position of this candidate be
$u
=
\mathsf{off}_j+s-(L_P-1)$.
Because $\xi\in\mathcal I$, its absolute end position satisfies
$\mathsf{off}_j+s\equiv7\pmod8$. Since
$L_P-1\equiv7\pmod8$, it follows that $u\equiv0\pmod8$.
Moreover, completeness of the candidate gives
$0\leq u\leq L-L_P$. By
Lemma~\ref{lem:presence-correlation-correctness}, the corresponding
message window has restricted Hamming distance zero from $\mathbf p$.
Therefore, the right-hand side of
Eq.~\ref{eq:presence-e2e-correctness} holds.

\textbf{\emph{Completeness.}}
Assume the right-hand side of
Eq.~\ref{eq:presence-e2e-correctness} holds. Thus, some retained
message contains a byte-aligned window beginning at a valid offset $u$
whose restricted Hamming distance from $\mathbf p$ is zero.
Lemma~\ref{lem:presence-fragment-coverage} guarantees that this window
is contained in a fragment and that its absolute end position is
represented by exactly one retained coefficient
$\xi\in\mathcal I$.
Lemma~\ref{lem:presence-correlation-correctness} then gives
$z_j[s]=L_{\mathsf{fix}}$. Because $\xi$ is a complete candidate, its
target is $\theta_\xi=L_{\mathsf{fix}}$, and hence $x_\xi=0$.
By Lemma~\ref{lem:presence-decoding-aggregation}, this zero is preserved
through token conversion and the aggregate satisfies $Y=0$.
The output rule therefore gives $\mathsf{present}=1$.

The soundness and completeness implications establish
Eq.~\ref{eq:presence-e2e-correctness}. If
$\mathcal I=\varnothing$, the protocol defines $Y=1$ and hence
$\mathsf{present}=0$. Moreover, by
Lemma~\ref{lem:presence-fragment-coverage}, no complete byte-aligned
candidate exists in this case.
\end{proof}

\section{Detailed Security Proof for \ProtocolName{}}
\label{app:seek-security}

This appendix proves Lemmas~\ref{lem:seldec-privacy} and~\ref{lem:presence-output-privacy} and Theorem~\ref{thm:seek-semi-honest} in its stated input-independent preprocessing-hybrid model. Ideal batched OLE and ideal generators provide fresh, independent, non-reused Boolean triples, daBits, and arithmetic triples, while $\Pi_{\mathsf{SD}}^{\mathsf{pre}}$ remains a concrete part of \ProtocolName{} using the ideal Boolean-triple and daBit interfaces. 
The privacy proof begins after all the functional validation (Section \ref{subsec:functional-validation}) tests are passed. 
The public setup transcript and the corrupted party's local setup state, including its additive secret-key share, are auxiliary inputs to the simulation. The receiver $\mathsf R$ does not participate in the online protocol, and the adversary statically corrupts at most one of $\mathsf{PP}$ and $\mathsf{CSP}$.

\begin{proof}[\DeepUL{\textbf{Proof of Lemma~\ref{lem:seldec-privacy}}}]
The result follows from Lemma~\ref{lem:seldec-prep-realization}, Lemma \ref{lem:seldec-realization-correctness}, and Lemma~\ref{lem:seldec-realization-security} by sequential composition.
\end{proof}

\noindent \ding{110}
\paragraphNew{\DeepUL{Simulation Notation}}
Let $A\in\{\mathsf{PP},\mathsf{CSP}\}$. Lemma~\ref{lem:seldec-privacy} guarantees a simulator $\mathcal S_{\mathsf{SD}}^A$ for $\Pi_{\mathsf{SD}}^{\mathsf{pre}}$ followed by $\Pi_{\mathsf{SD}}^{\mathsf{on}}$. It uses the public information, the corrupted party's local input and ideal-resource state, and the prescribed $\mathcal F_{\mathsf{SD}}$ output. This output is only $(w_\xi)_{\xi\in\mathcal I}$ for $\mathsf{CSP}$. The ideal OLE and correlated-resource generators produce no backend transcript. Similarly, semi-honest Beaver-multiplication security guarantees a simulator $\mathcal S_{\mathsf{Prod}}^A$ for the product-tree transcript using the corrupted party's leaf shares, ideal arithmetic-triple state, and prescribed output~\cite{beaver1991efficient}. The prescribed output is only $Y$ for $\mathsf{PP}$.

\begin{proof}[\DeepUL{\textbf{Proof of
Lemma~\ref{lem:presence-output-privacy}}}]
For each $\xi\in\mathcal I$, token generation in the ideal
batched-OLE hybrid samples
$\alpha_\xi^{\mathsf{PP}},\alpha_\xi^{\mathsf{CSP}},
\mu_\xi^{\mathsf{CSP}}\xleftarrow{}\mathbb Z_t$ and
$\beta_\xi\xleftarrow{}\mathbb Z_t^*$ independently, and sets
$\mu_\xi^{\mathsf{PP}}
=
\beta_\xi
(
\alpha_\xi^{\mathsf{PP}}+\alpha_\xi^{\mathsf{CSP}}
)
-
\mu_\xi^{\mathsf{CSP}}
\pmod t.
$
Because $\mu_\xi^{\mathsf{CSP}}$ is uniform, thus for fixed $a,m\in\mathbb Z_t$ and $b\in\mathbb Z_t^*$, we have:
$$
\begin{aligned}
&\Pr[\mu_\xi^{\mathsf{PP}}=m
 \mid \alpha_\xi^{\mathsf{PP}}=a,\beta_\xi=b]\\
&\quad=\sum_{u\in\mathbb Z_t}
 \Pr[\alpha_\xi^{\mathsf{CSP}}=u,
 \mu_\xi^{\mathsf{CSP}}=b(a+u)-m]
 =\frac1t
\end{aligned}
$$
Hence the pair
$(\alpha_\xi^{\mathsf{PP}},\mu_\xi^{\mathsf{PP}})$ observed by
$\mathsf{PP}$ is independent of $\beta_\xi$. Since tokens are independent
across retained indices and selected decoding does not reveal or use a
$\mathsf{CSP}$-side token component, conditioning on
$V_{\mathsf{PP}}^0$ leaves
$(\beta_\xi)_{\xi\in\mathcal I}$ uniformly distributed over
$(\mathbb Z_t^*)^K$. By Eq.~\ref{eq:presence-aggregate-product}, if $\mathsf{present}=1$, correctness gives $x_\xi=0$ for some $\xi$, so
$Y=0$. If $\mathsf{present}=0$ and $K\geq1$, every $x_\xi$ is nonzero.
Because $\mathbb Z_t$ is a field and the product of independent uniform
elements of $\mathbb Z_t^*$ is uniform in $\mathbb Z_t^*$, the aggregate
$Y$ is uniform in $\mathbb Z_t^*$. For $K=0$, the protocol defines the
empty product as $Y=1$. The simulator therefore samples
\[
\widetilde Y\gets
\begin{cases}
1, & K=0\\
0, & K\geq1\ \text{and}\ \mathsf{present}=1\\
U(\mathbb Z_t^*)
& K\geq1\ \text{and}\ \mathsf{present}=0
\end{cases}
\]
For $K=0$, the simulator sets $\widetilde Y=1$ and emits the empty aggregation transcript, exactly as the real protocol. Otherwise, it invokes $\mathcal S_{\mathsf{Prod}}^{\mathsf{PP}}$ on the local leaf shares $(\mu_\xi^{\mathsf{PP}})_{\xi\in\mathcal I}$ and prescribed output
$\widetilde Y$. This simulates the complete aggregation view from
$V_{\mathsf{PP}}^0$, $K$, and the final $\mathsf{present}$ bit.
\end{proof}

\begin{proof}[\DeepUL{\textbf{Proof of
Theorem~\ref{thm:seek-semi-honest}}}] 
Let $A\in\{\mathsf{PP},\mathsf{CSP}\}$ be the corrupted party. The simulator $\mathcal S_A$ receives its ideal input, local setup state, prescribed leakage, and ideal output. For each query, the ideal preprocessing interfaces supply its OLE-token and local preprocessing shares, while $\mathcal S_{\mathsf{SD}}^A$ and $\mathcal S_{\mathsf{Prod}}^A$ simulate the composed selected-decoding and aggregation views, respectively.
The simulator maintains one persistent dummy corpus for the epoch. Whenever
the public leakage identifies a stored fragment $j$, it samples one fresh
encryption: $\widetilde c_{M,j}
\xleftarrow{}
\mathsf{Enc}_{\mathsf{pk}_{R,e}}(0)$ and applies the public preprocessing to obtain
$\widetilde c_{M,j}^{\pm1}$. If $A=\mathsf{CSP}$, it also sets
$\widetilde c_P\xleftarrow{}\mathsf{Enc}_{\mathsf{pk}_{R,e}}(0)$ and if
$A=\mathsf{PP}$, it generates the trapdoor $c_P$ from the corrupted party's query. Considering $c_P^\star \in \{\widetilde c_P,c_P\}$ as the resulting trapdoor, it then computes the correlation, relinearization and modulus switching operations, exactly as in the real protocol:
\[
\widetilde c_{\mathsf{corr},j}^{Q'}
\leftarrow
\mathsf{ModSwitch}_{Q\rightarrow Q'}
(
\mathsf{Relin}_{\mathsf{rlk}_{R,e}}
(
\mathsf{Eval}_{\mathsf{mult}}
(
\widetilde c_{M,j}^{\pm1},c_P^\star
)
))
\]
By the setup condition in
Theorem~\ref{thm:seek-semi-honest}, the corrupted party's additive
secret-key share is distributed as prescribed in Section~\ref{subsec:setup}
and is independent of the complete BFV secret key and published keys. BFV
IND-CPA security therefore permits a polynomial-length hybrid replacing the
hidden corpus ciphertexts, and for a corrupted $\mathsf{CSP}$ the hidden
trapdoor, by these dummy encryptions. Public homomorphic evaluation, modulus
switching, and local share computation are efficient post-processing and
preserve indistinguishability.

If $K=0$, the simulator emits empty selected-decoding, zero-testing, and
aggregation transcripts. It sets $Y=1$ for a corrupted $\mathsf{PP}$, and
$\mathsf{present}=0$ follows from the ideal functionality. The following two
cases therefore assume $K\geq1$.

\noindent \ding{110}
\DeepUL{\textbf{Case 1: Corrupted $\mathsf{CSP}$.}}
For each $\xi=(j,s)\in\mathcal I$, the simulator computes the corrupted
party's local contribution from the evaluated dummy ciphertext as:
\[
\widetilde p_\xi^{\mathsf{CSP}}
=
[
\widetilde c_{0,j}^{Q'}
+
\widetilde c_{1,j}^{Q'}
(
\mathsf{sk}_{R,e}^{\mathsf{CSP}}\bmod Q'
)
]_s
\pmod{Q'}
\]
By the ideal batched-OLE token distribution, $\alpha_\xi^{\mathsf{PP}}$ remains hidden and uniform in $\mathbb Z_t$. Hence $w_\xi=x_\xi-\alpha_\xi^{\mathsf{PP}}$ is uniform in $\mathbb Z_t$ for every fixed $x_\xi$, and $(w_\xi)_{\xi\in\mathcal I}\equiv U(\mathbb Z_t^K)$. The simulator samples $\widetilde{\mathbf w}\gets\mathbb Z_t^K$ and invokes $\mathcal S_{\mathsf{SD}}^{\mathsf{CSP}}$ with $(\widetilde p_\xi^{\mathsf{CSP}})_{\xi\in\mathcal I}$ as the corrupted $\mathsf{CSP}$'s local selected-decoding input and $\widetilde{\mathbf w}$ as its prescribed output.
Using its local token state from the ideal batched-OLE hybrid, it then
computes $\widetilde y_\xi^{\mathsf{CSP}}$ as:
\[
\widetilde y_\xi^{\mathsf{CSP}}
=
\mu_\xi^{\mathsf{CSP}}
+
\beta_\xi
(
\widetilde w_\xi-\alpha_\xi^{\mathsf{CSP}}
)
\pmod t
\]
It invokes $\mathcal S_{\mathsf{Prod}}^{\mathsf{CSP}}$ on
$(\widetilde y_\xi^{\mathsf{CSP}})_{\xi\in\mathcal I}$ with no prescribed
output. Thus the simulated $\mathsf{CSP}$ view uses neither the plaintext
corpus nor the keyword, its length, the search result, or any
match value.

\noindent \ding{110}
\DeepUL{\textbf{Case 2: Corrupted $\mathsf{PP}$.}}
The simulator uses the corrupted party's genuine query to construct the
private targets and the genuine trapdoor, evaluates it against the persistent
dummy corpus, and computes:
\[
\widetilde p_\xi^{\mathsf{PP}}
=
[
\widetilde c_{1,j}^{Q'}
(
\mathsf{sk}_{R,e}^{\mathsf{PP}}\bmod Q'
)
]_s
\pmod{Q'}
\]
It sets $a_\xi=-\theta_\xi-\alpha_\xi^{\mathsf{PP}}\pmod t$ as in the real protocol and invokes $\mathcal S_{\mathsf{SD}}^{\mathsf{PP}}$ with $((\widetilde p_\xi^{\mathsf{PP}},a_\xi))_{\xi\in\mathcal I}$ as the corrupted $\mathsf{PP}$'s local selected-decoding inputs and $\mathsf{PP}$ receives no designated output.
The local zero-test leaf share is
$y_\xi^{\mathsf{PP}}=\mu_\xi^{\mathsf{PP}}$. Using $K$ and the ideal bit
$\mathsf{present}$, the simulator samples $\widetilde Y$ according to
Lemma~\ref{lem:presence-output-privacy} and invokes
$\mathcal S_{\mathsf{Prod}}^{\mathsf{PP}}$ on
$(\mu_\xi^{\mathsf{PP}})_{\xi\in\mathcal I}$ with prescribed output
$\widetilde Y$. The simulated party reconstructs $\widetilde Y$ and outputs
$\mathbf 1[\widetilde Y=0]=\mathsf{present}$. Here, no individual hit, count, location, or match identity is selected or revealed by the simulator.

To justify the construction formally, let $H_0^A$ denote the real execution in the stated preprocessing-hybrid model. In $H_1^A$, the execution of $\Pi_{\mathsf{SD}}^{\mathsf{pre}}$ followed by $\Pi_{\mathsf{SD}}^{\mathsf{on}}$ is replaced with $\mathcal F_{\mathsf{SD}}$ and its simulator $\mathcal S_{\mathsf{SD}}^A$, as justified by Lemma~\ref{lem:seldec-privacy}. For a corrupted $\mathsf{CSP}$, the prescribed output may be sampled as $\widetilde{\mathbf w}\gets\mathbb Z_t^K$, since $w_\xi=x_\xi-\alpha_\xi^{\mathsf{PP}}$ is exactly uniform under the ideal batched-OLE distribution and $\mathsf{PP}$ receives no designated output. In $H_2^A$, the Beaver-tree transcript is replaced by $\mathcal S_{\mathsf{Prod}}^A$, using Lemma~\ref{lem:presence-output-privacy} to sample the output delivered to a corrupted $\mathsf{PP}$. In $H_3^A$, every BFV ciphertext whose plaintext is hidden from $A$ is replaced by the corresponding dummy encryption, recompute the public evaluations and local contributions, and regenerate the simulated views using the same prescribed outputs. Thus,
\[
\mathsf{View}_A^{\mathsf{real}}
\equiv H_0^A
\approx_c H_1^A
\approx_c H_2^A
\approx_c H_3^A
\equiv\mathsf{View}_{\mathcal S_A}^{\mathsf{ideal}}
\]
The transitions follow from Lemma~\ref{lem:seldec-privacy} and the uniform-mask argument, Beaver-product security together with Lemma~\ref{lem:presence-output-privacy}, and BFV IND-CPA security under efficient post-processing.

For a polynomially bounded, sequentially adaptive sequence of authorized queries over the same fixed retained corpus within same epoch, the simulator retains the same dummy corpus and simulates each query as above using fresh, disjoint, one-time preprocessing. For a corrupted $\mathsf{CSP}$, hidden trapdoors are replaced as they are received. Sequential composition applies because every query consumes fresh independent one-time preprocessing. The only cumulative output is the ideal presence-bit sequence and its logical implications, exactly as stated in Theorem~\ref{thm:seek-semi-honest}.
\end{proof}

\section{BFV Evaluation Parameters}
\label{app:bfv-params}

Table~\ref{tab:appendix-bfv-parameters} shows the BFV parameter sets used in our evaluation. For each of the 20 $(N,L_{\max})$ profiles, we ran three independent trials, testing a total of 540 adversarial correlation instances. The test cases include both pattern-length extremes $L_P\in\{8,L_{\max}\}$, dense all $+1$ and all $-1$ inputs, maximum Hamming distance, earliest complete coefficients, fragment boundaries, and cases with the minimum final fragment. Each instance tests signed-fragment preprocessing, ct--ct multiplication and relinearization, every intermediate level of sequential modulus switching, and the direct $Q\rightarrow Q'$ path. In $6615$ stage-level observations, all $N$ coefficients of every decrypted polynomial matched an independent negacyclic-polynomial oracle, every selected coefficient decoded as expected, and every observed noise budget stayed positive. Table~\ref{tab:bfv-noise-budget} gives the minimum remaining BFV invariant noise budget at $Q'$. Both the sequential and direct modulus-switching paths yielded the same minimum for every profile, with a globally observed minimum of 8 bits. These results provide implementation-level evidence for the tested stress cases.

\begin{table}[!h]
\centering
\caption{BFV parameters used in the evaluation.}
\label{tab:appendix-bfv-parameters}
\scriptsize
\setlength{\tabcolsep}{2.5pt}
\renewcommand{\arraystretch}{1.12}
\begin{tabular}{@{}r r c c l@{}}
\toprule
\multicolumn{1}{c}{$N$}
&
\multicolumn{1}{c}{$t$}
&
\multicolumn{1}{c}{\shortstack{$Q$\\(bits)}}
&
\multicolumn{1}{c}{\shortstack{$\{q_\ell\}$\\(bits)}}
&
\multicolumn{1}{c}{\shortstack{$Q'$\\(bits)}}
\\
\midrule

$4096$
& $1{,}032{,}193$
& $72$
& $36+36$
& $36$ \\

$8192$
& $1{,}032{,}193$
& $174$
& $43+43+44+44$
& $43$ \\

$16384$
& $786{,}433$
& $389$
& $3{\times}48+5{\times}49$
& $48$ \\

$32768$
& $786{,}433$
& $825$
& $15{\times}55$
& $55$ \\

\bottomrule
\end{tabular}
\end{table}

\begin{table}[!h]
\centering
\caption{Minimum observed remaining BFV invariant noise budget
at $Q'$ (bits).}
\label{tab:bfv-noise-budget}
\small
\setlength{\tabcolsep}{4pt}
\begin{tabular}{rccccc}
\toprule
$N\backslash L_{\max}$ & 128 & 256 & 512 & 1024 & 2048 \\
\midrule
4096  & 9  & 9  & 8  & 8  & 9  \\
8192  & 15 & 15 & 15 & 15 & 15 \\
16384 & 20 & 20 & 20 & 20 & 20 \\
32768 & 26 & 26 & 26 & 26 & 26 \\
\bottomrule
\end{tabular}
\end{table}

\end{document}